\theoremstyle{acmdefinition}
\newcommand{\LSH}{\mathcal{H}}
\newcommand{\LSM}{\mathcal{M}}
\newcommand{\E}{\mathrm{E}}
\newcommand{\Var}{\mathrm{Var}}
\DeclareMathOperator{\poly}{poly}
\DeclareMathOperator{\sign}{sign}
\DeclareMathOperator*{\argmin}{arg\,min}
\newcommand{\q}{\mathbf{q}}
\newcommand{\x}{\mathbf{x}}
\newcommand{\y}{\mathbf{y}}
\newcommand{\z}{\mathbf{z}}
\newcommand{\vect}[1]{\mathbf{#1}}
\newcommand{\1}{\mathds{1}}
\newcommand{\real}{\mathbb{R}}
\newcommand{\norm}[1]{\lVert #1 \rVert} 
\newcommand{\cube}[1]{\{0,1\}^{#1}}
\newcommand{\K}{\mathbb{K}}
\newcommand{\BO}[1]{O({#1})}
\newcommand{\sen}[1]{\K_{\LSH}(#1)} 
\newcommand{\para}[1]{\textsf{\textbf{#1}}}
\renewcommand{\arraystretch}{1.5}
\begin{document}

\title{Set Similarity Search Beyond MinHash}

\titlenote{The research leading to these results has received
funding from the European Research Council under the European Union’s 7th
Framework Programme (FP7/2007-2013) / ERC grant agreement no.~614331.}

\author{Tobias Christiani}
\affiliation{%
  \institution{IT University of Copenhagen}
  \city{Copenhagen}
  \country{Denmark}
}
\email{tobc@itu.dk}

\author{Rasmus Pagh}
\affiliation{%
  \institution{IT University of Copenhagen}
  \city{Copenhagen}
  \country{Denmark}
}
\email{pagh@itu.dk}

\begin{abstract}
We consider the problem of approximate set similarity search under Braun-Blanquet similarity $B(\x, \y) = |\x \cap \y| / \max(|\x|, |\y|)$.
The $(b_1, b_2)$-approximate Braun-Blanquet similarity search problem is to preprocess a collection of sets $P$ such that, given a query set $\q$, 
if there exists $\x \in P$ with $B(\q, \x) \geq b_1$, then we can efficiently return $\x' \in P$ with $B(\q, \x') > b_2$.

We present a simple data structure that solves this problem with space usage $O(n^{1+\rho}\log n + \sum_{\x \in P}|\x|)$ and query time $O(|\q|n^{\rho} \log n)$ where $n = |P|$ and $\rho = \log(1/b_1)/\log(1/b_2)$.
Making use of existing lower bounds for locality-sensitive hashing by O'Donnell et al. (TOCT 2014) we show that this value of $\rho$ is tight across the parameter space, i.e., for every choice of constants $0 < b_2 < b_1 < 1$. 

In the case where all sets have the same size our solution strictly improves upon the value of $\rho$ that can be obtained through the use of state-of-the-art data-independent techniques in the Indyk-Motwani locality-sensitive hashing framework (STOC 1998) such as Broder's MinHash (CCS 1997) for Jaccard similarity and Andoni et al.'s cross-polytope LSH (NIPS 2015) for cosine similarity.
Surprisingly, even though our solution is data-independent, for a large part of the parameter space we outperform the currently best data-\emph{dependent} method by Andoni and Razenshteyn (STOC 2015).
\end{abstract}

\maketitle
\thispagestyle{fancy}
\section{Introduction}
In this paper we consider the approximate set similarity problem or, 
equivalently, the problem of approximate Hamming near neighbor search in sparse vectors. 
Data that can be represented as sparse vectors is ubiquitous --- a typical example is the representation of text documents as \emph{term vectors}, 
where non-zero vector entries correspond to occurrences of words (or shingles).
In order to perform identification of near-identical text documents in web-scale collections, 
Broder et al.~\cite{Bro97b,Bro97a} designed and implemented \emph{MinHash} (a.k.a.~min-wise hashing), 
now understood as a locality-sensitive hash function~\cite{har-peled2012}.
This allowed approximate answers to similarity queries to be computed much faster than by other methods, 
and in particular made it possible to cluster the web pages of the AltaVista search engine (for the purpose of eliminating near-duplicate search results).
Almost two decades after it was first described, 
MinHash remains one of the most widely used locality-sensitive hashing methods as witnessed by thousands of citations of~\cite{Bro97b,Bro97a}. 

A \emph{similarity measure} maps a pair of vectors to a similarity score in $[0;1]$.
It will often be convenient to interpret a vector $\x\in\{0,1\}^d$ as the set $\{ i \; | \; \x_i=1\}$.
With this convention the \emph{Jaccard similarity} of two vectors can be expressed as $J(\x,\y) = |\x \cap \y|/|\x \cup \y|$.
In \emph{approximate similarity search} we are interested the problem of searching a data set $P\subseteq \{0,1\}^d$ for a vector 
of similarity at least~$j_1$ with a query vector $\q \in \{0,1\}^d$, but allow the search algorithm to return a vector of similarity  $j_2 < j_1$. 
To simplify the exposition we will assume throughout the introduction that all vectors are $t$-sparse, i.e., have the same Hamming weight $t$.

Recent theoretical advances in data structures for approximate \emph{near neighbor} search in Hamming space~\cite{andoni2015optimal} make it possible to beat the asymptotic performance of MinHash-based Jaccard similarity search (using the LSH framework of~\cite{har-peled2012}) in cases where the similarity threshold $j_2$ is not too small.
However, numerical computations suggest that MinHash is always better when $j_2<1/45$. 

In this paper we address the problem: Can similarity search using MinHash be improved \emph{in general}?
We give an affirmative answer in the case where all sets have the same size $t$ by introducing \mbox{\sc Chosen Path}: a simple data-independent search method that strictly improves MinHash, 
and is always better than the data-dependent method of~\cite{andoni2015optimal} when $j_2<1/9$.
Similar to data-independent locality-sensitive filtering (LSF) methods~\cite{becker2016, laarhoven2015, christiani2017framework} our method works 
by mapping each data (or query) vector to a set of keys that must be stored (or looked up).
The name {\sc Chosen Path} stems from the way the mapping is constructed: As paths in a layered random graph where the vertices at each layer is identified with the set $\{1,\dots,d\}$ of dimensions, 
and where a vector $\x$ is only allowed to choose paths that stick to non-zero components $\x_i$.
This is illustrated in Figure~\ref{fig:branchingprocess}.
\begin{figure}[b]   
	\centering
	\includegraphics[width=0.4\textwidth]{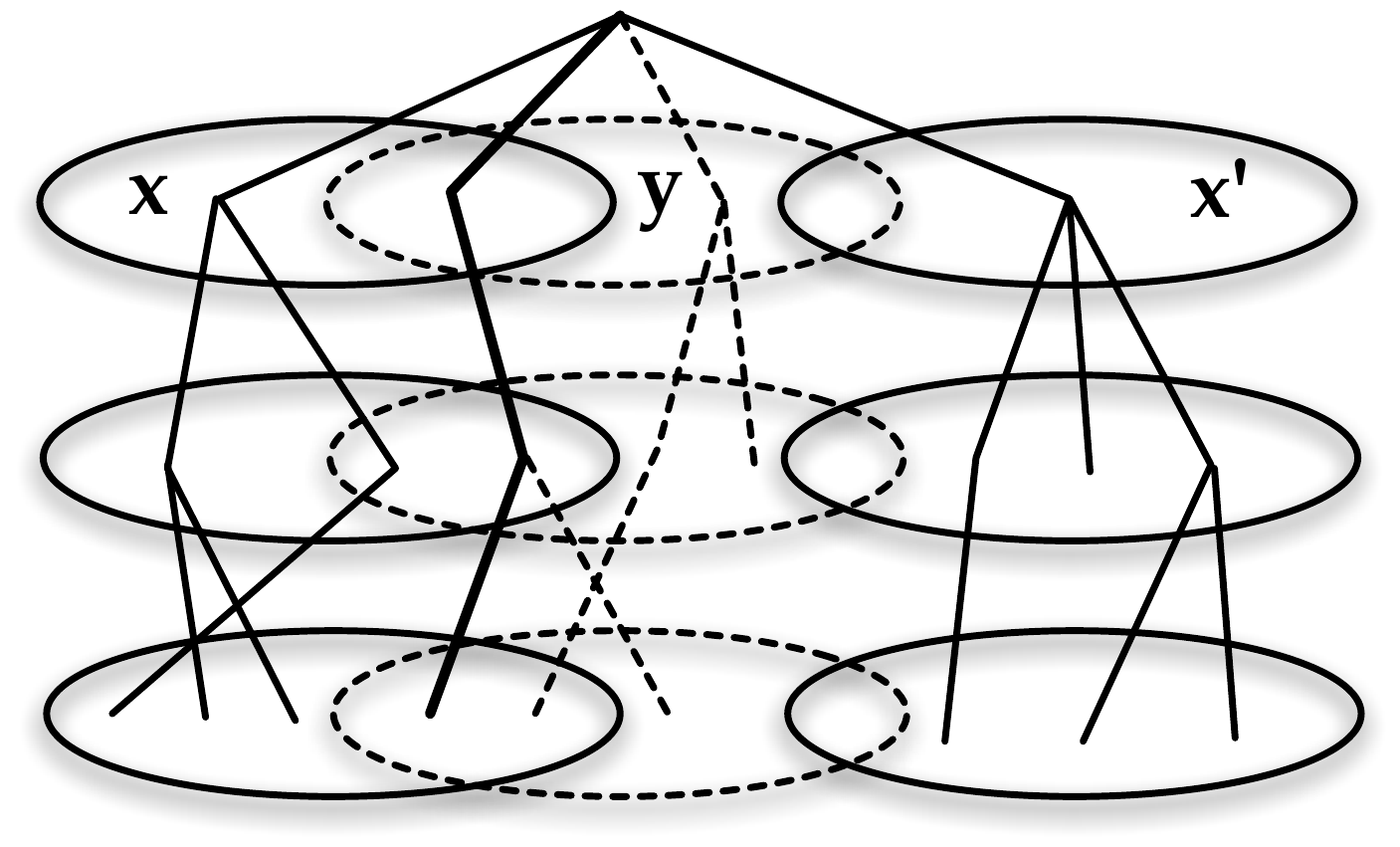}
	\caption{{\sc Chosen Path} uses a branching process to associate each vector $\x\in \cube{d}$ with a set $M_k(\x) \subseteq \{1,\dots,d\}^k$ of paths of lengtk~$k$ (in the picture $k=3$). 
	The paths associated with $\x$ are limited to indices in the set $\{ i \; | \; \x_i=1\}$, represented by an ellipsoid at each level in the illustration. 
	In the example the set sizes are: $|M_3(\x)| = 4$ and $|M_3(\y)| = |M_3(\x')| = 3$. 
	Parameters are chosen such that a query~$\y$ that is similar to $\x \in P$ is likely to have a common path in $\x \cap \y$ (shown as a bold line), 
	whereas it shares few paths in expectation with vectors such as $\x'$ that are not similar.}
	\label{fig:branchingprocess}
\end{figure}
\subsection{Related Work}
High-dimensional approximate similarity search methods can be characterized in terms of their \emph{$\rho$-value} which is the exponent for which queries can be answered in time $\BO{dn^\rho}$, 
where $n$ is the size of the set $P$ and $d$ denotes the dimensionality of the space.
Here we focus on the ``balanced'' case where we aim for space $\BO{n^{1+\rho} + dn}$, 
but note that there now exist techniques for obtaining other trade-offs between query time and space overhead~\cite{andoni2017optimal,christiani2017framework}.

\para{Locality-Sensitive Hashing Methods.}
%
We begin by describing results for Hamming space, which is a special case of similarity search on the unit sphere (many of the results cited apply to the more general case).
In Hamming space the focus has traditionally been on the $\rho$-value that can be obtained for solutions to the \emph{$(r, cr)$-approximate near neighbor} problem: 
Preprocess a set of points $P \subseteq \cube{d}$ such that, given a query point $\q$, if there exists $\x \in P$ with $\norm{\x - \q}_{1} \leq r$, then return $\x' \in P$ with $\norm{\x' - \q}_1 < cr$.
In the literature this problem is often presented as the $c$-approximate near neighbor problem where bounds for the $\rho$-value are stated in terms of $c$ and, 
in the case of upper bounds, hold for every choice of $r$, while lower bounds may only hold for specific choices of $r$. 

O'Donnell et al.~\cite{odonnell2014optimal} have shown that the value $\rho=1/c$ for $c$-approximate near neighbor search in Hamming space, 
obtained in the seminal work of Indyk and Motwani~\cite{indyk1998}, is the best possible in terms of $c$ for schemes based on Locality-Sensitive Hashing (LSH). 
However, the lower bound only applies when the distances of interest, $r$ and~$cr$, are relatively small compared to $d$, and better upper bounds are known for large distances.
Notably, other LSH schemes for angular distance on the unit sphere such as cross-polytope LSH~\cite{andoni2015practical} give lower $\rho$-values for large distances.
Extensions of the lower bound of~\cite{odonnell2014optimal} to cover more of the parameter space were recently given in~\cite{andoni2017optimal,christiani2017framework}.
Until recently the best $\rho$-value known in terms of $c$ was $1/c$, 
but in a sequence of papers Andoni et al.~\cite{andoni2014beyond,andoni2015optimal} have shown how to use \emph{data-dependent} LSH techniques to achieve $\rho = 1/(2c-1) + o_n(1)$, 
bypassing the lower bound framework of~\cite{odonnell2014optimal} which assumes the LSH to be independent of data.
\medskip

\para{Set Similarity Search.}
There exists a large number of different measures of set similarity with various applications for which it would be desirable to have efficient approximate similarity search algorithms~\cite{choi2010survey}.  
Given a measure of similarity $S$ assume that we have access to a family $\LSH$ of locality-sensitive hash functions (defined in Section \ref{sec:preliminaries})
such that for $h \sim \LSH$ it holds for every pair of sets $\x, \y$ that $\Pr[h(\x) = h(\y)] = S(\x, \y)$. 
Then we can use the the LSH framework to construct a solution for the $(s_1, s_2)$-approximate similarity search problem under $S$ with exponent $\rho = \log(1/s_1)/\log(1/s_2)$.
With respect to the existence of such families Charikar~\cite{charikar2002} showed that if the similarity measure $S$ admits an LSH with the above properties, then $1-S$ must be a metric.
Recently, Chierichetti and Kumar \cite{chierichetti2015} showed that,
given a similarity $S$ that admits an LSH with the above properties, 
the transformed similarity $f(S)$ will continue to admit an LSH if $f(\cdot)$ is a probability generating function.
The existence of an LSH that admits a similarity measure $S$ will therefore give rise to the existence of solutions to the approximate similarity search problem for the much larger class of similarities $f(S)$.
However, this still leaves open the problem of finding efficient explicit constructions, and as it turns out,
the LSH property $\Pr[h(\x) = h(\y)] = S(\x, \y)$, while intuitively appealing and useful for similarity estimation,
does not necessarily imply that the LSH is optimal for solving the approximate search problem for the measure $S$. 
The problem of finding tight upper and lower bounds on the $\rho$-value that can be obtained through the LSH framework for data-independent $(s_1, s_2)$-approximate similarity search 
across the entire parameter space $(s_1, s_2)$ remains open for two of the most common measures of set similarity: 
Jaccard similarity $J(\x, \y) = |\x \cap \y| / |\x \cup \y|$ and cosine similarity $C(\x, \y) = |\x \cap \y|/\sqrt{|\x||\y|}$.

A random function from the MinHash family $\LSH_{\text{minhash}}$ hashes a set $\x \subset \{1, \dots, d \}$
to the first element of $\x$ in a random permutation of the set $\{1, \dots, d \}$.
For $h \sim \LSH_{\text{minhash}}$ we have that $\Pr[h(\x) = h(\y)] = J(\x, \y)$, yielding an LSH solution to the approximate Jaccard similarity search problem.
For cosine similarity the SimHash family $\LSH_{\text{simhash}}$, introduced by Charikar~\cite{charikar2002}, 
works by sampling a random hyperplane in $\real^{d}$ that passes through the origin and hashing $\x$ according to what side of the hyperplane it lies on.
For $h \sim \LSH_{\text{simhash}}$ we have that $\Pr[h(\x) = h(\y)] = 1 - \arccos(C(\x, \y))/\pi$, which can be used to derive a solution for cosine similarity, 
although not the clean solution that we could have hoped for in the style of MinHash for Jaccard similarity.
There exists a number of different data-independent LSH approaches~\cite{terasawa2007spherical, andoni2014beyond, andoni2015practical} that improve upon the $\rho$-value of SimHash.
Perhaps surprisingly, it turns out that these approaches yield lower $\rho$-values for the $(j_1, j_2)$-approximate Jaccard similarity search problem compared to MinHash for certain combinations of $(j_1, j_2)$.  
Unfortunately, while asymptotically superior these techniques suffer from a non-trivial $o_{n}(1)$-term in the exponent that only decreases very slowly with $n$.
In comparison, both MinHash and SimHash are simple to describe and have closed expressions for their $\rho$-values.
Furthermore, MinHash and SimHash both have the advantage of being efficient in the sense that a hash function can be represented using space $O(d)$ and the time to compute $h(\x)$ is  $O(|\x|)$. 

%
\begin{table*}[ht]
\caption{Overview of $\rho$-values for similarity search with Hamming vectors of equal weight $t$.}
\label{tab:comparison}
\renewcommand\arraystretch{2}
\begin{center}
\begin{tabular}{|l|c|c|c|}
\hline
{\bf\diagbox{Ref.}{Measure}} & 
\thead{Hamming\\ $r_1 < r_2$} &
\thead{Braun-Blanquet\\$b_1>b_2$} &  
\thead{Jaccard\\$j_1>j_2$} \\ 
\hline
\hline
Bit-sampling LSH~\cite{indyk1998} & 
$r_1/r_2$ &
$\tfrac{1-b_1}{1-b_2}$ &
$\tfrac{1-j_1}{1+j_1}/\tfrac{1-j_2}{1+j_2}$ \\ 
\hline
Minhash LSH~\cite{Bro97b} & 
$\log \tfrac{1 - r_{1}}{1 + r_{1}} / \log \tfrac{1 - r_{2}}{1 + r_{2}}$ &
$\log\tfrac{b_1}{2-b_1} / \log\tfrac{b_2}{2-b_2}$ &
$\log(j_1)/\log(j_2)$ \\
\hline
Angular LSH~\cite{andoni2015practical} & 
$\frac{r_1}{r_2}\frac{1 - r_{2}/2}{1 - r_{1}/2}$ &
$\tfrac{1-b_1}{1+b_1}/\tfrac{1-b_2}{1+b_2}$ &
$\tfrac{1-j_1}{1+3j_1}/\tfrac{1-j_2}{1+3j_2}$ \\
\hline
Data-dep.~LSH~\cite{andoni2015optimal} & 
$\frac{r_1}{r_2}\frac{1}{2 - r_{1}/r_{2}}$ &
$\tfrac{1-b_1}{1 + b_1 - 2 b_2}$ & 
$\frac{(1-j_{1})(1+j_{2})}{1 - j_{1}j_{2} + 3(j_{1}-j_{2})}$ \\
\hline
{\bf Theorem~\ref{thm:upper}} & 
$\log(1-r_{1}) / \log(1-r_{2})$ &
$\log(b_1)/\log(b_2)$ &
$\log\tfrac{2j_1}{1+j_1}/\log\tfrac{2j_2}{1+j_2}$ \\
\hline
\end{tabular}
\end{center}
\begin{minipage}[c]{0.72\textwidth}
\footnotetext{
{\bf Notes:} While most results in the literature are stated for a single measure, 
the fixed weight restriction gives a 1-1 correspondence that makes it possible to express the results in terms of other similarity measures.
Hamming distances are normalized by a factor $2t$ to lie in $[0;1]$. 
Lower order terms of $\rho$-values are suppressed, and for bit-sampling LSH we assume that the Hamming distances 
are small relative to the dimensionality of the space, i.e., that $2r_{1}t/d = o(1)$.} 
\end{minipage}
\end{table*}

In Table \ref{tab:comparison} we show how the upper bounds for similarity search under different measures of set similarity relate to each other in the case where all sets are $t$-sparse.
In addition to Hamming distance and Jaccard similarity, we consider Braun-Blanquet similarity~\cite{braunblanquet1932} defined as
\begin{equation}\label{eq:B}
B(\x, \y) = |\x \cap \y| / \max(|\x|, |\y|), 
\end{equation}
which for $t$-sparse vectors is identical to cosine similarity.
When the query and the sets in $P$ can have different sizes the picture becomes muddled, 
and the question of which of the known algorithms is best for each measure $S$ is complicated. 
In Section \ref{sec:equivalence} we treat the problem of different set sizes and provide a brief discussion for Jaccard similarity, 
specifically in relation to our upper bound for Braun-Blanquet similarity.

Similarity search under set similarity and the batched version often referred to as \emph{set similarity join}~\cite{arasu2006efficient,bayardo2007scaling}
have also been studied extensively in the information retrieval and database literature, but mostly without providing theoretical guarantees on performance.
Recently the notion of containment search, where the similarity measure is the (unnormalized) intersection size, was studied in the LSH framework~\cite{shrivastava2015asymmetric}.
This is a special case of \emph{maximum inner product} search~\cite{shrivastava2015asymmetric,ahle2015}.
However, these techniques do not give improvements in our setting.

\medskip

\para{Similarity Estimation.}
Finally, we mention that another application of MinHash~\cite{Bro97b,Bro97a} is the (easier) problem of \emph{similarity estimation}, 
where the task is to condense each vector $\x$ into a short signature $s(\x)$ in such a way that the similarity $J(\x,\y)$ can be estimated from $s(\x)$ and~$s(\y)$.
A related similarity estimation technique was independently discovered by Cohen~\cite{cohen1997size}.
Thorup~\cite{thorup2013bottom} has shown how to perform similarity estimation using just a small amount of randomness in the definition of the function $s(\cdot)$.
In another direction, Mitzenmacher et al.~\cite{mitzenmacher2014} showed that it is possible to improve the performance of MinHash for similarity estimation when the Jaccard similarity is close to~1,
but for smaller similarities it is known that succinct encodings of MinHash such as the one in~\cite{li2011theory} comes within a constant factor of the optimal space for storing~$s(\x)$~\cite{pagh2014min}.
Curiously, our improvement to MinHash in the context of similarity \emph{search} comes when the similarity is neither too large nor too small.
Our techniques do not seem to yield any improvement for the similarity \emph{estimation} problem.

\subsection{Contribution}
We show the following upper bound for approximate similarity search under Braun-Blanquet similarity: 
\begin{theorem}\label{thm:upper}
	For every choice of constants \mbox{$0 < b_{2} < b_{1} < 1$} we can solve the $(b_{1}, b_{2})$-approximate similarity search problem under Braun-Blanquet similarity 
	with query time $O(|\q|n^{\rho} \log n)$ and space usage $O(n^{1+\rho}\log n + \sum_{\x \in P}|\x|)$ where $\rho = \log(1/b_1) / \log(1/b_2)$.
\end{theorem}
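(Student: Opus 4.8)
The plan is to realize the $\rho$-value of the theorem through the \textsc{Chosen Path} branching construction sketched in Figure~\ref{fig:branchingprocess}, analyzed as a locality-sensitive filtering scheme whose one-level collision probability is exactly the Braun--Blanquet similarity. Fix a depth $k$ and attach to the data structure a shared source of randomness: to every prefix $p\in\{1,\dots,d\}^{<k}$ and coordinate $i\in\{1,\dots,d\}$ associate an independent uniform $u_{p,i}\in[0,1]$. Starting from the empty root, I extend an active prefix $p$ of a set $\x$ to the child $p\cdot i$ exactly when $i\in\x$ and $u_{p,i}<1/|\x|$; iterating to depth $k$ yields $M_k(\x)\subseteq\{1,\dots,d\}^k$. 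First I would record two one-level facts. Each active prefix of $\x$ has expected branching $|\x|\cdot(1/|\x|)=1$, so $\E[|M_k(\x)|]=1$ for every $k$; and for a shared coordinate $i\in\x\cap\y$ both sets extend $p$ by $i$ iff $u_{p,i}<\min(1/|\x|,1/|\y|)=1/\max(|\x|,|\y|)$, so the expected number of common children of a common prefix is $|\x\cap\y|/\max(|\x|,|\y|)=B(\x,\y)$. Here the $\max$ in Braun--Blanquet arises precisely from the $\min$ of the two inclusion thresholds under the shared randomness. Propagating through the tree by linearity gives the central identity $\E[|M_k(\x)\cap M_k(\y)|]=B(\x,\y)^k$.

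The data structure then mirrors the LSH/LSF framework: build $L$ independent repetitions, and in each repetition store every $\x\in P$ under all keys of $M_k(\x)$ (each path hashed to a single word). A query $\q$ generates $M_k(\q)$ per repetition, retrieves every point sharing a bucket, and returns the first candidate whose Braun--Blanquet similarity, computed exactly, exceeds $b_2$. I would set $k=\lceil \log n/\log(1/b_2)\rceil$ so that $b_2^{k}\le 1/n$, and $L=\Theta(n^{\rho}\log n)$. With these choices the expected number of colliding far points (with $B(\q,\x)\le b_2$) is at most $n\cdot b_2^{k}\le 1$ per repetition, while for a near point $\x$ with $B(\q,\x)\ge b_1$ the identity above gives $\E[|M_k(\q)\cap M_k(\x)|]\ge b_1^{k}=n^{-\rho}$.

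The main obstacle is that a lower bound on the \emph{expected} number of common paths does not by itself lower bound the \emph{probability} of a collision, and since that expectation is as small as $n^{-\rho}$ I cannot let it be carried by a rare large value. The key step is to control the second moment of $Z:=|M_k(\q)\cap M_k(\x)|$. Observing that the common paths are exactly the surviving nodes of an auxiliary branching process with mean offspring $m=B(\q,\x)\in(0,1)$ and bounded offspring variance, a subcritical branching computation yields $\Var(Z)=O(m^{k})$ and hence $\E[Z^2]=O(m^{k})$; Paley--Zygmund then gives $\Pr[Z\ge 1]\ge \E[Z]^2/\E[Z^2]=\Omega(m^{k})=\Omega(n^{-\rho})$, with a hidden constant depending only on $b_1$ (the worst case being $m=b_1$, where $1-m$ is a positive constant). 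I would also truncate the self-branching so that $|M_k(\x)|$ cannot explode, using $\E[|M_k(\x)|]=1$ and a tail bound to ensure the truncation is felt only with negligible probability. Over the $L=\Theta(n^{\rho}\log n)$ repetitions the near point is then missed with probability $(1-\Omega(n^{-\rho}))^{L}=n^{-\Omega(1)}$, giving correctness with high probability.

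It remains to account for resources. By $\E[|M_k(\x)|]=1$ and linearity, the expected number of stored (point, key) pairs is $O(nL)=O(n^{1+\rho}\log n)$, which together with the $\sum_{\x\in P}|\x|$ words for the sets themselves gives the claimed space. For the query, the expected number of retrieved candidates is the near point plus an expected $O(nLb_2^{k})=O(n^{\rho}\log n)$ false positives, each verified in time $O(|\q|)$, so candidate verification costs $O(|\q|n^{\rho}\log n)$ and is the dominant term; generating the paths contributes only $O(\log n)$ per repetition with efficient offspring sampling, after reading the support of $\q$ once. Finally, since the guarantees hold with high probability over the shared randomness, the only remaining care is the standard one: union-bound so that the success event and the resource bounds hold simultaneously.
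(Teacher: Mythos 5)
Your core argument is correct, and it takes a genuinely different route to the same $\rho$. The paper runs a \emph{single supercritical} process: the survival threshold is $\x_j/(b_1|\x|)$ (branching rate $1/b_1$ per level) started from $w=2k$ roots, so that for near pairs $\E[|M_k(\x)\cap M_k(\y)|]=(B/b_1)^k w\geq w$; an inductive second-moment bound $\E[|X_k|^2]\leq wk(B/b_1)^{2k}+\E[|X_k|]^2$ plus one-sided Chebyshev then gives \emph{constant} collision probability from one map, and a generic LSM-framework lemma finishes. You instead run a \emph{critical} process (threshold $1/|\x|$, mean offspring exactly $1$, a single root) and push all amplification into $L$ independent repetitions, lower-bounding the per-repetition success probability by Paley--Zygmund as $\Omega(b_1^k)=\Omega(n^{-\rho})$. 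These are two parameterizations of the same second-moment phenomenon; a pleasant side effect of your normalization is that it avoids the degenerate case $|\x|<1/b_1$ (where the paper's threshold exceeds $1$ and paths always activate), which the paper must treat separately. One small imprecision: $\Var(Z)=O(m^k)$ only holds for $m$ bounded away from $1$; in general $\Var(Z)\leq\sigma^2 m^{k-1}(1-m^k)/(1-m)=O\left(m^k\min(k,(1-m)^{-1})\right)$, but since $m^k/\min(k,(1-m)^{-1})=\Omega(b_1^k)$ throughout $m\in[b_1,1]$, your conclusion $\Pr[Z\geq 1]=\Omega(n^{-\rho})$ survives.

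The one genuine gap is in the query-time accounting. The randomness $u_{p,i}$ must be \emph{shared} between preprocessing and queries, hence realized as hash values; naively, expanding one active node then costs $\Theta(|\q|)$ hash evaluations, and with expected $k=\Theta(\log n)$ active nodes per repetition and your $L=\Theta(n^{\rho}\log n)$, path generation costs $O(|\q|n^{\rho}\log^2 n)$ --- a $\log n$ factor over the theorem. Your claimed ``$O(\log n)$ per repetition with efficient offspring sampling'' is precisely the nontrivial step: skip-based sampling of $\{i\in\q: u_{p,i}<1/|\q|\}$ in output-sensitive time is easy with fresh randomness but must remain consistent with the values the data points saw at insertion time; the paper explicitly relegates the analogous improvement to a footnote (``by choosing the hash functions $h_i$ carefully'') without proving it, so you should not assert it for free. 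Fortunately there is a trivial repair: the paper's success criterion is probability $1/2$ per query (amplified externally), so $L=\Theta(n^{\rho})$ repetitions suffice against the per-repetition success $\Omega(n^{-\rho})$; then generation costs $O(|\q|kL)=O(|\q|n^{\rho}\log n)$, expected far-point candidates are $O(L)=O(n^{\rho})$, and space is $O(n^{1+\rho})$, matching the theorem with the naive per-node hashing. Relatedly, your truncation of $|M_k(\x)|$ needs care: by a martingale maximal inequality a cap $T$ is violated with probability $\Theta(1/T)$, which is only negligible against the per-repetition success $n^{-\rho}$ if $T=\omega(n^{\rho})$; it is cleaner to drop truncation and, as the paper does, control all resources in expectation and finish with Markov's inequality.
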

In the case where the sets are $t$-sparse our Theorem \ref{thm:upper} gives the first strict improvement on the $\rho$-value 
for approximate Jaccard similarity search compared to the data-independent LSH approaches of MinHash and Angular LSH.
Figure~\ref{fig:two-approx} shows an example of the improvement for a slice of the parameter space.
The improvement is based on a new locality-sensitive mapping that considers a specific random collection of length-$k$ paths on the vertex set $\{1,\dots,d\}$, 
and associates each vector $\x$ with the paths in the collection that only visits vertices in $\{ i \; | \; \x_i=1 \}$.
Our data structure exploits that similar vectors will be associated with a common path with constant probability, while vectors with low similarity have a negligible probability of sharing a path.
However, the collection of paths has size superlinear in~$n$, so an efficient method is required for locating the paths associated with a particular vector.
Our choice of the collection of paths balances two opposing constraints:
It is random enough to match the filtering performance of a truly random collection of sets, and at the same time it is structured enough to allow efficient search for sets matching a given vector.
The search procedure is comparable in simplicity to the classical techniques of bit sampling, MinHash, SimHash, and $p$-stable LSH, and we believe it might be practical.
This is in contrast to most theoretical advances in similarity search in the past ten years that suffer from $o(1)$ terms in the exponent of complexity bounds.

\begin{figure}   
	\includegraphics[width=\columnwidth]{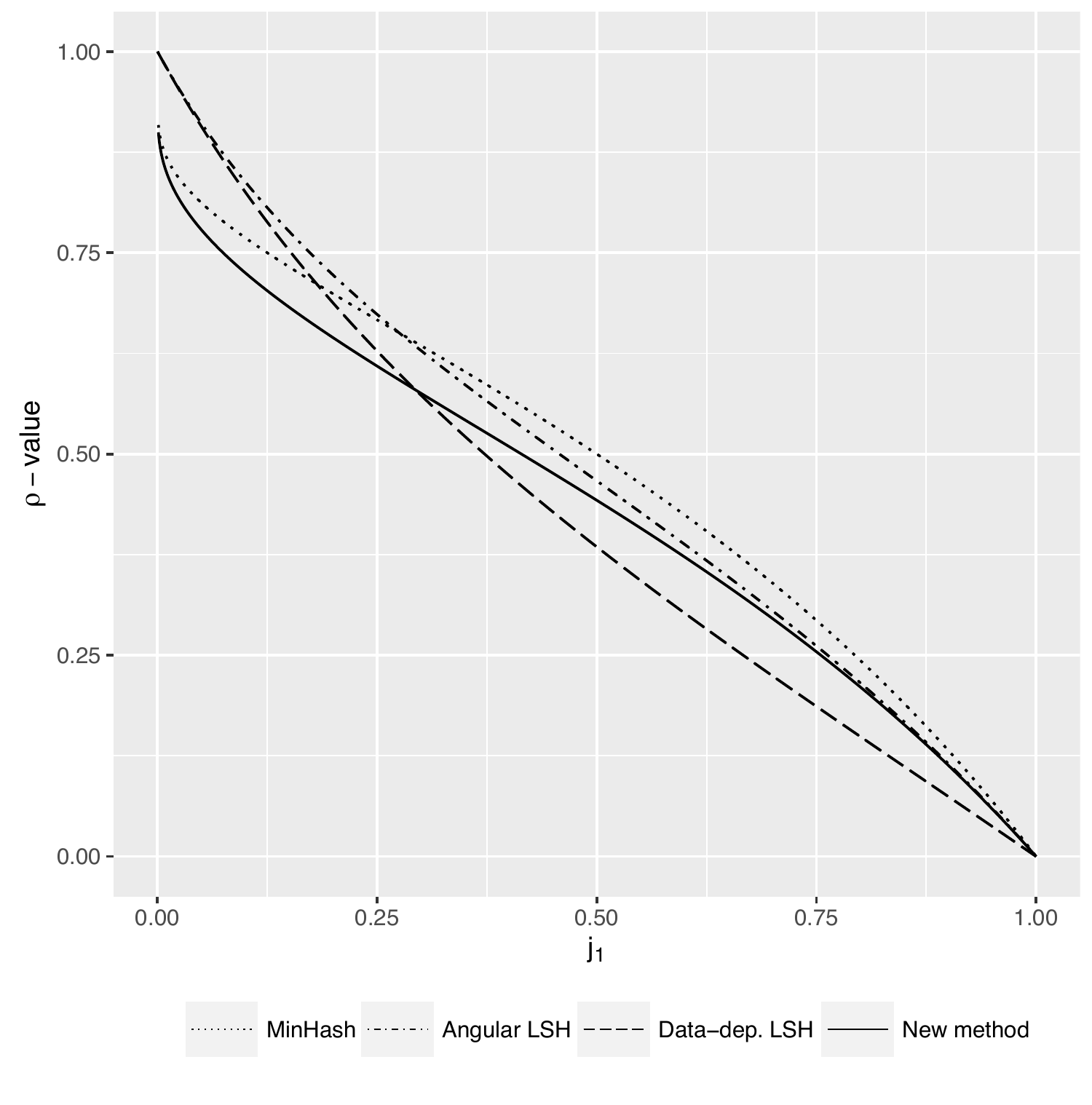}
	\caption{Exponent when searching for a vector with Jaccard similarity~$j_{1}$ with approximation factor~2 (i.e., guaranteed to return a vector with Jaccard similarity $j_{1}/2$) for various methods in the setting where all sets have the same size. 
	Our new method is the best data-independent method, and is better than data-dependent LSH up to about $j_1 \approx 0.3$.}
\label{fig:two-approx}
\end{figure}

\medskip

\para{Intuition.} 
Recall that we will think of a vector $\x \in \cube{d}$ also as a set, $\{ i \;|\; \x_i=1\}$.
MinHash can be thought of as a way of sampling an element $i_{\x}$ from $\x$, namely, we let $i_{\x} = \argmin_{i\in\x} h(i)$ where $h$ is a random hash function.
For sets $\x$ and $\y$ the probability that $i_{\x}=i_{\y}$ equals their Jaccard similarity $J(\x,\y)$, which is much higher than if the samples had been picked independently.
Consider the case in which $|\x|=|\y|=t$, so $J(\x,\y) = \frac{|\x\cap\y|}{2t-|\x\cap\y|}$.
Another way of sampling is to compute $I_{\x} = \x \cap \vect{b}$, where $\Pr[i\in \vect{b}]=1/t$, independently for each $i\in [d]$.
The expected size of $I_{\x}$ is 1, so this sample has the same expected ``cost'' as the MinHash-based sample.
But if the Jaccard similarity is small, the latter samples are more likely to overlap: 
$$\Pr[I_{\x}\cap I_{\y} \neq \emptyset] = 1-(1-1/t)^{|\x\cap\y|} \approx 1 - e^{-|\x \cap \y|/t} \approx |\x\cap\y|/t,$$ 
nearly a factor of 2 improvement.
In fact, whenever $|\x\cap\y| < 0.6\, t$ we have 
$\Pr[I_{\x}\cap I_{\y} \ne \emptyset] > \Pr[i_{\x}=i_{\y}]$.
So in a certain sense, MinHash is not the best way of collecting evidence for the similarity of two sets.
(This observation is not new, and has been made before e.g.~in~\cite{cohen2009leveraging}.)

\medskip

\para{Locality-Sensitive Maps.}
The intersection of the samples $I_{\x}$ does not correspond directly to hash collisions, so it is not clear how to turn this insight into an algorithm in the LSH framework.
Instead, we will consider a generalization of both the locality sensitive filtering (LSF) and LSH frameworks where we define a distribution $\LSM$ over maps $M \colon \cube{d} \to 2^{R}$.
The map $M$ performs the same task as the LSH data structure: It takes a vector $\x$ and returns a set of memory locations $M(\x) \subseteq \{1, \dots, R \}$.
A randomly sampled map $M \sim \LSM$ has the property that if a pair of points $\x,\y$ are close then $M(\x) \cap M(\y) \neq \emptyset$ with constant probability, 
while if $\x, \y$ are distant then the expected size $M(\x) \cap M(\y)$ is small (much smaller than $1$).
It is now straightforward to see that this distribution can be used to construct a data structure for similarity search by storing each data point $\x \in P$ in the set of memory locations or buckets $M(\x)$.
A query for a point $\y$ is performed by computing the similarity between $\y$ and every point $\x$ contained in the set buckets $M(\y)$, reporting the first sufficiently similar point found.

\medskip

\para{\textsc{Chosen Path}.}
It turns out that to most efficiently filter out vectors of low similarity in the setting where all sets have equal size, 
we would like to map each data point $\x \in \cube{d}$ to a collection $M(\x)$ of random subsets of $\cube{d}$ that are contained in $\x$.
Furthermore, to best distuinguish similar from dissimilar vectors when solving the approximate similarity search problem, we would like the random subsets of $\cube{d}$ to have size $\Theta(\log n)$.
This leads to another obstacle: The collection of subsets of $\cube{d}$ required to ensure that $M(\x) \cap M(\y) \neq \emptyset$ for similar points,
i.e., that $M$ maps to a subset contained in $\x \cap \y$, is very large.
The space usage and evaluation time of a locality-sensitive map $M$ to fully random subsets of $\cube{d}$ would far exceed $n$, rendering the solution useless. 
To overcome this we create the samples in a gradual, correlated way using a pairwise independent branching process that turns out to yield ``sufficiently random'' samples for the argument to go through.

\medskip

\para{Lower Bound.}
On the lower bound side we show that our solution for Braun-Blanquet similarity is best possible in terms of parameters $b_1$ and $b_2$ within the class of solutions that can be characterized as data-independent locality-sensitive maps.
The lower bound works by showing that a family of locality-sensitive maps for Braun-Blanquet similarity with a $\rho$-value below $\log(1/b_1)/\log(1/b_2)$ 
can be used to construct a locality-sensitive hash family for the $c$-approximate near neighbor problem in Hamming space with a $\rho$-value below $1/c$, 
thereby contradicting the LSH lower bound by O'Donnell et al.~\cite{odonnell2014optimal}.
We state the lower bound here in terms of locality-sensitive hashing, formally defined in Section \ref{sec:preliminaries}. 
\begin{theorem} \label{thm:lower}
For every choice of constants $0 < {b_2} < {b_1} < 1$ any $({b_1}, {b_2}, p_1, p_2)$-sensitive hash family $\LSH_{B}$ for $\cube{d}$ under Braun-Blanquet similarity must satisfy
\begin{equation*}
	\rho(\LSH_{B}) = \frac{\log(1/p_{1})}{\log(1/p_{2})} \geq \frac{\log(1/{b_1})}{\log(1/{b_2})} - O\left(\frac{\log(d/p_2)}{d}\right)^{1/3}. 
\end{equation*}
\end{theorem}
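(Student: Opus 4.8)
The plan is to prove this by reducing to the O'Donnell et al.~\cite{odonnell2014optimal} lower bound $\rho\ge 1/c$ for the $(r,cr)$-approximate near neighbor problem in Hamming space. Concretely, I would turn any $(b_1,b_2,p_1,p_2)$-sensitive family $\LSH_B$ into a Hamming LSH family $\LSH'$ whose target (bit-sampling) exponent is \emph{exactly} $\log(1/b_1)/\log(1/b_2)$, while preserving the collision probabilities $p_1,p_2$, so that $\rho(\LSH')=\rho(\LSH_B)$. Since O'Donnell's theorem certifies that no Hamming family can beat bit sampling by more than its additive slack, this yields $\rho(\LSH_B)\ge \log(1/b_1)/\log(1/b_2)-\mathrm{err}$.

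First I would fix the embedding. Map a Hamming point $\mathbf{u}\in\cube{D}$ to the balanced vector $(\mathbf{u},\bar{\mathbf{u}})\in\cube{2D}$ of weight $D$; a direct count gives $B\big((\mathbf{u},\bar{\mathbf{u}}),(\mathbf{v},\bar{\mathbf{v}})\big)=1-\norm{\mathbf{u}-\mathbf{v}}_1/D$, so Braun--Blanquet similarity is affine in Hamming distance. The crucial second step is a \emph{path-power}: replacing each support set $S$ by $S^{(k)}=\{(i_1,\dots,i_k):i_j\in S\}$ multiplies intersections and weights by the $k$-th power, hence sends $B\mapsto B^k$. Choosing radii $r_1=D(1-b_1^{1/k})$ and $r_2=D(1-b_2^{1/k})$ makes near pairs satisfy $B^k\ge b_1$ and far pairs $B^k\le b_2$, so $g(\mathbf{u})=h\big((\mathbf{u},\bar{\mathbf{u}})^{(k)}\big)$ with $h\sim\LSH_B$ is a genuine $(r_1,r_2,p_1,p_2)$-sensitive family on \emph{all} of $\cube{D}$. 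The power converts the linear $(1-B)$ scaling into a logarithmic one: the bit-sampling exponent for these radii is $\log(1-r_1/D)/\log(1-r_2/D)=\log(b_1^{1/k})/\log(b_2^{1/k})=\log(1/b_1)/\log(1/b_2)$, independent of $k$, while the relative radii $1-b_i^{1/k}\to 0$ place us in the regime where O'Donnell's bound is tight.

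A naive path-power lands in dimension $(2D)^k$, which would force $(2D)^k\le d$ and hence $D\approx d^{1/k}$, crippling O'Donnell's finite-dimension error. To avoid this I would realize the power not as a literal product but through a pairwise-independent branching process in the spirit of \textsc{Chosen Path}, sampling only $\Theta(d)$ length-$k$ paths, so the input fed to $\LSH_B$ stays in $\cube{d}$ while the Hamming dimension $D$ remains a free parameter that can be taken large enough to make O'Donnell's intrinsic finite-$D$ correction negligible. One then checks that near and far pairs retain shared-path densities concentrated at $b_1$ and $b_2$, so that the guarantees $p_1,p_2$ transfer to $\LSH'=g$.

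The additive error reflects a tension in the single free parameter $k$: larger $k$ pushes the relative radii $1-b_i^{1/k}$ to zero, i.e.\ deeper into the regime where O'Donnell's bound is tight against bit sampling, but it also forces the branching process to certify the target similarities $b_1,b_2$ through increasingly rare length-$k$ coincidences, so that with a fixed budget of $\Theta(d)$ sampled paths the empirical shared-path densities fluctuate more. Writing the first effect as an $O(1/k)$ slack and the sampling fluctuation as a term that grows polynomially in $k$ while scaling as $\log(d/p_2)/d$ (the $p_2$ dependence entering because the relevant collisions must be controlled down to probability $p_2$), the choice $k\sim (d/\log(d/p_2))^{1/3}$ balances the two and yields $O\big((\log(d/p_2)/d)^{1/3}\big)$. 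I expect the main obstacle to be exactly this quantitative concentration analysis of the pairwise-independent branching process --- bounding the level-by-level variance accumulation of the shared-path count and verifying that the resulting family genuinely meets O'Donnell's hypotheses --- rather than the (clean) embedding and exponent bookkeeping, which is what pins the exponent at $1/3$.
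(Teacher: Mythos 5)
Your route coincides with the paper's own proof in every structural respect: the balanced doubling $\mathbf{u}\mapsto(\mathbf{u},\bar{\mathbf{u}})$ is exactly the paper's $\tilde{\x}=\{(i,\x_i) \mid i\in[d]\}$, your ``path-power'' $S^{(k)}$ is the paper's tensoring $\x^{\otimes\tau}$ with $B(\x^{\otimes\tau},\y^{\otimes\tau})=(1-r/D)^{\tau}$, the direction of the reduction (Hamming into fixed-weight sets, then compose with $\LSH_{B}$ and contradict the O'Donnell et al.\ bound) is the same, and so is the observation that the literal power lands in dimension $(2D)^{k}$ and must be replaced by a random subsampling into $\cube{d}$. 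The one place you diverge is in \emph{how} that subsampling is done, and there your proposal has a genuine gap.

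A pairwise-independent branching process in the style of \textsc{Chosen Path} cannot deliver the concentration this reduction needs. The composed family has collision probabilities only $p_1'=p_1-\delta$ and $p_2'=p_2+\delta$, where $\delta$ is the per-pair failure probability of the embedding, and the final error term is governed by $\varepsilon+\delta/p_2$. Since the O'Donnell lemma is applied in dimension $D=2^{d}$, it tolerates $p_2$ as small as roughly $2^{-d}$, so $\delta$ must be made \emph{exponentially} small in $d$; this is precisely why the stated error carries a $\log(d/p_2)$ and not a $\poly(1/p_2)$ dependence. Pairwise independence gives you only second-moment control, hence Chebyshev-type tails $\delta=O(1/(t\varepsilon^{2}))$ with $t\le d$ samples --- polynomially small at best --- and $\delta/p_2$ then explodes for small $p_2$. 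The intuition that pairwise independence suffices is imported from the upper bound, where \textsc{Chosen Path} needs only constant success probability and an efficiency constraint forces cheap randomness; here there is no efficiency constraint at all, since only the \emph{existence} of the Hamming family is needed for the contradiction. The fix is simply full independence, which is what the paper uses: $t=\lfloor d/l\rfloor$ mutually independent random $\tau$-tuples of coordinates, each hashed into its own $l=\lceil 8/\varepsilon\rceil$-bit block (the block structure also pins the weight to exactly $t$ and keeps the collision inflation of the universe reduction below $(\varepsilon/8)t$ in expectation), whence Hoeffding gives $\delta=2e^{-t\varepsilon^{2}/32}=e^{-\Omega(d\varepsilon^{3})}$ and the choice $\varepsilon^{3}=\Theta(\log(d/p_2)/d)$ yields the theorem. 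Note also that with these parameters ($D=2^{d}$, $r=\sqrt{D}$, $\tau\approx\sqrt{D}\ln(1/(b_1+\varepsilon))$, so $\tau$ is enormous rather than $\approx(d/\log(d/p_2))^{1/3}$) the $O(D^{-1/4})$ slack of the O'Donnell bound and the $1/c$-versus-target discrepancy you budget as $O(1/k)$ are both negligible; the exponent $1/3$ comes entirely from the trade-off $t\varepsilon^{2}$ with the bit budget $t\cdot l\le d$ forcing $t=\Theta(d\varepsilon)$, not from the balancing you describe.
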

The details showing how this LSH lower bound implies a lower bound for locality-sensitive maps are given in Section \ref{sec:lower}. 
\section{Preliminaries}\label{sec:preliminaries}
As stated above we will view $\x \in \cube{d}$ both as a vector and as a subset of $[d] = \{1,\dots,d\}$.
Define $\x$ to be \emph{$t$-sparse} if $|\x| = t$; 
we will be interested in the setting where $t \leq d/2$, and typically the sparse setting $t \ll d$.
Although many of the concepts we use hold for general spaces, for simplicity we state definitions in the same setting as our results: 
the boolean hypercube $\cube{d}$ under some measure of similarity $S \colon \cube{d} \times \cube{d} \rightarrow [0;1]$. 

\begin{definition} (Approximate similarity search)
	Let $P \subset \cube{d}$ be a set of $|P| = n$ data vectors, let $S \colon \cube{d} \times \cube{d} \rightarrow [0;1]$ be a similarity measure, and let $s_1, s_2 \in [0;1]$ such that $s_1 > s_2$.
	A solution to the \emph{$(s_1,s_2)$-$S$-similarity search problem} is a data structure that supports the following query operation: 
	on input $\q \in \cube{d}$ for which there exists a vector $\x \in P$ with $S(\x,\q) \geq s_1$, return $\x' \in P$ with $S(\x',\q) > s_2$.
\end{definition}
Our data structures are randomized, and queries succeed with probability at least~$1/2$ (the probability can be made arbitrarily close to~$1$ by independent repetition).
Sometimes similarity search is formulated as searching for vectors that are near $\q$ according to the distance measure $D(\x, \y) = 1 - S(\x,\y)$. 
For our purposes it is natural to phrase conditions in terms of similarity, but we  compare to solutions originally described as ``near neighbor'' methods.

Many of the best known solutions to approximate similarity search problems are based on a technique of randomized space partitioning. 
This technique has been formalized in the locality-sensitive hashing framework \cite{indyk1998} and the closely related locality-sensitive filtering framework~\cite{becker2016, christiani2017framework}.
\begin{definition} (Locality-sensitive hashing {\cite{indyk1998}})
	A $({s_1}, {s_2}, p_1, p_2)$-sensitive family of hash functions for a similarity measure $S$ on $\cube{d}$ is a distribution $\LSH_{S}$ over functions $h \colon \cube{d} \to R$ 
	such that for all $\x,\y \in \cube{d}$ and random $h$ sampled according to $\LSH_{S}$: If $S(\x,\y) \geq s_1$ then $\Pr[h(\x) = h(\y)] \geq p_1$, and if $S(\x,\y) \leq s_2$ then $\Pr[h(\x) = h(\y)] \leq p_2$.
\end{definition}
The range $R$ of the family will typically be fairly small such that an element of $R$ can be represented in a constant number of machine words.
In the following we assume for simplicity that the family of hash functions is \emph{efficient} such that $h(\x)$ can be computed in time~$O(|\x|)$.
Furthermore, we will assume that the time to compute the similarity $S(\x, \y)$ can be upper bounded by the time it takes to compute the size of the intersection of preprocessed sets, i.e.,~$O(\min(|\x|, |\y|))$.

Given a locality-sensitive family it is quite simple to obtain a solution to the approximate similarity search problem, 
essentially by hashing points to buckets such that close points end up in the same bucket while distant points are kept apart. 
\begin{lemma}[LSH framework {\cite{indyk1998, har-peled2012}}] 
	Given a $(s_1, s_2, p_1, p_2)$-sensitive family of hash functions it is possible to solve the $(s_1, s_2)$-$S$-similarity search problem
	with query time $O(|\q| n^{\rho} \log n)$ and space usage $O(n^{1 + \rho} + \sum_{\x \in P}|\x|)$ where $\rho = \log(1/p_1) / \log(1/p_2)$.
\end{lemma}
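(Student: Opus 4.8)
The plan is to follow the classical Indyk--Motwani amplification. From the given $(s_1,s_2,p_1,p_2)$-sensitive family I would build a \emph{composite} hash function by concatenating $k$ independent draws $h_1,\dots,h_k \sim \LSH_S$ into $g(\x)=(h_1(\x),\dots,h_k(\x))$, so that a colliding pair must agree on all $k$ coordinates. For a pair with $S(\x,\y)\geq s_1$ the composite collision probability is then at least $p_1^{k}$, while for a pair with $S(\x,\y)\leq s_2$ it is at most $p_2^{k}$. I would sample $L$ independent such composite functions $g_1,\dots,g_L$ and, in preprocessing, insert each data vector $\x\in P$ into bucket $g_\ell(\x)$ of the $\ell$-th hash table, keeping a single shared copy of the set representation of each $\x$ and storing only pointers in the tables.

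The two free parameters are fixed by the usual tension between recall and false positives. I would set $k=\lceil \log_{1/p_2} n\rceil$, which makes $p_2^{k}\leq 1/n$, so that any fixed distant vector (one with $S\leq s_2$) collides with the query $\q$ in a given table with probability at most $1/n$; hence the expected number of distant collisions per table is at most $1$. The same choice gives $p_1^{k}=\Omega(n^{-\rho})$ with $\rho=\log(1/p_1)/\log(1/p_2)$, so a genuine near vector $\x^\ast$ (with $S(\x^\ast,\q)\geq s_1$) collides with $\q$ in a fixed table with probability $\Omega(n^{-\rho})$. Taking $L=\Theta(n^{\rho})$ independent tables then drives the probability that $\x^\ast$ fails to collide in \emph{every} table to a constant bounded away from $1$.

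On a query I would, for each table $\ell$, evaluate $g_\ell(\q)$, scan the points in bucket $g_\ell(\q)$, and return the first examined $\x'$ with $S(\x',\q)>s_2$; note that points in the ``gap'' $s_2<S\leq s_1$ are themselves valid answers, so the only wasted work is examining distant points. The point requiring the most care is that, to control the running time, one must cap the scan at $O(L)$ examined points and otherwise report failure, and this cap must not destroy recall. Correctness follows from a union bound over two events: (i) $\x^\ast$ collides in at least one table, holding with constant probability by the choice of $L$; and (ii) the total number of distant collisions across all tables stays within the $O(L)$ budget, holding with constant probability by Markov's inequality since its expectation is at most $L$. Tuning the hidden constants in $L$ and in the budget pushes the overall success probability to at least $1/2$, and independent repetition boosts it arbitrarily.

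It remains to verify the resource bounds, which I expect to be delicate bookkeeping rather than a genuine obstacle. Each composite evaluation $g_\ell(\q)$ costs $O(k\,|\q|)=O(|\q|\log n)$, since $k=O(\log n)$ for constant $p_2$ and each $h_i$ is computable in time $O(|\q|)$; summed over the $L$ tables this is $O(|\q|\,n^{\rho}\log n)$. The candidate budget contributes $O(L)$ similarity computations, each of cost $O(\min(|\x|,|\q|))\le O(|\q|)$, a further $O(|\q|\,n^{\rho})$, for total query time $O(|\q|\,n^{\rho}\log n)$. For space, the $L$ tables hold $O(nL)=O(n^{1+\rho})$ pointer entries (reducing each composite key to an $O(\log n)$-bit hash so a bucket index fits in a machine word), while the single shared copy of the data accounts for the additive $\sum_{\x\in P}|\x|$ term, matching the claimed $O(n^{1+\rho}+\sum_{\x\in P}|\x|)$.
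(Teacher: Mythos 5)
Your proposal is correct and is precisely the classical Indyk--Motwani amplification argument that the paper itself does not reprove but cites from \cite{indyk1998, har-peled2012}: concatenation to $k=\lceil \log_{1/p_2} n\rceil$, $L=\Theta(n^{\rho})$ tables, Markov's inequality on the expected $O(L)$ distant collisions, and a capped scan, with the bookkeeping matching the stated $O(|\q|n^{\rho}\log n)$ query time and $O(n^{1+\rho}+\sum_{\x\in P}|\x|)$ space. No gaps worth noting; the only implicit assumptions (constant $p_1,p_2$, the $O(\min(|\x|,|\q|))$ similarity-evaluation cost, and $O(|\q|)$-time hash evaluation) are exactly those the paper makes in Section~\ref{sec:preliminaries}.
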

The upper bound presented in this paper does not quite fit into the existing frameworks. 
However, we would like to apply existing LSH lower bound techniques to our algorithm.
Therefore we define a more general framework that captures solutions constructed using the LSH and LSF framework, as well as the upper bound presented in this paper.
\begin{definition}[Locality-sensitive map]
	A $(s_1, s_2, m_1, m_2)$-sensitive family of maps for a similarity measure $S$ on $\cube{d}$ is a distribution $\LSM_{S}$ 
	over mappings $M \colon \cube{d} \to 2^{R}$ (where $2^{R}$ denotes the power set of $R$) such that for all $\x, \y \in \cube{d}$ and random $M \in \LSM_{S}$:
	\begin{enumerate}
		\item $\E[|M(\x)|] \leq m_1$.	
		\item If $S(\x, \y) \leq s_2$ then $\E[|M(\x) \cap M(\y)|] \leq m_2$.
		\item If $S(\x, \y) \geq s_1$ then $\Pr[M(\x) \cap M(\y) \neq \emptyset] \geq 1/2$.
	\end{enumerate}
 \end{definition}
Once we have a family of locality-sensitive maps $\LSM$ we can use it to obtain a solution to the $(s_1, s_2)$-$S$-similarity search problem.
\begin{lemma}
	Given a $(s_1, s_2, m_1, m_2)$-sensitive family of maps $\LSM$ we can solve the $(s_1, s_2)$-$S$-similarity search problem 
	with query time $O(m_{1} + nm_{2}|\q| + T_{M})$ and space usage $O(nm_{1} + \sum_{\x \in P}|\x|)$ where $T_{M}$ is the time to evaluate a map $M \in \LSM$.
\end{lemma}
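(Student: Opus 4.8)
The plan is to give a direct, constructive reduction: from a $(s_1, s_2, m_1, m_2)$-sensitive family of maps $\LSM$, build the natural bucketing data structure and then bound its query time, space, and success probability using the three defining properties of the family. First I would describe preprocessing: sample a single map $M \sim \LSM$, and for every data vector $\x \in P$ insert $\x$ into each bucket indexed by the memory locations in $M(\x)$. Concretely, I would store a dictionary keyed by elements of $R$ that maps each key to the list of data points assigned to it; the keys actually used are exactly $\bigcup_{\x \in P} M(\x)$, so only nonempty buckets consume space.

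For the query on $\q$, the procedure is to compute $M(\q)$ (taking time $T_M$), and then for each location $r \in M(\q)$ scan the data points $\x$ stored in bucket $r$, computing $S(\x, \q)$ for each and returning the first $\x'$ with $S(\x', \q) > s_2$. The space bound follows from property~(1): the total number of (point, bucket) pairs stored is $\sum_{\x \in P} |M(\x)|$, whose expectation is at most $n m_1$ by linearity, and together with the $O(\sum_{\x \in P}|\x|)$ term needed to store the sets themselves this gives the claimed $O(n m_1 + \sum_{\x \in P}|\x|)$. The $T_M$ term in the query accounts for evaluating the map, the $m_1$ term bounds the expected number of buckets of $\q$ we must touch (since $\E[|M(\q)|] \le m_1$), and each similarity computation costs $O(|\q|)$ by the efficiency assumption.

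The main content is bounding the query work spent on false positives and establishing correctness. For correctness, suppose a point $\x \in P$ with $S(\x, \q) \ge s_1$ exists; by property~(3), $\Pr[M(\x) \cap M(\q) \neq \emptyset] \ge 1/2$, so with probability at least $1/2$ some bucket of $\q$ contains $\x$, and since $S(\x, \q) \ge s_1 > s_2$ the scan will return a valid answer (either $\x$ itself or an earlier point already exceeding the $s_2$ threshold). For the running time, I would bound the expected number of distant points inspected: the total number of collisions between $\q$ and points $\x$ with $S(\x, \q) \le s_2$ is $\sum_{\x} |M(\x) \cap M(\q)|$, and by property~(2) each term has expectation at most $m_2$, so in expectation at most $n m_2$ such false-positive comparisons occur, each costing $O(|\q|)$, giving the $n m_2 |\q|$ term.

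The step I expect to be the main obstacle is handling the probabilistic guarantees cleanly, since the stated query time $O(m_1 + n m_2 |\q| + T_M)$ and the success probability $1/2$ are really \emph{expectations} over the random choice of $M$, not worst-case bounds. Property~(3) only gives constant success probability and properties~(1)--(2) only bound expected sizes, so I would need to argue that the bounds hold in expectation and invoke the standard remark from the excerpt that independent repetition amplifies the success probability arbitrarily close to~$1$ while preserving the asymptotic resource bounds. Care is also needed because points $\x$ with $s_2 < S(\x,\q) < s_1$ are unclassified by the family and could collide with $\q$ without an a priori bound on their number; I would observe that any such point is itself a valid output (it has similarity strictly above $s_2$), so encountering one only helps, and thus it does not inflate the false-positive accounting that governs the running time.
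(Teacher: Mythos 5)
Your proposal is correct and follows essentially the same route as the paper's proof: sample one map $M \sim \LSM$, bucket the data points by $M(\x)$, scan the buckets of $M(\q)$ at query time, and remove the expectations via Markov's inequality and independent repetition. Your additional observations (the $m_1$ term covering $\E[|M(\q)|]$, and that points with $s_2 < S(\x,\q) < s_1$ terminate the scan as valid answers rather than inflating the false-positive cost) are details the paper leaves implicit but are entirely consistent with its argument.
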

\begin{proof}
	We construct the data structure by sampling a map $M$ from $\LSM$ and use it to place points in $P$ into buckets.
	To run a query for a point $\q$ we proceed by evaluating $M(\q)$ and computing the similarity between $\q$ and the points in the buckets associated with $M(\q)$.
	If a sufficiently similar point is found we return it. 
	We get rid of the expectation in the guarantees by independent repetitions and applying Markov's inequality.
\end{proof}

\para{Model of Computation.}
We assume the standard word RAM model \cite{hagerup1998} with word size $\Theta(\log n)$, where $n=|P|$.
In order to be able to draw random functions from a family of functions we augment the model with an instruction that generates a machine word uniformly at random in constant time.
\section{Upper Bound} \label{sec:upper}
We will describe a family of locality-sensitive maps $\LSM_{B}$ for solving the $(b_1, b_2)$-$B$-similarity search problem, where $B$ is Braun-Blanquet similarity~(\ref{eq:B}).
After describing $\LSM_{B}$ we will give an efficient implementation of $M \in \LSM_{B}$ and show how to set parameters to obtain our Theorem \ref{thm:upper}. 
\subsection{Chosen Path}
The \textsc{Chosen Path} family $\LSM_{B}$ is defined by $k$ random hash functions $h_1, \dots, h_k$ where $h_{i} \colon [w] \times [d]^{i} \to [0;1]$ and $w$ is a positive integer.
The evaluation of a map $M_{k} \in \LSM_{B}$ proceeds in a sequence of $k+1$ steps that can be analyzed as a Galton-Watson branching process, 
originally devised to investigate population growth under the assumption of identical and independent offspring distributions. 
In the first step $i = 0$ we create a population of $w$ starting points
\begin{equation}\label{eq:M0}
M_{0}(\x) = [w].
\end{equation}
In subsequent steps, every path that has survived so far produces offspring according to a random process that depends on $h_i$ and the element $\x \in \cube{d}$ being evaluated. 
We use $p \circ j$ to denote concatenation of a path $p$ with a vertex $j$. 
\begin{equation}\label{eq:Mi}
M_i(\x) = \left\{ p \circ j \mid p \in M_{i-1}(\x) \land h_i(p\circ j) < \frac{\x_{j}}{b_{1}|\x|} \right\}. 
\end{equation}
Observe that $h_i(p\circ j) < \frac{\x_{j}}{b_{1}|\x|}$ can only hold when $\x_{j}=1$, so the paths in $M_i(\x)$ are constrained to $j\in\x$.
The set $M(\x) = M_{k}(\x)$ is given by the paths that survive to the $k$th step.
We will proceed by bounding the evaluation time of $M \in \LSM_{B}$ as well as showing the locality-sensitive properties of~$\LSM_{B}$.
In particular, for similar points $\x, \y \in \cube{d}$ with $B(\x, \y) \geq b_1$ we will show that with probability at least $1/2$ there will be a path that is chosen by both $\x$ and $\y$. 
\begin{lemma}[Properties of {\textsc{Chosen Path}}]\label{lem:cp}
	For all $\x, \y \in \cube{d}$, integer $i\geq 0$, and random $M \in \LSM_{B}$:
	\begin{enumerate}
		\item $\E[|M_{i}(\x)|] \leq (1/b_{1})^{i}w$.	
		\item If $B(\x, \y) < b_2$ then $\E[|M_{i}(\x) \cap M_{i}(\y)|] \leq (b_{2}/b_{1})^{i}w$.
		\item If $B(\x, \y) \geq b_1$ then $\Pr[M_{i}(\x) \cap M_{i}(\y) \neq \emptyset] \geq i/(i + w)$.
	\end{enumerate}
\end{lemma}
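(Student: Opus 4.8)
The plan is to analyze each of the three claims by tracking the branching process level by level, exploiting the fact that the hash functions $h_1, \dots, h_k$ are independent across levels and that each ``survival'' decision $h_i(p \circ j) < \x_j/(b_1|\x|)$ is an independent coin flip (given $\x_j = 1$) with success probability exactly $1/(b_1|\x|)$. For part (1), I would set up a one-step recurrence. Conditioning on $M_{i-1}(\x)$, each surviving path $p$ spawns one child $p \circ j$ for each $j \in \x$ that passes its threshold, so the expected number of children of a single path is $\sum_{j} \Pr[h_i(p\circ j) < \x_j/(b_1|\x|)] = |\x| \cdot \frac{1}{b_1|\x|} = 1/b_1$. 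By linearity and the tower property, $\E[|M_i(\x)|] = (1/b_1)\,\E[|M_{i-1}(\x)|]$, and with $\E[|M_0(\x)|] = w$ this gives $\E[|M_i(\x)|] = (1/b_1)^i w$ exactly, yielding claim (1).

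For part (2), I would analyze the expected size of the intersection by the same recurrence, but now tracking paths that survive in \emph{both} $M_i(\x)$ and $M_i(\y)$. A path $p \circ j$ lies in the intersection only if it lies in both sets, which forces $\x_j = \y_j = 1$, i.e. $j \in \x \cap \y$, and requires the single threshold test $h_i(p\circ j) < \min(\x_j/(b_1|\x|),\, \y_j/(b_1|\y|))$ to pass. Since $h_i(p\circ j)$ is one shared uniform value, the probability a given common parent produces a given common child is $\min\big(\tfrac{1}{b_1|\x|}, \tfrac{1}{b_1|\y|}\big) = \tfrac{1}{b_1\max(|\x|,|\y|)}$. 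Summing over the $|\x \cap \y|$ admissible choices of $j$, the expected number of common children per common parent is $\tfrac{|\x\cap\y|}{b_1\max(|\x|,|\y|)} = B(\x,\y)/b_1 < b_2/b_1$ under the hypothesis $B(\x,\y) < b_2$. The crucial observation is that a path can only belong to $M_i(\x) \cap M_i(\y)$ if its parent already belonged to $M_{i-1}(\x) \cap M_{i-1}(\y)$, so the intersection itself evolves as a branching process with per-node expected offspring at most $b_2/b_1$. This gives the recurrence $\E[|M_i(\x) \cap M_i(\y)|] \le (b_2/b_1)\,\E[|M_{i-1}(\x)\cap M_{i-1}(\y)|]$ with base value $w$, establishing claim (2).

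For part (3), the direction is now a \emph{lower} bound on a collision probability, and this is where I expect the main obstacle. The clean quantity to track is the expected number of common surviving paths, $\mu_i := \E[|M_i(\x) \cap M_i(\y)|]$; by the reasoning above, when $B(\x,\y) \ge b_1$ the per-node expected offspring in the intersection process is exactly $B(\x,\y)/b_1 \ge 1$, so $\mu_i \ge \mu_{i-1} \ge \cdots \ge \mu_0 = w$. The difficulty is that a large first moment does not by itself lower-bound the survival probability $\Pr[|M_i(\x)\cap M_i(\y)| \ge 1]$; I must control the second moment or use the Galton-Watson structure directly. The natural route is to let $q_i := \Pr[\text{the subtree rooted at a single common start-point has a common survivor at level } i]$ and derive a recurrence via the process's extinction equation. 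Because the intersection process has offspring mean $\theta := B(\x,\y)/b_1 \ge 1$ and offspring that are (conditionally) sums of independent indicators, I would bound $q_i$ from below by analyzing the fixed-point iteration for the survival probability, or more simply by using the standard inequality for a single lineage that the probability a critical-or-supercritical branching process survives $i$ generations is at least $1/(i+c)$ for an appropriate constant governed by the offspring variance. Having obtained a per-start-point survival bound of the form $q_i \ge \tfrac{1}{i + O(1)}$ and noting the $w$ independent start-points give $\Pr[M_i(\x)\cap M_i(\y) \neq \emptyset] = 1 - (1-q_i)^w$, the remaining task is to verify that this combines to at least $i/(i+w)$ as stated. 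I would reconcile the target bound $i/(i+w)$ by showing the single-lineage survival probability satisfies $q_i \ge 1/(i+w)$ in the worst case and that the $w$-fold boosting only helps; the honest accounting of the offspring-variance constant in the $1/(i+O(1))$ survival estimate, and confirming it matches the claimed form $i/(i+w)$ rather than some other decaying rate, is the step I would scrutinize most carefully.
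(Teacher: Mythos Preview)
Your arguments for parts (1) and (2) are correct and essentially identical to the paper's: both compute the one-step expected offspring count ($1/b_1$ for a single set, $B(\x,\y)/b_1$ for the intersection, using that a common child requires $j\in\x\cap\y$ and the single shared hash value $h_i(p\circ j)$ to fall below $1/(b_1\max(|\x|,|\y|))$) and unroll the recurrence from $|M_0|=w$.

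For part (3) you correctly flag the second moment as the natural tool but then set it aside in favor of a per-lineage extinction analysis followed by boosting over the $w$ start points. The paper instead carries the second-moment computation through directly on the \emph{whole} intersection process $X_i = M_i(\x)\cap M_i(\y)$ (all $w$ starts at once). Writing $\theta=B(\x,\y)/b_1\geq 1$, one has $\E[|X_i|]=w\theta^i$ exactly, and expanding $\E[|X_i|^2]$ yields the recurrence $\E[|X_i|^2]\leq \theta\,\E[|X_{i-1}|] + \theta^2\,\E[|X_{i-1}|^2]$, which unrolls to $\E[|X_i|^2]\leq wi\,\theta^{2i}+\E[|X_i|]^2$. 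Paley--Zygmund (equivalently, one-sided Chebyshev) then gives
\[
\Pr[|X_i|>0]\;\geq\;\frac{\E[|X_i|]^2}{\E[|X_i|^2]}\;\geq\;\frac{w}{i+w}
\]
in one stroke. Note that the printed bound $i/(i+w)$ is a typo for $w/(i+w)$: the latter is what the proof actually delivers and what is used downstream, where $w=2k$ and $i=k$ make $w/(i+w)=2/3\geq 1/2$, whereas $i/(i+w)=1/3$ would not suffice. This typo is precisely the source of your unease about ``reconciling'' the target.

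Your per-lineage plan is not wrong in principle, but the final step you sketch is muddled: there is no reason a \emph{single}-lineage survival probability $q_i$ should depend on $w$, and after obtaining (say) $q_i\geq 1/(i+1)$ via a one-start second-moment bound, you would still owe the separate inequality $1-(1-q_i)^w\geq w/(i+w)$. The global second-moment argument avoids this detour entirely and is the route the paper takes.
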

\begin{proof}
We prove each property by induction on $i$. 
The base cases $i=0$ follow from (\ref{eq:M0}). 
Now consider the inductive step for property 1.
Let $\1\{\mathcal{P}\}$ denote the indicator function for predicate~$\mathcal{P}$. Using independence of the hash functions $h_i$ we get:
\begin{align*}
	\E[|M_{i}(\x)|] 
	&= \E\left[ \sum_{p \in M_{i-1}(\x)} \sum_{j \in [d]} \1\! \left\{ h_{i}(p\circ j) < \frac{\x_{j}}{b_{1} |\x|} \right\} \right] \\ 
	&= \E\left[ \sum_{p \in M_{i-1}(\x)} 1 \right] \E\left[\sum_{j \in [d]} \1\! \left\{ h_{i}(p\circ j) < \frac{\x_{j}}{b_{1} |\x|} \right\} \right] \\ 
				    &\leq \E[|M_{i-1}(\x)|] /b_{1} \\
					&\leq (1/b_{1})^{i}w \enspace .
\end{align*}
The last inequality uses the induction hypothesis.
We use the same approach for the second property where we let $X_{i}=M_{i}(\x) \cap M_{i}(\y)$.
\begin{align*}
	\E[|X_{i}|] 
	&= \E\left[ \sum_{p \in X_{i-1}} \sum_{j \in [d]} \1\! \left\{ h_{i}(p\circ j) < \frac{\x_{j}}{b_{1} |\x|} \land h_{i}(p\circ j) < \frac{\y_{j}}{b_{1} |\y|}  \right\} \right] \\ 
	&= \E\left[ \sum_{p \in X_{i-1}} 1 \right] \sum_{j \in [d]} \Pr\left[ h_{i}(p\circ j) < \frac{\min(\x_{j},\y_{j})}{b_{1} \max(|\x|,|\y|)} \right] \\ 
			    &\leq \E[|X_{i-1}|] (B(\x, \y)/b_{1}) \\
				&\leq (B(\x, \y)/b_{1})^{i}w \enspace . 	
\end{align*}

To prove the third property we bound the variance of $|X_{i}|$ and apply Chebyshev's inequality to bound the probability of $X_{i}=\emptyset$.
First consider the case where $|\x| \leq 1/b_{1}$ and $|\y| \leq 1/b_1$. 
Here it must hold that $X_{i}>0$ as intersecting paths exist ($b_1>0$) and always activate. 
In all other cases we have that 
$$\E[|X_{i}|] = (B(\x, \y)/b_{1})^{i}w \enspace .$$
Knowing the expected value we can apply Chebyshev's inequality once we have an upper bound for $\Var[|X_{i}|] = \E[|X_{i}|^2] - \E[|X_{i}|]^{2}$.
Specifically we show that $\E[|X_{i}|^2]\leq wi (B(\x, \y)/b_{1})^{2i}$, by induction on $i$.
To simplify notation we define the indicator variable
$$Y_{p,j} = \1\! \left\{ h_i(p\circ j) < \frac{\x_{j}}{b_{1} |\x|} \land h_i(p\circ j) < \frac{\y_{j}}{b_{1} |\y|} \right\}$$
where we suppress the subscript~$i$.
First observe that
$$\E[Y_{p,j}]= 1 / (b_1 \max(|\x|,|\y|))\enspace .$$
By (\ref{eq:Mi}) we see that $|X_i| = \sum_{p\in X_{i-1}} \sum_{j\in [d]} Y_{p,j}$,
which means:
\begin{align*}
	\E[|X_{i}|^{2}] &= \E\left[ \left( \sum_{p \in X_{i-1}} \sum_{j \in [d]} Y_{p,j}  \right)^{2} \right] \\ 
   				    & = \E\left[ \sum_{p \in X_{i-1}} \sum_{j\in [d]} Y_{p,j}^{2}\right] \\
					&\qquad + \E\left[\sum_{p,p' \in X_{i-1}} \sum_{j,j'\in [d]} Y_{p,j} Y_{p',j'} \1\{(p,j)\ne (p',j')\} \right] \\
					&< \E[|X_{i-1}|] (B(\x, \y)/b_1) + \E[|X_{i-1}|^2] (B(\x, \y)/b_1)^2 \\
				    &\leq \sum_{s = 1}^{i}\E[|X_{i-s}|] (B(\x, \y)/b_1)^{2s-1} +   \E[|X_{0}|]^{2}(B(\x, \y)/b_1)^{2i} \\
					& = \E[|X_i|] \sum_{s = 0}^{i-1}(B(\x, \y)/b_{1})^{s} + \E[|X_i|]^2 \\
				    &\leq wi(B(\x,\y)/b_1)^{2i} +  \E[|X_i|]^2 \enspace .
\end{align*}
The third property now follows from a one-sided version of Chebychev's inequality applied to $|X_i|$.
\end{proof}
\subsection{Implementation Details}
Lemma \ref{lem:cp} continues to hold when the hash functions $h_1, \dots, h_k$ are individually \emph{2-independent} (and mutually independent) 
since we only use bounds on the first and second moment of the hash values.
We can therefore use a simple and practical scheme such as Zobrist hashing \cite{zobrist1970new} 
that hashes strings of $\Theta(\log n)$ bits to strings of $\Theta(\log n)$ bits in $O(1)$ time using space, say, $O(n^{1/2})$.
It is not hard to see that the domain and range of $h_1, \dots, h_k$ can be compressed to $O(\log n)$ bits (causing a neglible increase in the failure probability of the data structure). 
We simply hash the paths $p \in M_{i}(\x)$ to intermediate values of $O(\log n)$ bits, avoiding collisions with high probability, 
and in a similar vein, with high probability $O(\log n)$ bits of precision suffice to determine whether $h_i(p\circ j) < \frac{\x_{j}}{b_{1}|\x|}$. 

We now consider how to parameterize $\LSM_{B}$ to solve the $(b_{1}, b_{2})$-$B$-similarity problem on a set $P$ of $|P| = n$ points for every choice of constant parameters $0 < b_2 < b_1 < 1$, independent of $n$.
Note that we exclude $b_{1} = 1$ (which would correspond to identical vectors that can be found in time $O(1)$ by resorting to standard hashing) and $b_{2} = 0$ (for which every data point would be a valid answer to a query).
We set parameters 
\begin{align*}
k &= \lceil \log(n) / \log (1/b_{2}) \rceil, \\
w &= 2k
\end{align*}
from which it follows that $\LSM_{B}$ is $(b_1, b_2, m_1, m_2)$-sensitive with $m_1 = n^{\rho}w/b_1$ and $m_2 = n^{\rho - 1}w$ where $\rho = \log(1/b_1) / \log(1/b_2)$.
To bound the expected evaluation time of $M_{k}$ we use Zobrist hashing as well as intermediate hashes for the paths as described above.
In the $i$th step in the branching process the expected number of hash function evaluations is bounded by $|\q|$ times the number of paths alive at step $i - 1$. 
We can therefore bound the expected time to compute $M_{k}(\q)$ by
\begin{equation} \label{eq:hashtime}
	\sum_{i=0}^{k-1}\E[|\q||M_{i}(\q)|] \leq \frac{b_{1}^{-k}-1}{b_{1}^{-1}-1}|\q|w = O(|\q|n^{\rho}w).
\end{equation}
This completes the proof of Theorem \ref{thm:upper}.\footnote{We know of a way of replacing the multiplicative factor $|\q|$ in equation \eqref{eq:hashtime} by an additive term of $O(|\q|k)$ 
by choosing the hash functions $h_i$ carefully, but do not discuss this improvement here since $|\q|$ can be assumed to be polylogarithmic and our focus is on the exponent of $n$.}
\subsection{Comparison}\label{sec:comparison}
We will proceed by comparing our Theorem \ref{thm:upper} to results that can be achieved using existing techniques.
Again we focus on the setting where data points and query points are exactly $t$-sparse.
An overview of different techniques for three measures of similarity is shown in Table~\ref{tab:comparison}.
To summarize: The \textsc{Chosen Path} algorithm of Theorem \ref{thm:upper} improves upon all existing data-independent results over the entire $0 < b_2 < b_1 < 1$ parameter space.
Furthermore, we improve upon the best known \emph{data-dependent} techniques \cite{andoni2015optimal} for a large part of the parameter space (see Figure~\ref{fig:data}). 
The details of the comparisons are given in Appendix \ref{app:comparison}.

\medskip

\para{MinHash.} 
For $t$-sparse vectors there is a 1-1 mapping between Braun-Blanquet and Jaccard similarity. 
In this setting $J(\x,\y)=B(\x,\y)/(2-B(\x,\y))$.
Let $b_1 = 2j_1/(j_1+1)$ and $b_2 = 2j_2/(j_2+1)$ be the Braun-Blanquet similarities corresponding to Jaccard similarities $j_1$ and $j_2$.
The LSH framework using MinHash achieves $\rho_\text{minhash} = \log\left(\tfrac{b_1}{2-b_1}\right) / \log\left(\tfrac{b_2}{2-b_2}\right)$; 
this should be compared to $\rho = \log(b_1)/\log(b_2)$ achieved in Theorem~\ref{thm:upper}.
Since the function $f(z) = \log(\tfrac{z}{2-z})/\log z$ is monotonically increasing in $[0;1]$ we have that $\rho / \rho_\text{minhash} = f(b_2)/f(b_1) < 1$, i.e., $\rho$ is always smaller than $\rho_\text{minhash}$.
As an example, for $j_1 = 0.2$ and $j_2 = 0.1$ we get $\rho = 0.644...$ while $\rho_\text{minhash} = 0.698...$.
Figure~\ref{fig:minhash} shows the difference for the whole parameter space.
\begin{figure}
	\includegraphics[width=\columnwidth]{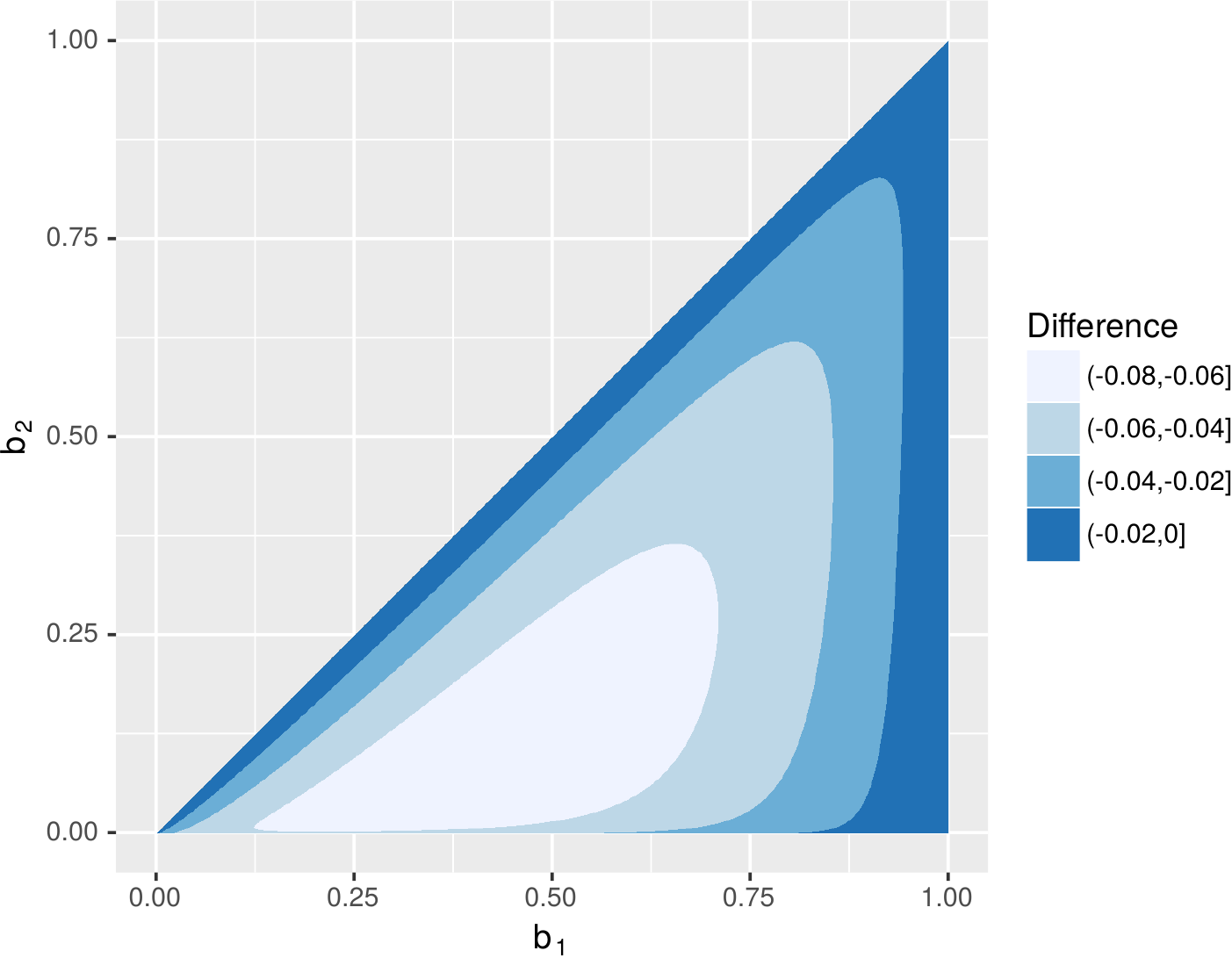}
	\caption{The difference $\rho - \rho_{\text{minhash}}$ comparing \textsc{Chosen Path} and MinHash in terms of Braun-Blanquet similarities $0 < b_2 < b_1 < 1$.}
	\label{fig:minhash}
\end{figure}
\medskip

\para{Angular LSH.}
Since our vectors are exactly $t$-sparse Braun-Blanquet similarities correspond directly to dot products (which in turn correspond to angles).
Thus we can apply angular LSH such as SimHash~\cite{charikar2002} or cross-polytope LSH~\cite{andoni2015practical}.
As observed in~\cite{christiani2017framework} one can express the $\rho$-value of cross-polytope LSH in terms of dot products as $\rho_\text{angular} = \tfrac{1-b_1}{1+b_1}/\tfrac{1-b_2}{1+b_2}$.
Since the function $f'(z) = (1 + z) \log(z)/(1 - z)$ is negative and monotonically increasing in $[0;1]$ we have that $\rho / \rho_\text{angular} = f'(b_1)/f'(b_2) < 1$, 
i.e., $\rho$ is always smaller than $\rho_\text{angular}$.
In the above example, for $j_1=0.2$ and $j_2=0.1$ we have $\rho_\text{angular} = 0.722...$ which is about $0.078$ more than {\sc Chosen Path}.
See Figure~\ref{fig:angular} for a visualization of the difference for the whole parameter space.
\begin{figure}
	\centering
	\includegraphics[width=\columnwidth]{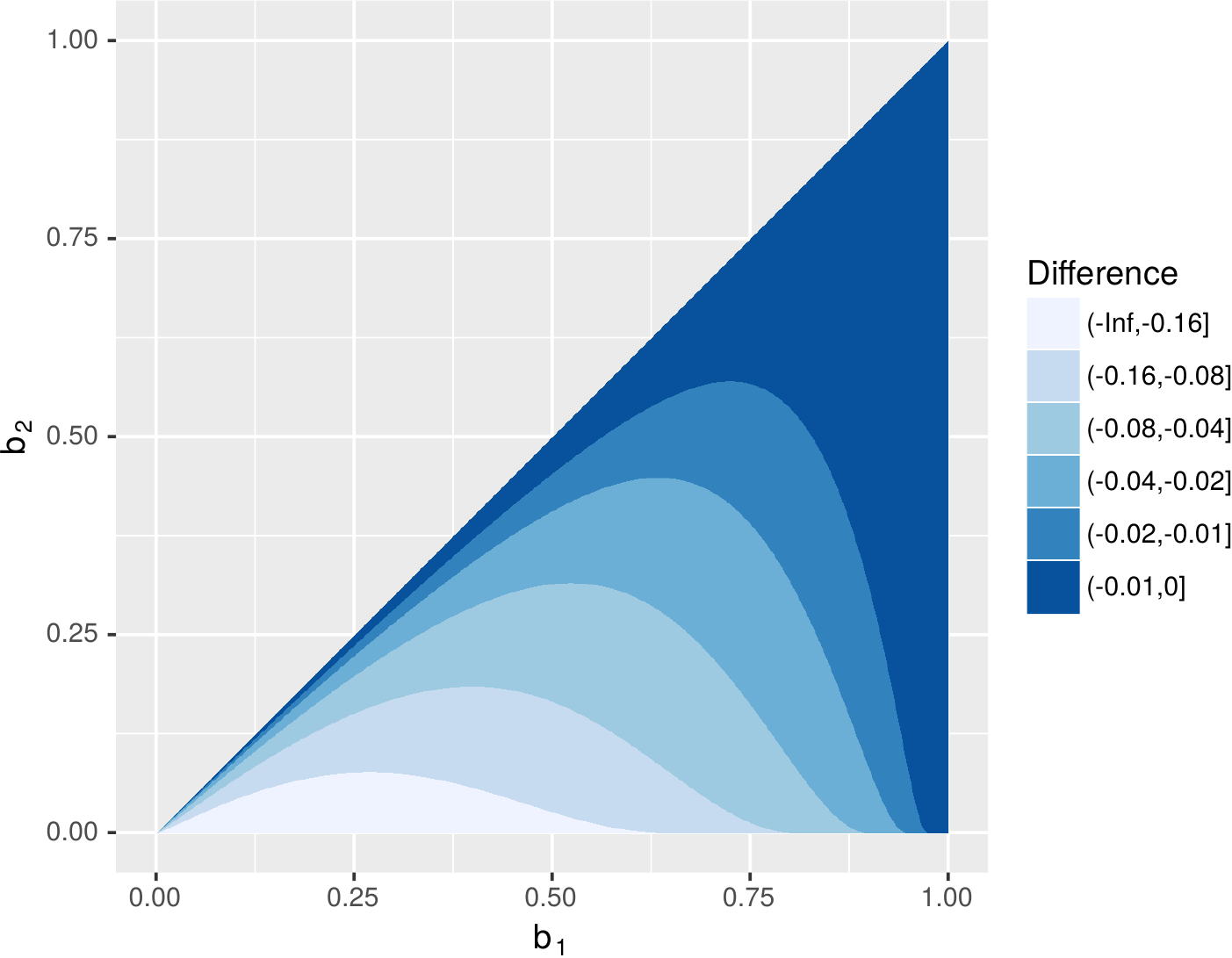}
	\caption{The difference $\rho - \rho_{\text{angular}}$ comparing \textsc{Chosen Path} and angular LSH in terms of Braun-Blanquet similarities $0 < b_2 < b_1 < 1$.}
	\label{fig:angular}
\end{figure}
\medskip

\para{Data-Dependent Hamming LSH.}
The Hamming distance between two $t$-sparse vectors with Braun-Blanquet similarity $b$ is $2t(1-b)$, since the intersection of the vectors has size $tb$.
This means that $(b_1,b_2)$-$B$-similarity search can be reduced to Hamming similarity search with approximation factor $c = (2t(1-b_1))/(2t(1-b_2)) = (1-b_1)/(1-b_2)$.
As mentioned above, the \emph{data dependent} LSH technique of~\cite{andoni2015optimal} achieves $\rho = 1/(2c-1)$ ignoring $o_{n}(1)$ terms. 
In terms of $b_1$ and $b_2$ this is $\rho_\text{datadep} = \frac{1-b_1}{1+b_1-2b_2}$, 
which in incomparable to the $\rho$ of Theorem \ref{thm:upper}.
In Appendix \ref{app:comparison} we show that $\rho < \rho_{\text{datadep}}$ whenever $b_2 \leq 1/5$, or equivalently, whenever $j_2 \leq 1/9$.
Revisiting the above example, for $j_1 = 0.2$ and $j_2 = 0.1$ we have $\rho_\text{datadep} = 0.6875$ which is about $0.043$ more than {\sc Chosen Path}.
Figure~\ref{fig:data} gives a comparison covering the whole parameter space.
\begin{figure}
	\centering
	\includegraphics[width=\columnwidth]{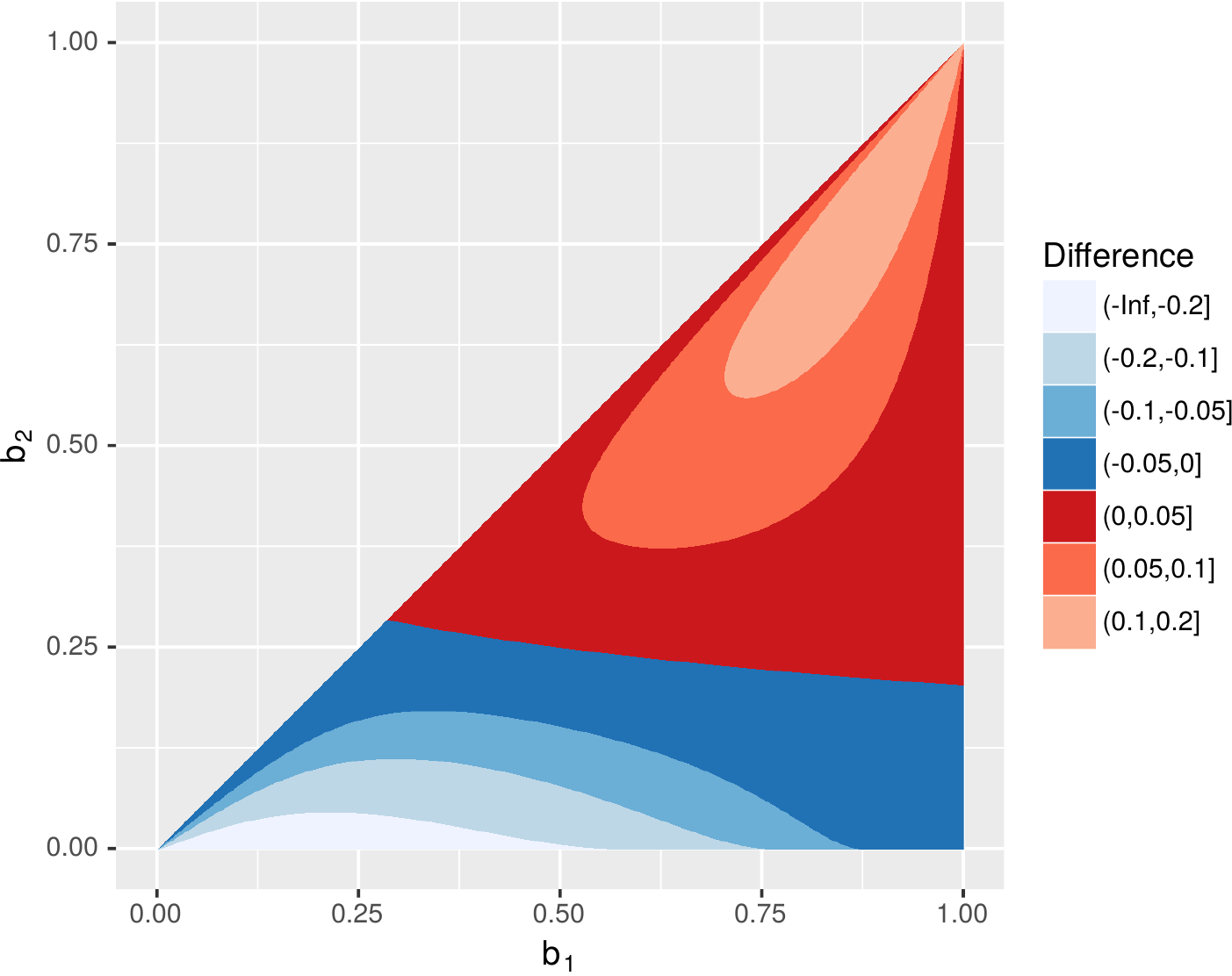}
	\caption{The difference $\rho - \rho_{\text{datadep}}$ comparing \textsc{Chosen Path} and data-dependent LSH in terms of Braun-Blanquet similarities $0 < b_2 < b_1 < 1$. 
	In the area of the parameter space that is colored blue we have that $\rho \leq \rho_{\text{datadep}}$ while for the red area it holds that $\rho > \rho_{\text{datadep}}$.}
	\label{fig:data}
\end{figure}

\section{Lower Bound} \label{sec:lower}
In this section we will show a locality-sensitive hashing lower bound for $\cube{d}$ under Braun-Blanquet similarity. 
We will first show that LSH lower bounds apply to the class of solutions to the approximate similarity search problem that are based on locality-sensitive maps, thereby including our own upper bound.
Next we will introduce some relevant tools from the literature, in particular the LSH lower bounds for Hamming space by O'Donnell et al. \cite{odonnell2014optimal} which we use, 
through a reduction, to show LSH lower bounds under Braun-Blanquet similarity.

\para{Lower Bounds For Locality-Sensitive Maps.}
Because our upper bound is based on a locality-sensitive map $\LSM_{B}$ and not LSH-based we first show that LSH lower bounds apply to LSM-based solutions. 
This is not too surprising as both the LSH and LSF frameworks produce LSM-based solutions.
We note that the idea of showing lower bounds for a more general class of algorithms that encompasses both LSH and LSF was used by Andoni et al.~\cite{andoni2017optimal} 
in their list-of-points data structure lower bound for the space-time tradeoff of solutions to the approximate near neighbor problem in the random data regime.
We use the approach of Christiani \cite{christiani2017framework} to convert an LSM family into an LSH family using MinHash. 

\begin{lemma}\label{lem:lsmtolsh}
	Suppose we have a $(s_1, s_2, m_1, m_2)$-sensitive family of maps $\LSM$ for a similarity measure $S$ on $\cube{d}$. 
	Then we can construct a $(s_1, s_2, p_1, p_2)$-sensitive family of hash functions~$\LSH$ for $S$ such that $p_1 = 1/8m$ and $p_2 = m_{2}/m$ where $m = \lceil 8 m_{1} \rceil$.
\end{lemma}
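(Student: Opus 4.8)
The plan is to collapse each set $M(\x)$ into a single hash value using MinHash, after first normalizing set sizes so that the resulting collision probabilities line up with the claimed $p_1,p_2$. I would sample a map $M \sim \LSM$ together with one global random function $g$ inducing a uniform ordering on the universe (the MinHash randomness). Fix $m = \lceil 8 m_1 \rceil$. For each point $\x$, if $|M(\x)| > m$ I declare failure and set $h(\x)$ to a value unique to $\x$ that can never collide with any other point; otherwise I pad $M(\x)$ up to size exactly $m$ with dummy elements drawn from a universe indexed by $\x$ (so the dummy sets of distinct points are mutually disjoint and disjoint from $R$), forming $\tilde M(\x)$, and set $h(\x) = \argmin_{r \in \tilde M(\x)} g(r)$. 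The purpose of this padding is that $\tilde M(\x) \cap \tilde M(\y) = M(\x) \cap M(\y)$ while $|\tilde M(\x)| = |\tilde M(\y)| = m$, so the MinHash collision probability conditioned on $M$ is exactly the Jaccard similarity $|M(\x)\cap M(\y)| / |\tilde M(\x) \cup \tilde M(\y)|$, whose denominator lies between $m$ and $2m$.

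For the distant case $S(\x,\y) \le s_2$, I would condition on $M$. If either set failed there is no collision; otherwise the conditional collision probability is at most $|M(\x)\cap M(\y)|/m$, since the union has size at least $m$. Taking expectations over $M$ and invoking property~2 of the locality-sensitive map definition gives $\Pr[h(\x)=h(\y)] \le \E[|M(\x)\cap M(\y)|]/m \le m_2/m = p_2$.

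For the close case $S(\x,\y) \ge s_1$, the obstacle is the failure event introduced by discarding oversized sets. By property~1 and Markov's inequality, $\Pr[|M(\x)| > m] \le m_1/m \le 1/8$, so a union bound with property~3 shows that with probability at least $1/2 - 1/8 - 1/8 = 1/4$ the intersection $M(\x)\cap M(\y)$ is nonempty and neither set fails. On this event the conditional MinHash collision probability is at least $1/(2m)$ (intersection size $\ge 1$, union size $\le 2m$), so $\Pr[h(\x)=h(\y)] \ge \tfrac14 \cdot \tfrac{1}{2m} = \tfrac{1}{8m} = p_1$.

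I expect the main obstacle to be the joint size normalization rather than the probability estimates: I need the union sizes bounded below by $m$ to force $p_2$ small, yet I must avoid destroying a shared element when handling large sets. This is precisely why large sets are \emph{discarded} rather than truncated --- discarding cannot accidentally remove the common path, and property~1 together with Markov makes failure rare enough to be absorbed into the constant $1/4$ --- and why the padding uses a per-point dummy universe, so that padding inflates the union without manufacturing spurious intersections.
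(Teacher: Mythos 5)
Your proposal is correct and follows essentially the same route as the paper's proof: both normalize $M(\x)$ to a set of size exactly $m = \lceil 8m_1 \rceil$ via per-point dummy padding, apply MinHash, bound the oversize event by Markov's inequality and combine it with property~3 via a union bound to get $\Pr[h(\x)=h(\y)] \geq \tfrac14 \cdot \tfrac{1}{2m} = 1/8m$, and bound the distant case by $\E[|M(\x)\cap M(\y)|]/m \leq m_2/m$. The only cosmetic difference is that you send oversized sets to a unique non-colliding value, whereas the paper replaces them wholesale by the dummy set $\{(\x,1),\dots,(\x,m)\}$ --- functionally identical, since those dummies never intersect any other point's set.
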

\begin{proof}
We sample a function $h$ from $\LSH$ by sampling a function $M$ from $\LSM$, modify $M$ to output a set of fixed size, and apply MinHash to the resulting set. 
For $M \in \LSM$ we define the function $\tilde{M}$ where we ensure that the size of the output set is $m$. 
We note that the purpose of this step is to be able to simultaneously lower bound $p_1$ and upper bound $p_2$ for $\LSH$ when we apply MinHash to the resulting sets.
\begin{equation*}
\tilde{M}(\x) =
\begin{cases}
\{ (\x, 1), \dots, (\x, m) \} & \text{if } |M(\x)| \geq m, \\
\{ (\x, 1), \dots, (\x, m - |M(\x)|) \} \cup M(\x) & \text{otherwise.} 
\end{cases}
\end{equation*}
We proceed by applying MinHash to the set $\tilde{M}(\x)$. Let $\pi$ denote a random permutation of the range of $\tilde{M}$ and define  
\begin{equation*}
	h(\x) = \argmin_{z \in \tilde{M}(\x)} \pi(z).
\end{equation*}
We then have 
\begin{equation*}
\Pr[h(\x) = h(\y)] = \sum_{\xi} \Pr[ J(\tilde{M}(\x), \tilde{M}(\y)) = \xi] \cdot \xi
\end{equation*}
summing over the finite set of all possible Jaccard similarities $\xi = a/b$ with $a, b \in \{0, 1, \dots, 2m \}$.
It is now fairly simple to lower bound $p_1$ and upper bound $p_2$. 
Assume that $\x, \y$ satisfy that $S(\x, \y) \geq s_1$. 
To lower bound $p_1$ we use a union bound together with Markov's inequality to bound the following probability:
\begin{align*}
	&\Pr[\tilde{M}(\x) \cap \tilde{M}(\y) = \emptyset] \\ 
	&\qquad \leq \Pr[M(\x) \cap M(\y) = \emptyset \wedge |M(\x)| \geq m \wedge |M(\y)| \geq m] \\
&\qquad \leq \Pr[M(\x) \cap M(\y) = \emptyset] + \Pr[|M(\x)| \geq m] + \Pr[|M(\y)| \geq m] \\
&\qquad \leq 1/2 + 1/8 + 1/8 
\end{align*}
We therefore have that $\Pr[\tilde{M}(\x) \cap \tilde{M}(\y) \neq \emptyset] \geq 1/4$. 
In the event of a nonempty intersection the probability of collision is given by $J(\tilde{M}(\x) \cap \tilde{M}(\y)) \geq 1/2m$ allowing us to conclude that $p_1 \geq 1/8m$.

Bounding the collision probability for distant pairs of points $\x, \y$ with $S(\x, \y) \leq s_2$ we get
\begin{equation*}
\sum_{\xi} \Pr[J(\tilde{M}(\x), \tilde{M}(\y)) = \xi] \cdot \xi \leq (1/m) \sum_{i = 1}^{\infty}\Pr[|\tilde{M}(\x) \cap \tilde{M}(\y)|] \cdot i = \frac{m_{2}}{m}.
\end{equation*}
\end{proof}
We are now ready to justify the statement that LSH lower bounds apply to LSM, allowing us to restrict our attention to proving LSH lower bounds for Braun-Blanquet similarity.
\begin{corollary}\label{cor:lsmtolsh}
Suppose that we have an LSM-based solution to the $(s_1, s_2)$-$S$-similarity search problem with query time $O(n^{\rho})$.
Then there exists a family $\LSH$ of locality-sensitive hash functions with $\rho(\LSH) = \rho + O(1/\log n)$.  
\end{corollary}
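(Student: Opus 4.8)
The plan is to feed the hypothesized LSM-based solution through Lemma~\ref{lem:lsmtolsh} and then read off the resulting $\rho$-value by a short calculation. First I would extract the parameters $m_1,m_2$ from the query-time guarantee: since the locality-sensitive map lemma gives query time $O(m_1 + n m_2|\q| + T_M)$, a solution running in time $O(n^{\rho})$ (treating the $|\q|$ and $T_M$ factors as lower-order) must be backed by a $(s_1,s_2,m_1,m_2)$-sensitive family with $m_1 = O(n^{\rho})$ and $m_2 = O(n^{\rho-1})$. Applying Lemma~\ref{lem:lsmtolsh} to this family then yields a $(s_1,s_2,p_1,p_2)$-sensitive hash family with $p_1 = 1/(8m)$ and $p_2 = m_2/m$, where $m = \lceil 8m_1\rceil = O(n^{\rho})$.

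The remaining work is to simplify
\begin{equation*}
\rho(\LSH) = \frac{\log(1/p_1)}{\log(1/p_2)} = \frac{\log(8m)}{\log(m/m_2)}.
\end{equation*}
I would bound the numerator and denominator separately against $\log n$. The numerator obeys $\log(8m) \le \rho\log n + O(1)$ directly from $m = O(n^{\rho})$. For the denominator, $m_2 = O(n^{\rho-1})$ gives $-\log m_2 \ge (1-\rho)\log n - O(1)$, and since $\log m \ge 0$ this yields $\log(m/m_2) \ge (1-\rho)\log n - O(1) = \Omega(\log n)$, using that $\rho<1$ is a constant (which holds because $b_1>b_2$ forces $\log(1/b_1)<\log(1/b_2)$).

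The step I expect to be the crux is showing that the numerator and denominator are in the right \emph{proportion}, not merely both of order $\log n$; otherwise one only gets a bound like $\rho/(1-\rho)$. Concretely, I would verify
\begin{equation*}
\log(8m) - \rho\,\log(m/m_2) = \log 8 + (1-\rho)\log m + \rho\log m_2 = O(1),
\end{equation*}
which follows by substituting $\log m \le \rho\log n + O(1)$ and $\log m_2 \le (\rho-1)\log n + O(1)$ (both multiplications preserve direction since $0<\rho<1$) and observing that the two $\Theta(\log n)$ contributions cancel exactly because $(1-\rho)\rho + \rho(\rho-1) = 0$. Dividing by $\log(m/m_2) = \Omega(\log n)$ gives $\rho(\LSH) \le \rho + O(1/\log n)$, which is precisely the direction needed so that the O'Donnell et al.\ LSH lower bound on $\rho(\LSH)$ transfers to a lower bound on $\rho$. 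The matching inequality $\rho(\LSH) \ge \rho - O(1/\log n)$, and hence the stated equality, holds once the solution is balanced, i.e.\ $m_1 = \Theta(n^{\rho})$ and $m_2 = \Theta(n^{\rho-1})$, in which case both substitutions above hold up to $O(1)$ in each direction and the cancellation is two-sided.
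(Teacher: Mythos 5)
Your proposal is correct and takes essentially the same route as the paper: extract $m_1 = O(n^{\rho})$ and $m_2 = O(n^{\rho-1})$ from the query-time guarantee of the LSM framework, apply Lemma~\ref{lem:lsmtolsh} to get $p_1 \geq 1/8m$ and $p_2 \leq m_2/m$ with $m = \lceil 8 m_1 \rceil$, and read off the resulting $\rho$-value. The paper's proof leaves the final $\log$-ratio calculation implicit, and your explicit cancellation argument --- together with the correct observation that only the one-sided bound $\rho(\LSH) \leq \rho + O(1/\log n)$ is what the lower-bound transfer actually requires --- fills in exactly the step the paper omits.
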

\begin{proof}
	The existence of the LSM-based solution implies that for every $n$ there exists a $(s_1, s_2, m_1, m_2)$-sensitive family of maps $\LSM$ with $m_1 = O(n^{\rho})$ and $nm_2 = O(n^{\rho})$.
	The upper bound on $\rho$ follows from applying Lemma \ref{lem:lsmtolsh}.  
\end{proof}

\para{LSH Lower Bounds for Hamming Space.}
There exist a number of powerful results that lower bound the $\rho$-value that is attainable by locality-sensitive hashing and related approaches in various settings 
\cite{motwani2007, panigrahy2010lower, odonnell2014optimal, andoni2016tight, christiani2017framework, andoni2017optimal}.
O'Donnell et al. \cite{odonnell2014optimal} showed an LSH lower bound of $\rho = \log(1/p_1) / \log(1/p_2) \geq 1/c - o_{d}(1)$
for $d$-dimensional Hamming space under the assumption that $p_2$ is not too small compared to $d$, i.e.,~$\log(1/p_2) = o(d)$.
The lower bound by O'Donnell et al.\ holds for $(r, cr, p_1, p_2)$-sensitive families for a particular choice of $r$ that depends on $d$, $p_2$, and $c$, 
and where $r$ is small compared to $d$ (for instance, we have that $r = \tilde{\Theta}(d^{2/3})$ when $c$ and $p_2$ are constant).

\medskip

We state a simplified version of the lower bound due to O'Donnell et al.~where $r = \sqrt{d}$ that we will use as a tool to prove our lower bound for Braun-Blanquet similarity. 
The full proof of Lemma \ref{lem:owzsimple} is given in Appendix \ref{app:lower}.
\begin{lemma} \label{lem:owzsimple}
	For every $d \in \mathbb{N}$, $1/d \leq p_2 \leq 1 - 1/d$, and $1 \leq c \leq d^{1/8}$ 
	every $(\sqrt{d}, c\sqrt{d}, p_1, p_2)$-sensitive hash family $\LSH$ for $\cube{d}$ under Hamming distance must have 
	\begin{equation}
		\rho(\LSH) = \frac{\log(1/p_{1})}{\log(1/p_{2})} \geq \frac{1}{c} - O(d^{-1/4}).
	\end{equation}
\end{lemma}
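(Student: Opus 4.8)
The plan is to prove this as the instantiation at $r=\sqrt d$ of the O'Donnell--Wimmer--Zhou argument, whose engine is Fourier analysis on the hypercube together with the fact that both radii $r=\sqrt d$ and $cr=c\sqrt d$ are small compared to $d$. I would work on $\{-1,1\}^d$ under the uniform measure and, for a fixed $h$ in the family, introduce the bucket indicators $f_r=\1\{h=r\}$ and the level-$k$ Fourier weights $A_k(h)=\sum_r\sum_{|S|=k}\hat f_r(S)^2\ge 0$. For a pair $(\x,\y)$ in which $\y$ is obtained from uniform $\x$ by flipping each coordinate independently with probability $\delta$ (correlation $\alpha=1-2\delta$, noise operator $T_\alpha$), the per-$h$ collision probability equals $\sum_r\langle f_r,T_\alpha f_r\rangle=\sum_k\alpha^k A_k(h)$, so the averaged collision probability is
\[ \bar C(\alpha)=\E_h\Big[\textstyle\sum_{k=0}^{d}\alpha^k A_k(h)\Big]. \]
The decisive structural point is that, since the buckets partition the cube, Parseval gives $\sum_k A_k(h)=\sum_r\norm{f_r}_2^2=\sum_r\Pr_\x[h(\x)=r]=1$ for \emph{every} $h$; hence for each $h$ the sequence $(A_k(h))_k$ is a genuine probability distribution and $\sum_k\alpha^k A_k(h)=\E_{k\sim A(h)}[\alpha^k]$. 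This is what collapses an arbitrary hash family down to a single scalar functional of $\alpha$.

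Given this, the lower bound reduces to one convexity estimate. I would choose flip rates $\delta_1<\delta_2$ with $\delta_2\approx c\,\delta_1$, so that $\alpha_2\approx\alpha_1^{c}$ (valid because $\delta_i=o(1)$ makes $\log\alpha_i\approx-2\delta_i$). Using convexity of $x\mapsto x^{c}$ for $c\ge1$, for each fixed $h$,
\[ \sum_k\alpha_1^{ck}A_k(h)=\E_{k\sim A(h)}\big[(\alpha_1^{k})^{c}\big]\ge\big(\E_{k\sim A(h)}[\alpha_1^{k}]\big)^{c}=\Big(\textstyle\sum_k\alpha_1^{k}A_k(h)\Big)^{c}, \]
and a second application of Jensen across the average over $h$ yields $\bar C(\alpha_2)\ge\bar C(\alpha_1)^{c}$. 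This is exactly the relation $p_2\gtrsim p_1^{c}$ which, after taking logarithms, forces $\rho(\LSH)=\log(1/p_1)/\log(1/p_2)\ge 1/c$.

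It remains to connect $\bar C(\alpha_1),\bar C(\alpha_2)$ to the sensitivity parameters, and this is where the error term and the hypotheses enter. Taking $\delta_1 d\approx r$ and $\delta_2 d\approx cr$ concentrates the Hamming distance near $r$ and $cr$. Since $(\sqrt d,c\sqrt d,p_1,p_2)$-sensitivity is a \emph{pointwise} guarantee, I would bound $\bar C(\alpha_1)\ge p_1\Pr[\norm{\x-\y}_1\le r]$ from below (discarding pairs beyond radius $r$) and $\bar C(\alpha_2)\le p_2+\Pr[\norm{\x-\y}_1<cr]$ from above (bounding collision by $1$ on the pairs falling below radius $cr$, where no guarantee holds); chaining with $\bar C(\alpha_2)\ge\bar C(\alpha_1)^{c}$ gives $p_2+(\text{lower tail})\ge\big(p_1\cdot(\text{upper truncation})\big)^{c}$. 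The main obstacle, and the technically delicate part, is controlling these binomial tails rather than the Fourier identity or the convexity step. The noise-induced distance is $\mathrm{Binomial}(d,\delta)$ with mean $\asymp\sqrt d$ and standard deviation $\asymp d^{1/4}$, hence relative width $\Theta(d^{-1/4})$; offsetting the two means enough to make the near-side truncation probability close to $1$ and the far-side lower tail negligible requires a relative margin of order $d^{-1/4}$ in the noise rates, which perturbs the effective ratio $\delta_2/\delta_1$, and thus the effective approximation factor, by $1+O(d^{-1/4})$ — precisely the source of the $-O(d^{-1/4})$ in the exponent. The hypothesis $p_2\ge1/d$ is what allows the far-side lower tail to be driven below $p_2$ (an arbitrarily small $p_2$ could never be separated from a constant tail), while $c\le d^{1/8}$ keeps $cr\ll d$ so that the linearization $\log(1-2\delta)\approx-2\delta$ and the Chernoff estimates stay accurate. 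The full O'Donnell et al. argument carries out this bookkeeping, and the appendix specializes it to $r=\sqrt d$.
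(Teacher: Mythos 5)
Your proposal is correct and follows essentially the same route as the paper's proof in Appendix~\ref{app:lower}: there too the argument passes to noise-correlated pairs, invokes log-convexity of the collision kernel $\K_{\LSH}(t)$ (Lemma~\ref{lem:logconvexity}, imported from O'Donnell et al., whose proof is exactly your Fourier-weights-plus-two-applications-of-Jensen computation), and ties $\K_{\LSH}$ to $p_1,p_2$ by Chernoff truncation at offset flip rates --- the only real difference is that you inline the proof of the log-convexity lemma that the paper cites as a black box, and your reading of where the hypotheses $p_2 \geq 1/d$ and $c \leq d^{1/8}$ enter matches the paper's use of them. One quantitative caveat in your tail bookkeeping: a relative margin of order $d^{-1/4}$ is only a constant number of standard deviations of $\mathrm{Binomial}(d,\delta)$ and would leave both truncation tails at $\Theta(1)$, which cannot be absorbed --- on the far side a constant tail swamps $p_2$ when $p_2$ is as small as $1/d$, and the additive slop is fatal in the ratio of logarithms when $\ln(1/p_2)$ is as small as $\Theta(1/d)$; the paper accordingly offsets the flip rates by $d^{-5/8}$ (relative margin $d^{-1/8}$, not $d^{-1/4}$), driving the tails to $\delta = e^{-\Omega(d^{1/4})}$ so that the error terms of the form $3\delta/\bigl(\sen{t_{p_2}}\ln(1/\sen{t_{p_2}})\bigr)$ are negligible even against $\ln(1/\sen{t_{p_2}}) \geq 1/2d$.
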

In general, good lower bounds for the entire parameter space $(r, cr)$ are not known, 
although the techniques by O'Donnell et al. appear to yield a bound of $\rho \gtrsim \log(1-2r/d)/\log(1-2cr/d)$.
This is far from tight as can be seen by comparing it to the bit-sampling \cite{indyk1998} upper bound of $\rho = \log(1-r/d)/\log(1-cr/d)$. 
Existing lower bounds are tight in two different settings. 
First, in the setting where $cr \approx d/2$ (random data), lower bounds \cite{motwani2007, dubiner2010bucketing, andoni2016tight} 
match various instantiations of angular LSH \cite{terasawa2007spherical, andoni2014beyond, andoni2015practical}. 
Second, in the setting where $r \ll d$, the lower bound by O'Donnell et al.~\cite{odonnell2014optimal} becomes $\rho \gtrsim \log(1-2r/d)/\log(1-2cr/d) \approx 1/c$, 
matching bit-sampling LSH~\cite{indyk1998} as well as Angular LSH.
\subsection{Braun-Blanquet LSH Lower Bound}
We are now ready to prove the LSH lower bound from Theorem \ref{thm:lower}.
The lower bound together with Corollary \ref{cor:lsmtolsh} shows that the $\rho$-value of Theorem \ref{thm:upper} 
is best possible up to $o_{d}(1)$ terms within the class of data-independent locality-sensitive maps for Braun-Blanquet similarity.
Furthermore, the lower bound also applies to angular distance on the unit sphere where it comes close to matching the best known upper bounds 
for much of the parameter space as can be seen from Figure~\ref{fig:angular}. 

\medskip

\para{Proof Sketch.}
The proof works by assuming the existence of a $(b_1, b_2, p_1, p_2)$-sensitive family $\LSH_{B}$ 
for $\cube{d}$ under Braun-Blanquet similarity with $\rho = \log(1/b_1)/\log(1/b_2) - \gamma$ for some $\gamma > 0$.
We use a transformation $T$ from Hamming space to Braun-Blanquet similarity to show that the existence of $\LSH_{B}$ implies the existence of a $(r, cr, p_{1}', p_{2}')$-sensitive 
family $\LSH_{H}$ for $D$-dimensional Hamming space that will contradict the lower bound of O'Donnell et al.~\cite{odonnell2014optimal} 
as stated in Lemma \ref{lem:owzsimple} for some appropriate choice of $\gamma = \gamma(d, p_{2})$.

We proceed by giving an informal description of a simple ``tensoring'' technique for converting a similarity search problem in Hamming space 
into a Braun-Blanquet set similarity problem for target similarity thresholds $b_1, b_2$.  
For $\x \in \cube{d}$ define 
$$\tilde{\x} = \{(i, \x_{i}) \mid i \in [d] \}$$ 
and for a positive integer $\tau$ define $\x^{\otimes \tau} = \{ (v_1, \dots, v_\tau) \mid v_i \in \tilde{\x} \}$.
We have that $|\x^{\otimes \tau}| = |\tilde{\x}|^{\tau} = d^\tau$ 
and 
$$B(\x^{\otimes \tau}, \y^{\otimes \tau}) = |\tilde{\x} \cap \tilde{\y}|^{\tau} / |\tilde{\x}|^\tau = (1 - r/d)^\tau$$ 
where $r = \norm{\x - \y}_{1}$.
For every choice of constants $0 < b_2 < b_1 < 1$ we can choose $d$, $\tau$, $r$, and $c \geq 1$ such that $(1 - r/d)^\tau \approx b_1$ and $(1 - cr/d)^\tau \approx b_2$.
Now, if there existed an LSH family for Braun-Blanquet with $\rho < \log(1/b_1)/\log(1/b_2)$ we would be able to obtain an LSH family for Hamming space with 
$$\rho < \log(1/b_1)/\log(1/b_2) = \log(1/(1 - r/d))/\log(1/(1-cr/d)) \leq 1/c .$$
For appropriate choices of parameters this would contradict the O'Donnell et al.~LSH lower bound of $\rho \gtrapprox 1/c$ for Hamming space. 
The proof itself is mostly an exercise in setting parameters and applying the right bounds and approximations to make everything fit together with the intuition above.
Importantly, we use sampling in order to map to a dimension that is much lower than the $d^\tau$ from the proof sketch in order to make the proof hold for small values of $p_2$ in relation to $d$. 

\medskip

\para{Hamming to Braun-Blanquet Similarity.}
Let $d \in \mathbb{N}$ and let $0 < b_2 < b_1 < 1$ be constant as in Theorem \ref{thm:lower}.
Let $\varepsilon \geq 1/d$ be a parameter to be determined.
We want to show how to use a transformation $T \colon \cube{D} \to \cube{d}$ from Hamming distance to Braun-Blanquet similarity
together with our family $\LSH_{B}$ to construct a $(r, cr, p_{1}', p_{2}')$-sensitive family $\LSH_{H}$ for $D$-dimensional Hamming space with parameters
\begin{align*}
D &= 2^d \\
r &= \sqrt{D} \\
c &= \frac{\ln(1/(b_2 - \varepsilon))}{\ln(1/(b_1 + \varepsilon))}
\end{align*}
where $p_{1}'$ and $p_{2}'$ remain to be determined.

The function $T$ takes as parameters positive integers $t$, $l$, and~$\tau$.
The output of $T$ consists of $t$ concatenated $l$-bit strings, each of of Hamming weight one.
Each of the $t$ strings is constructed independently at random according to the following process:
Sample a vector of indices $\vect{i} = (i_1, i_2, \dots, i_{\tau})$ uniformly at random from $[D]^{\tau}$ and define 
$\x_{\vect{i}} \in \cube{\tau}$ as $\x_{\vect{i}} = \x_{i_{1}} \circ \x_{i_{2}} \circ \dots \circ \x_{i_{\tau}}$.
Let $\z(\x) \in \cube{2^{\tau}}$ be indexed by $j \in \cube{\tau}$ and set the bits of $\z(\x)$ as follows: 
\begin{equation*}
	\z(\x)_{j} =
	\begin{cases}
		1 & \text{if } \x_{\vect{i}} = j, \\
		0 & \text{otherwise.} 
	\end{cases}
\end{equation*}
Next we apply a random function $g \colon \cube{\tau} \to [l]$ in order to map $\z(\x)$ down to an $l$-bit string $\vect{r}(\z(\x))$ 
of Hamming weight one while approximately preserving Braun-Blanquet similarity. For $i \in [l]$ we set
\begin{equation*}
	\vect{r}(\z(\x))_{i} = \bigvee_{j : g(j) = i} \z(\x)_{j}.
\end{equation*}
Finally we set 
\begin{equation*}
	T(\x) = \vect{r}_{1}(\z_{1}(\x)) \circ \vect{r}_{2}(\z_{2}(\x)) \circ \dots \circ \vect{r}_{t}(\z_{t}(\x))
\end{equation*}
where each $\vect{r}_{i}(\z_{i}(\x))$ is constructed independently at random.

We state the properties of $T$ for the following parameter setting:
\begin{align*}
\tau &= \lfloor \sqrt{D} \ln(1/(b_1 + \varepsilon)) \rfloor \\
l &= \lceil 8/\varepsilon \rceil \\
t &= \lfloor d/l \rfloor.
\end{align*}
\begin{lemma}
	For every $d \in \mathbb{N}$ and $D = 2^{d}$ there exists a distribution over functions of the form $T \colon \cube{D} \to \cube{d}$ such that for all $\x, \y \in \cube{D}$ and random $T$:
\begin{enumerate}
	\item $|T(\x)| = t$.
	\item If $\norm{\x-\y}_1 \leq r$ then $B(T(\x), T(\y)) \geq {b_1}$ with probability at least $1 - e^{-t\varepsilon^{2}/2}$.
	\item If $\norm{\x-\y}_1 > cr$ then $B(T(\x), T(\y)) < {b_2}$ with probability at least $1 - 2e^{t\varepsilon^{2}/32}$.
\end{enumerate}
\end{lemma}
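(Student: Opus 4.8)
The plan is to reduce the Braun-Blanquet similarity $B(T(\x), T(\y))$ to an average of independent indicator variables and then apply a concentration inequality in each of the two regimes. Property~1 comes for free from the construction: each of the $t$ blocks outputs an $l$-bit string $\vect{r}_{s}(\z_{s}(\x))$ of Hamming weight exactly one, so concatenating $t$ of them gives $|T(\x)| = t$ deterministically. In particular $\max(|T(\x)|, |T(\y)|) = t$, so that $B(T(\x), T(\y)) = |T(\x) \cap T(\y)|/t$, and the whole problem reduces to understanding the size of the intersection.

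Next I would express the intersection as a sum over blocks. Because the $t$ blocks are built from independent randomness, I would write $|T(\x) \cap T(\y)| = \sum_{s=1}^{t} Z_{s}$, where $Z_{s}$ is the indicator that block $s$ places its single one in the same coordinate for both $\x$ and $\y$, i.e.\ that $g_{s}(\x_{\vect{i}_{s}}) = g_{s}(\y_{\vect{i}_{s}})$. The $Z_{s}$ are i.i.d., and I would compute their common mean as follows. Writing $q = (1 - \norm{\x-\y}_1/D)^{\tau}$ for the probability that the $\tau$ sampled coordinates all agree between $\x$ and $\y$ (a product of $\tau$ independent agreements, each occurring with probability $1 - \norm{\x-\y}_1/D$ since a uniformly sampled index hits an agreeing coordinate with this chance), a match occurs with probability $1$ when $\x_{\vect{i}_{s}} = \y_{\vect{i}_{s}}$ and with probability $1/l$ otherwise, because $g_{s}$ is a random map into $[l]$. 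Hence each block matches with probability
\begin{equation*}
	p = q + (1 - q)/l ,
\end{equation*}
and $B(T(\x), T(\y)) = \frac{1}{t}\sum_{s=1}^{t} Z_{s}$ has expectation exactly $p$.

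With this reduction in hand, both remaining properties follow from Hoeffding's inequality applied to the bounded i.i.d.\ average $B(T(\x), T(\y))$, once the per-block mean $p$ is separated from the thresholds by a margin of order $\varepsilon$. For the close case $\norm{\x-\y}_1 \leq r = \sqrt{D}$, I would lower bound $q \geq (1 - 1/\sqrt{D})^{\tau}$ and use $\tau = \lfloor \sqrt{D}\ln(1/(b_1+\varepsilon))\rfloor$ together with $(1-1/\sqrt{D})^{\sqrt{D}} \approx 1/e$ to show $p \geq q \geq b_1 + \varepsilon/2$; Hoeffding then gives $\Pr[B(T(\x), T(\y)) < b_1] \leq e^{-t\varepsilon^2/2}$. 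For the far case $\norm{\x-\y}_1 > cr = c\sqrt{D}$, the definition $c = \ln(1/(b_2-\varepsilon))/\ln(1/(b_1+\varepsilon))$ is chosen precisely so that the two exponents cancel and $q \leq b_2 - \varepsilon/2$; the additive collision term is controlled by $l = \lceil 8/\varepsilon\rceil$, giving $p \leq q + 1/l \leq b_2 - \varepsilon/2 + \varepsilon/8 < b_2 - \varepsilon/4$, and Hoeffding yields $\Pr[B(T(\x), T(\y)) \geq b_2] \leq 2e^{-t\varepsilon^2/32}$ (the stated bound; note the exponent in property~3 should read $-t\varepsilon^2/32$).

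The main obstacle will be the bookkeeping in these last two estimates rather than any conceptual difficulty: I must convert $(1 - x)^{\tau}$ into $e^{-x\tau}$, absorb the floor in $\tau$ and the second-order term in $\ln(1-x)$, and verify that the resulting correction terms --- all of order $1/\sqrt{D} = 2^{-d/2}$ --- are dominated by the $\Theta(\varepsilon)$ slack deliberately built into the choices of $\tau$, $c$, and $l$, using that $\varepsilon \geq 1/d$ while $D = 2^d$ is exponentially large. Checking that these inequalities point the right way in both the close and far regimes simultaneously is the one place where the parameter setting must be carried out with care.
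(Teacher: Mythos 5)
Your proposal is correct and follows essentially the same route as the paper's proof: it reduces property~1 to the fixed block structure, expresses $|T(\x)\cap T(\y)|$ as a sum of $t$ independent per-block indicators, bounds the agreement probability $(1-\norm{\x-\y}_1/D)^{\tau}$ via the standard exponential approximation to get margins $b_1+\varepsilon/2$ and $b_2-\varepsilon/2$, and applies Hoeffding's inequality in each regime. The one harmless deviation is that in the far case you fold the $g$-collisions into the exact per-block mean $p = q + (1-q)/l$ and apply Hoeffding once, whereas the paper union-bounds two separate deviation events (the $\z$-agreement count $Z$ and the collision count $X$), so your version actually yields a slightly stronger exponent than the stated $2e^{-t\varepsilon^{2}/32}$; you are also right that property~3 as printed has a sign typo and should read $1 - 2e^{-t\varepsilon^{2}/32}$.
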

\begin{proof}
The first property is trivial. 
For the second property we consider $\x, \y$ with $\norm{\x - \y}_1 \leq r$ where we would like to lower bound  
\begin{equation*}
B(T(\x), T(\y)) = \frac{|T(\x) \cap T(\y)|}{\max(|T(\x)|, |T(\y)|)}.
\end{equation*}
We know that $|T(\x)| = |T(\y)| = t$ so it remains to lower bound the size of the intersection $|T(\x) \cap T(\y)|$.
Consider the expectation
\begin{equation*}
	\E[|T(\x) \cap T(\y)|] = t\Pr[\z(\x) = \z(\y)].
\end{equation*}
We have that $\z(\x) = \z(\y)$ if $\x$ and $\y$ take on the same value in the $\tau$ underlying bit-positions that are sampled to construct $\z$.
Under the assumption that $\varepsilon \geq 1/d$, 
then for $d$ greater than some sufficiently large constant we can use a standard approximation to the exponential function (detailed in Lemma \ref{lem:exp} in Appendix \ref{app:lower}) to show that
\begin{align*}
	\Pr[\z(\x) = \z(\y)] &\geq (1 - r/D)^\tau \\ 
						 &\geq (1 - 1/\sqrt{D})^{\sqrt{D} \ln(1/(b_1 + \varepsilon))} \\
						 &\geq e^{\ln(b_1 + \varepsilon)}(1 - (\ln(b_1 + \varepsilon))^2 / \sqrt{D}) \\
						 &\geq b_1 + \varepsilon/2.
\end{align*}
Seeing as $|T(\x) \cap T(\y)|$ is the sum of $t$ independent Bernoulli trials we can apply Hoeffding's inequality to yield the following bound:
\begin{equation*}
	\Pr[|T(\x) \cap T(\y)| \leq b_1 t] \leq e^{-t \varepsilon^{2}/2}. 
\end{equation*}
This proves the second property of $T$.

For the third property we consider the Braun-Blanquet similarity of distant pairs of points $\x, \y$ with $\norm{\x - \y}_{1} > cr$.
Again, under our assumption that $\varepsilon \geq 1/d$ and for $d$ greater than some constant we have
\begin{align*}
	\Pr[\z(\x) = \z(\y)] &\leq (1 - cr/D)^\tau \\ 
																	  &\leq \frac{\left(1 - \frac{\ln(1/(b_2 - \varepsilon))}{\sqrt{D} \ln(1/(b_1 + \varepsilon))}\right)^{\sqrt{D} \ln(1/(b_1 + \varepsilon))}} {1 - c/\sqrt{D} } \\
						  &\leq (1 + 2c/\sqrt{D})(b_2 - \varepsilon) \\
						  &\leq b_2 - \varepsilon/2.
\end{align*}
There are two things that can cause the event $B(T(\x), T(\y)) < {b_2}$ to fail.
First, the sum of the $t$ independent Bernoulli trials for the event $\z(\x) = \z(\x')$ can deviate too much from its expected value.
Second, the mapping down to $l$-bit strings that takes place from $\z(\x)$ to $\vect{r}(\z(\x))$ can lead to an additional increase in the similarity due to collisions.
Let $Z$ denote the sum of the~$t$ Bernoulli trials for the events $\z(\x) = \z(\x')$ associated with $T$. 
We again apply a standard Hoeffding bound to show that
\begin{equation*}
	\Pr[Z \geq (b_2 - \varepsilon/4)t] \leq e^{-t\varepsilon^{2}/8}.
\end{equation*}
Let $X$ denote the number of collisions when performing the universe reduction to $l$-bit strings.
By our choice of $l$ we have that $E[X] \leq (\varepsilon/8)t$. Another application of Hoeffding's inequality shows that
\begin{equation*}
	\Pr[X \geq (\varepsilon/4)t] \leq e^{-t\varepsilon^{2}/32}.
\end{equation*}
We therefore get that
\begin{equation*}
	\Pr[|T(\x) \cap T(\x')| \geq b_2 t] \leq 2e^{-t \varepsilon^{2}/32}. 
\end{equation*}
This proves the third property of $T$.
\end{proof}
\para{Contradiction.}
To summarize, using the random map $T$ together with the LSH family $\LSH_{B}$ we can obtain an $(r, cr, {p_{1}}', {p_{2}}')$-sensitive family $\LSH_{H}$ for $D$-dimensional Hamming space 
with ${p_{1}}' = {p_{1}} - \delta$ and ${p_{2}}' = {p_{2}} + \delta$ for $\delta = 2e^{-t \varepsilon^{2}/32}$. 
For our choice of $c = \frac{\ln(1/(b_2 - \varepsilon))}{\ln(1/(b_1 + \varepsilon))}$ we plug the family $\LSH_{H}$ into the lower bound of Lemma \ref{lem:owzsimple} 
and use that $O(D^{-1/4}) = O(\varepsilon)$ which follows from our constraint that $\varepsilon \geq 1/d$.
\begin{align*}
	\rho(\LSH_{H}) &\geq 1/c - O(D^{-1/4}) \\
				   &= \frac{\ln(1/(1 + \varepsilon/b_{1})) + \ln(1/b_{1})}{\ln(1/(1-\varepsilon/b_{2})) + \ln(1/b_2)} - O(\varepsilon)\\
				   &\geq \frac{\ln(1/b_{1}) - \varepsilon/b_{1}}{\ln(1/b_2) + 2\varepsilon/b_{2}} - O(\varepsilon)\\
				   &= \frac{\ln(1/b_{1})}{\ln(1/b_2)} - O(\varepsilon) 
\end{align*}
Under our assumed properties of $\LSH_{B}$, we can upper bound the value of $\rho$ for $\LSH_{H}$. 
For simplicity we temporarily define $\lambda = 2\delta/{p_{2}}$ and assume that $\lambda / \ln(1/{p_{2}}) \leq 1/2$ and $\ln(1/{p_{2}}) \geq 1$.
The latter property holds without loss of generality through use of the standard LSH powering technique \cite{indyk1998, har-peled2012, odonnell2014optimal} 
that allows us to transform an LSH family with ${p_{2}} < 1$ to a family that has ${p_{2}} \leq 1/e$ without changing its associated $\rho$-value. 
\begin{align*}
	\rho(\LSH_{H}) &= \frac{\ln(1/{p_{1}}')}{\ln(1/{p_{2}}')} = \frac{ \ln(1/{p_{1}}) + \ln(1/(1-\delta/{p_{1}}))}{\ln(1/{p_{2}}) + \ln(1/(1 + \delta/{p_{2}}))} \\
				   &\leq \frac{\ln(1/{p_{1}}) + \lambda}{\ln(1/{p_{2}}) - \lambda} = \frac{\ln(1/{p_{1}}) + \lambda}{(\ln 1/{p_{2}} )(1 - \lambda / (\ln 1/{p_{2}}))} \\
				   &\leq \frac{\ln(1/{p_{1}}) + \lambda}{\ln(1/{p_{2}})}(1 +  2 \lambda / (\ln 1/{p_{2}})) = \frac{\ln(1/{p_{1}})}{\ln(1/{p_{2}})} + O(\delta / {p_{2}}) \\
				   &\leq \frac{\ln(1/b_{1})}{\ln(1/b_{2})} - \gamma  + O(\delta / {p_{2}}). 
\end{align*}
We get a contradiction between our upper bound and lower bound for $\rho(\LSH_{H})$ whenever $\gamma$ violates the following relation that summarizes the bounds:
\begin{equation*}
\frac{\ln(1/b_{1})}{\ln(1/b_2)} - O(\varepsilon) \leq  \rho(\LSH_{H}) \leq \frac{\ln(1/b_{1})}{\ln(1/b_{2})} - \gamma  + O(\delta / {p_{2}}).
\end{equation*}
In order for a contradiction to occur, the value of $\gamma$ has to satisfy
\begin{equation*}
\gamma > O(\varepsilon) + O(\delta / {p_{2}}).
\end{equation*}
By our setting of $t = \lfloor d/l \rfloor$ and $l = \lceil 8/\varepsilon \rceil$ we have that $\delta = e^{-\Omega(d\varepsilon^{3})}$.
We can cause a contradiction for a setting of $\varepsilon^{3} = K\frac{\ln(d/{p_{2}})}{d}$ where $K$ is some constant and where we assume that $d$ is greater than some constant.
The value of $\gamma$ for which the lower bound holds can be upper bounded by
\begin{equation*}
	\gamma = O\left(\frac{\ln(d/{p_{2}})}{d}\right)^{1/3}.
\end{equation*}
This completes the proof of Theorem \ref{thm:lower}.

\section{Equivalent Set Similarity Problems}\label{sec:equivalence}
%
In this section we consider how to use our data structure for Braun-Blanquet similarity search to support other similarity measures such as Jaccard similarity.
We already observed in the introduction that a direct translation exists between several similarity measures whenever the size of every sets is fixed to $t$.
Call an $(s_1,s_2)$-$S$-similarity search problem \emph{($t$,$t'$)-regular} if $P$ is restricted to vectors of weight $t$ and queries are restricted to vectors of weight~$t'$.
Obviously, a $(t,t')$-regular similarity search problem is no harder than the general similarity search problem, but it also cannot be too much easier when expressed as a function of the thresholds $(s_1,s_2)$:
For every pair $(t,t') \in \{0,\dots,d\}^2$ we can construct a ($t$,$t'$)-regular data structure (such that each point $\x \in P$ is represented in the $d+1$ data structures with $t=|\x|$), 
and answer a query for $\q\in\{0,1\}^d$ by querying all data structures with $t'=|\q|$.
Thus, the time and space for the general $(s_1,s_2)$-$S$-similarity search problem is at most $d+1$ times larger than the time and space of the most expensive ($t$,$t'$)-regular data structure.
This does \emph{not} mean that we cannot get better bounds in terms of other parameters, and in particular we expect that $(t,t')$-regular similarity search problems have difficulty that depends on parameters $t$ and $t'$.

\medskip

\para{Dimension Reduction.}
If the dimension is large a factor of $d$ may be significant.
However, for most natural similarity measures a $(s_1,s_2)$-$S$-similarity problem in $d\gg (\log n)^3$ dimensions can be reduced to a logarithmic number of $(s'_1,s'_2)$-$S$-similarity problems 
on $P'\subseteq \{0,1\}^{d'}$ in $d'=(\log n)^3$ dimensions with $s'_1 = s_1-O(1/\log n)$ and $s'_2 = s_2+O(1/\log n)$.
Since the similarity gap is close to the one in the original problem, $s'_1 - s'_2 = s_1 - s_2 - O(1/\log n)$, where $s_1$ and $s_2$ are assumed to be independent of $n$, the difficulty ($\rho$-value) remains essentially the same.
First, split $P$ into $\log d$ size classes $P_i$ such that vectors in class $i$ have size in $[2^i;2^{i+1})$.
For each size class the reduction is done independently and works by a standard technique: 
sample a sequence of random sets $I_j\subseteq \{1,\dots,d\}$, $i=1,\dots,d'$, and set $\x'_j = \vee_{\ell\in I_j} \x_\ell$.
The size of each set $I_j$ is chosen such that $Pr[\x'_j = 1] \approx 1/\log(n)$ when $|\x|=2^{i+1}$.
By Chernoff bounds this mapping preserves the relative weight of vectors up to size $2^i \log n$ up to an additive $O(1/\log n)$ term with high probability.
Assume now that the similarity measure $S$ is such that for vectors in $P_i$ we only need to consider $|\q|$ in the range from $2^i/\log n$ to $2^i \log n$ (since if the size difference is larger, the similarity is negligible).
The we can apply Chernoff bounds to the relative weights of the dimension-reduced vectors $\x'$, $\q'$ and the intersection $\x' \cap \q'$.
In particular, we get that the Jaccard similarity of a pair of vectors is preserved up to an additive error of $O(1/\log n)$ with high probability.
The class of similarity measures for which dimension reduction to $(\log n)^{O(1)}$ dimensions is possible is large, and we do not attempt to characterize it here.
Instead, we just note that for such similarity measures we can determine the complexity of similarity search up to a factor $(\log n)^{O(1)}$ by only considering regular search problems.

\medskip

\para{Equivalence of Regular Similarity Search Problems.}
We call a set similarity measure on $\{0,1\}^d$ \emph{symmetric} if it can be written in the form $S(\q,\x) = f_{d,|\q|,|\x|}(|\q\cap\x|)$, 
where each function $f_{d,|\q|,|\x|} \colon \mathbb{N} \rightarrow [0;1]$ is nondecreasing.
All 59 set similarity measures listed in the survey~\cite{choi2010survey}, normalized to yield similarities in $[0;1]$, are symmetric.
In particular this is the case for Jaccard similarity (where $J(\q,\x) = |\q\cap\x| /(|\q|+|\x|-|\q\cap\x|)$) and for Braun-Blanquet similarity.
For a symmetric similarity measure $S$, the predicate $S(\q,\x)\geq s_1$ is equivalent to the predicate $|\q\cap\x|\geq i_1$, where $i_1 = \min \{ i \; | \; f_{d,t',t}(i)\geq s_1 \}$, 
and $S(\q,\x) > s_2$ is equivalent to the predicate $|\q\cap\x|\geq i_2$, where $i_2 = \min \{ i \; | \; f_{d,t',t}(i) > s_2 \}$.
This means that every ($t$,$t'$)-regular $(s_1,s_2)$-$S$-similarity search problem on $P\subseteq \{0,1\}^d$ is equivalent to an $(i_1/d,i_2/d)$-$I$-similarity search problem on $P$, where $I(\q,\x)=|\x\cap\q|/d$.
In other words, all symmetric similarity search problems can be translated to each other, and it suffices to study a single one, such as Braun-Blanquet similarity.

\medskip

\para{Jaccard similarity.}
We briefly discuss Jaccard similarity since it is the most widely used measure of set similarity.
If we consider the problem of $(j_1, j_2)$-approximate Jaccard similarity search in the 
$(t,t')$-regular case with $t\ne t'$
then our Theorem \ref{thm:upper} is no longer guaranteed to yield the lowest value of $\rho$ among competing data-independent approaches such as MinHash and Angular LSH.
To simplify the comparision between different measures we introduce parameters $\beta$ and $b$ defined by $|\y| = \beta |\x|$ and $b = |\x \cap \y|/|\x|$ (note that $0 \leq b \leq \beta \leq 1$).
The three primary measures of set similarity considered in this paper can then be written as follows:
\begin{align*}
	B(\x, \y) &= b \\
	J(\x, \y) &= \frac{b}{1 + \beta - b} \\
	C(\x, \y) &= \frac{b}{\sqrt{\beta}}
\end{align*}
As shown in Figure \ref{fig:jaccard} among angular LSH, MinHash, and \textsc{Chosen Path}, the technique with the lowest $\rho$-value is different depending on the parameters $(j_1, j_2)$ and asymmetry $\beta$. 
\begin{figure*}
\subfloat[$\beta = 0.25$]{\includegraphics[width = 0.66\columnwidth]{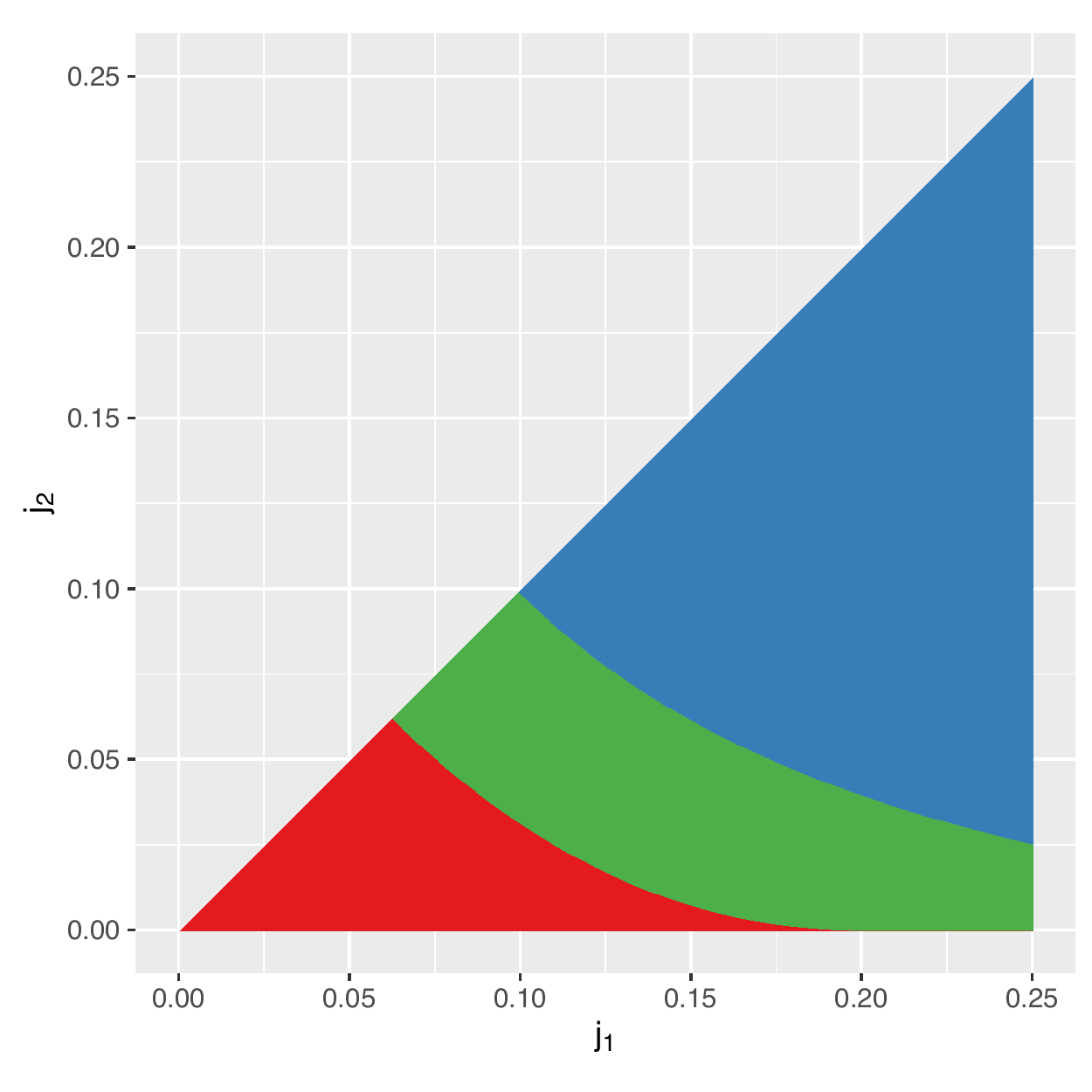}} 
\subfloat[$\beta = 0.5$]{\includegraphics[width = 0.66\columnwidth]{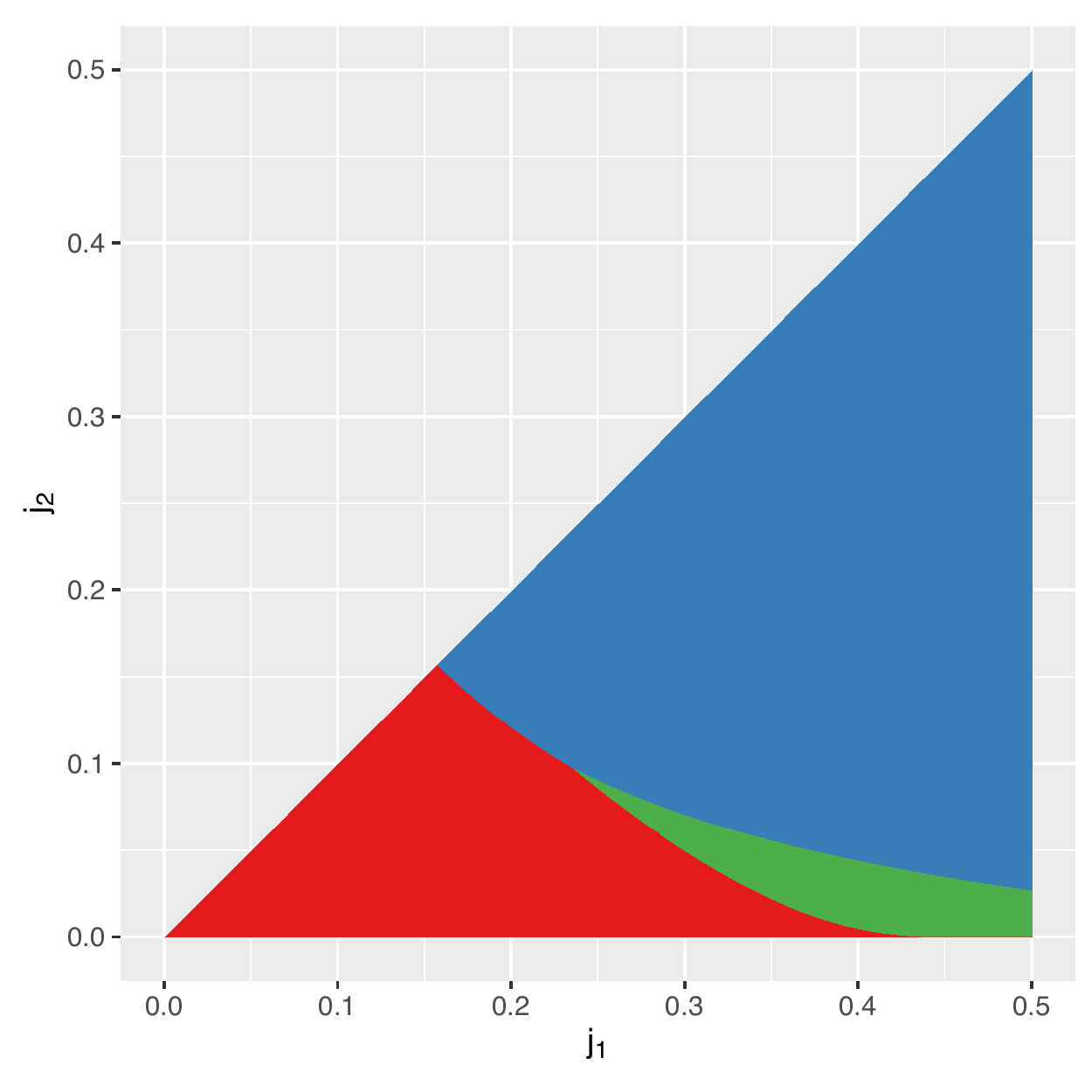}}    
\subfloat[$\beta = 0.75$]{\includegraphics[width = 0.66\columnwidth]{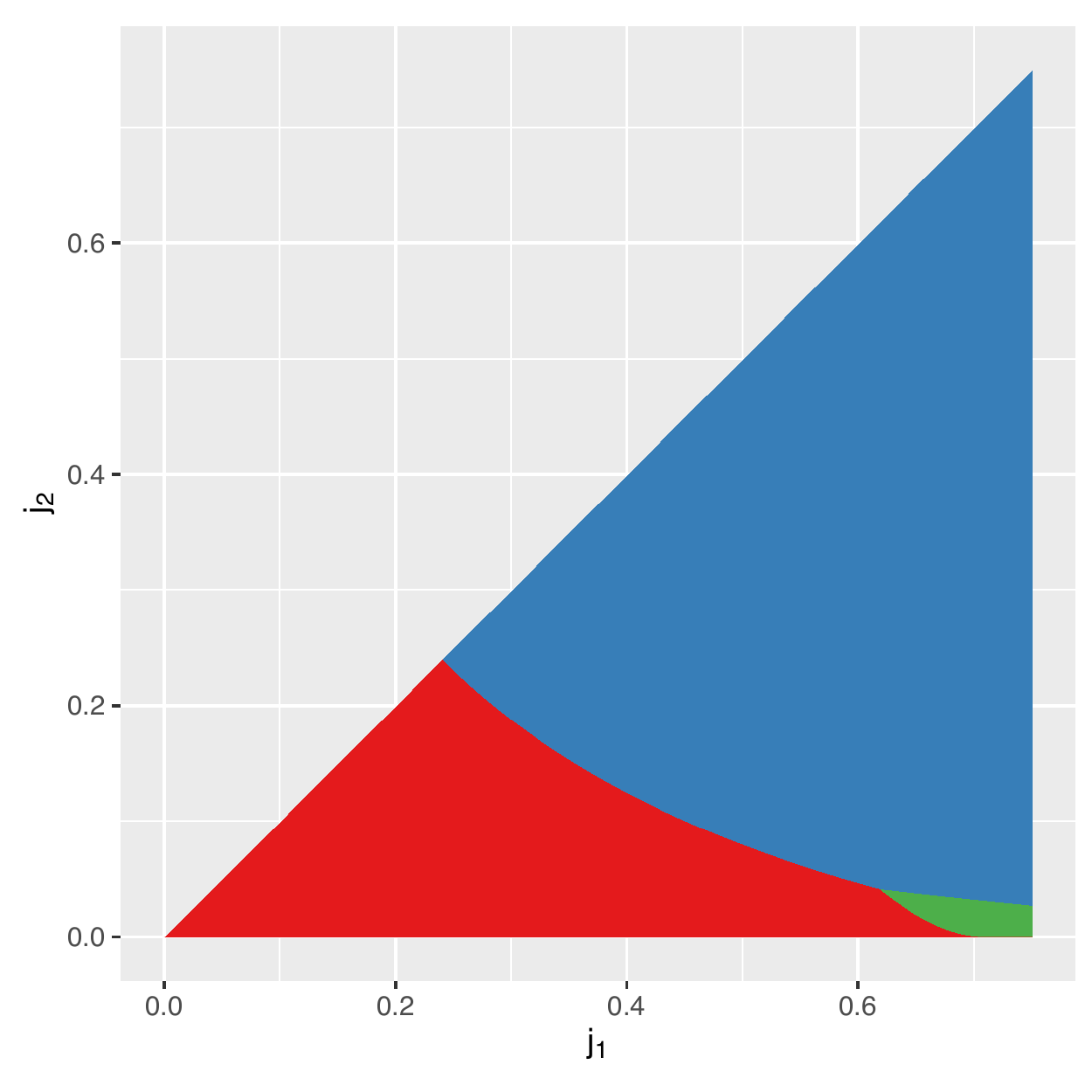}}    
\caption{Solution with lowest $\rho$-value for the $(j_1, j_2)$-approximate Jaccard similarity search problem for different values of $\beta$. 
Blue is angular LSH. Green is MinHash. Red is \textsc{Chosen Path}. 
Note the difference in the axes for different values of $\beta$ as it must hold that $0 \leq j_2 \leq j_1 \leq \beta$.}
\label{fig:jaccard}
\end{figure*}
We know that \textsc{Chosen Path} is optimal and strictly better than the competing data-independent techniques across the entire parameter space $(j_1, j_2)$ when $\beta = 1$, 
but it remains open to find tight upper and lower bounds in the case where $\beta \neq 1$.

\section{Conclusion and Open Problems}
We have seen that, perhaps surprisingly, there exists a relatively simple way of strictly improving the $\rho$-value for data-independent set similarity search in the case where all sets have the same size.
To implement the required locality-sensitive map efficiently we introduce a new technique based on branching processes 
that could possibly lead to more efficient solutions in other settings.

It remains an open problem to find tight upper and lower bounds on the $\rho$-value for Jaccard and cosine similarity search that hold for the entire parameter space in the general setting with arbitrary set sizes. 
Perhaps a modified version of the \textsc{Chosen Path} algorithm can yield an improved solution to Jaccard similarity search in general.
One approach is to generalize the condition $h_i(p \circ j) < \x_{j} / b_{1}|\x|$ to use different thresholds for queries and updates.
This yields different space-time tradeoffs when applying the \textsc{Chosen Path} algorithm to Jaccard similarity search.

Another interesting question is if the improvement shown for sparse vectors can be achieved in general for inner product similarity.
A similar, but possibly easier, direction would be to consider \emph{weighted} Jaccard similarity.

\begin{acks}
We thank Thomas Dybdahl Ahle for comments on a previous version of this manuscript.
\end{acks}

\appendix
\section{Details Behind the Lower Bound}\label{app:lower}
\subsection{Tools}
For clarity we state some standard technical lemmas that we use to derive LSH lower bounds. 
\begin{lemma}[{Hoeffding \cite[Theorem 1]{hoeffding1963}}] \label{lem:hoeffding}
	Let $X_1, X_2, \dots, X_n$ be independent random variables satisfying $0 \leq X_i \leq 1$ for $i \in [n]$.
	Define $X = X_1 + X_2 + \dots + X_n$, $Z = X/n$, and $\mu = \E[Z]$, then:
	\begin{itemize}
		\item[-] For $\hat{\mu} \geq \mu$ and $0 < \varepsilon < 1 - \hat{\mu}$ we have that $\Pr[Z - \hat{\mu} \geq \varepsilon] \leq e^{-2n\varepsilon^{2}}$.
		\item[-] For $\hat{\mu} \leq \mu$ and $0 < \varepsilon < \hat{\mu}$ we have that $\Pr[Z - \hat{\mu} \leq - \varepsilon] \leq e^{-2n\varepsilon^{2}}$.
	\end{itemize}
\end{lemma}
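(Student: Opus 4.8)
The plan is to prove the standard upper-tail concentration bound $\Pr[Z - \mu \geq \varepsilon] \leq e^{-2n\varepsilon^{2}}$ as the core estimate, and then deduce both stated inequalities from it by a monotonicity argument. For the first stated bound, since $\hat{\mu} \geq \mu$ we have the event inclusion $\{Z - \hat{\mu} \geq \varepsilon\} \subseteq \{Z - \mu \geq \varepsilon\}$, so the tail bound transfers directly. For the second (lower-tail) statement, since $\hat{\mu} \leq \mu$ we similarly have $\{Z - \hat{\mu} \leq -\varepsilon\} \subseteq \{Z - \mu \leq -\varepsilon\}$, and this last probability is controlled by applying the core bound to the variables $1 - X_i$, which again lie in $[0,1]$. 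Thus the $\hat{\mu}$ formulation reduces cleanly to the centered case, and the side constraints $\varepsilon < 1 - \hat{\mu}$ and $\varepsilon < \hat{\mu}$ merely restrict attention to the regime where the tail event is non-vacuous; they do not enter the argument.

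To establish the core bound I would use the Cram\'er--Chernoff exponential moment method. Writing $Y_i = X_i - \E[X_i]$, the event in question is $\{\sum_i Y_i \geq n\varepsilon\}$. For any $s > 0$, Markov's inequality applied to $e^{s\sum_i Y_i}$ together with independence gives
\begin{equation*}
\Pr\Big[\sum_{i} Y_i \geq n\varepsilon\Big] \leq e^{-sn\varepsilon}\, \E\Big[e^{s\sum_{i} Y_i}\Big] = e^{-sn\varepsilon}\prod_{i=1}^{n}\E\!\left[e^{sY_i}\right].
\end{equation*}

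The key step, which I expect to be the only genuinely nontrivial ingredient, is Hoeffding's lemma bounding the moment generating function of each centered, bounded summand: since $Y_i$ has mean zero and is supported in an interval of length one (as $0 \leq X_i \leq 1$), one has $\E[e^{sY_i}] \leq e^{s^{2}/8}$. I would prove this by setting $\psi(s) = \log \E[e^{sY_i}]$ and observing that $\psi(0) = 0$, $\psi'(0) = \E[Y_i] = 0$, and $\psi''(s)$ equals the variance of $Y_i$ under the exponentially tilted measure, which for any variable supported in an interval of length one is at most $1/4$ by Popoviciu's inequality; a second-order Taylor expansion then yields $\psi(s) \leq s^{2}/8$. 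Substituting this into the product bound gives $\Pr[\sum_i Y_i \geq n\varepsilon] \leq e^{-sn\varepsilon + ns^{2}/8}$.

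Finally I would optimize the free parameter $s$. Minimizing the exponent $-sn\varepsilon + ns^{2}/8$ over $s > 0$ yields $s = 4\varepsilon$, and substituting back produces the exponent $-2n\varepsilon^{2}$, which completes the proof of the core bound and hence, via the reductions above, of both stated inequalities. Everything apart from Hoeffding's lemma is routine bookkeeping around the Chernoff method, so the main conceptual effort is concentrated in the moment-generating-function estimate.
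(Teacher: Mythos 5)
Your proof is correct: Hoeffding's lemma via $\psi(0)=\psi'(0)=0$ and $\psi''(s)\leq 1/4$ (Popoviciu applied to the tilted measure) gives $\E[e^{sY_i}]\leq e^{s^{2}/8}$, optimizing at $s=4\varepsilon$ yields the exponent $-2n\varepsilon^{2}$, and your monotonicity reductions (event inclusion for $\hat{\mu}\geq\mu$, and passing to $1-X_i$ for the lower tail with $\hat{\mu}\leq\mu$) correctly dispose of the $\hat{\mu}$ formulation, with the side constraints indeed playing no role in the bound itself. The paper offers no proof of this lemma --- it is quoted directly from Hoeffding's Theorem 1 --- and your argument is essentially the canonical exponential-moment proof of that theorem (Hoeffding's original derivation bounds the moment generating function by convexity of $e^{sx}$ along the chord rather than by the tilted-variance argument, but the two routes are interchangeable and give the same constant).
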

\begin{lemma}[{Chernoff \cite[Thm.~4.4 and~4.5]{mitzenmacher2005}}] \label{lem:chernoff} 
 	Let $X_1, \dots, X_n$ be independent Poisson trials and define $X = \sum_{i=1}^{n}X_i$ and $\mu = \E[X]$.
 	Then, for $0 < \varepsilon < 1$ we have 
 	\begin{itemize}
 		\item[-] $\Pr[X \geq (1+\varepsilon)\mu] \leq e^{-\varepsilon^2 \mu / 3}$.
 		\item[-] $\Pr[X \leq (1-\varepsilon)\mu] \leq e^{-\varepsilon^2 \mu / 2}$.
 	\end{itemize}
\end{lemma}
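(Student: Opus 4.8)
The plan is to apply the standard Chernoff exponentiation (Bernstein) method: exponentiate the tail event, apply Markov's inequality, exploit independence to factor the moment generating function, and finally optimize a free parameter. First, for the upper tail I would fix any $\lambda > 0$ and write, using that $z \mapsto e^{\lambda z}$ is increasing together with Markov's inequality,
\[
\Pr[X \geq (1+\varepsilon)\mu] = \Pr[e^{\lambda X} \geq e^{\lambda(1+\varepsilon)\mu}] \leq e^{-\lambda(1+\varepsilon)\mu}\, \E[e^{\lambda X}].
\]
Since the $X_i$ are independent, the moment generating function factors as $\E[e^{\lambda X}] = \prod_{i=1}^{n}\E[e^{\lambda X_i}]$. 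Writing $p_i = \Pr[X_i = 1]$ and using $\E[e^{\lambda X_i}] = 1 + p_i(e^{\lambda} - 1) \leq \exp(p_i(e^{\lambda} - 1))$ (from $1 + z \leq e^{z}$), and that $\sum_i p_i = \mu$, I obtain $\E[e^{\lambda X}] \leq \exp((e^{\lambda} - 1)\mu)$, hence
\[
\Pr[X \geq (1+\varepsilon)\mu] \leq \exp\left(\mu\left((e^{\lambda} - 1) - \lambda(1+\varepsilon)\right)\right).
\]

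Next I would optimize over $\lambda$. The exponent is minimized at $\lambda = \ln(1+\varepsilon) > 0$, which gives the clean intermediate bound
\[
\Pr[X \geq (1+\varepsilon)\mu] \leq \left(\frac{e^{\varepsilon}}{(1+\varepsilon)^{1+\varepsilon}}\right)^{\mu}.
\]
The lower tail is handled symmetrically by bounding $\Pr[X \leq (1-\varepsilon)\mu] = \Pr[e^{-\lambda X} \geq e^{-\lambda(1-\varepsilon)\mu}]$ for $\lambda > 0$; the identical factorization and the optimal choice $\lambda = -\ln(1-\varepsilon) > 0$ produce $\Pr[X \leq (1-\varepsilon)\mu] \leq \left(e^{-\varepsilon}/(1-\varepsilon)^{1-\varepsilon}\right)^{\mu}$.

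The main obstacle, and the only nonroutine part, is converting these exact-but-unwieldy bounds into the stated clean forms. For the upper tail this amounts to the scalar inequality $(1+\varepsilon)\ln(1+\varepsilon) - \varepsilon \geq \varepsilon^{2}/3$ for $0 < \varepsilon < 1$, and for the lower tail $(1-\varepsilon)\ln(1-\varepsilon) + \varepsilon \geq \varepsilon^{2}/2$ for $0 < \varepsilon < 1$. I would prove each by setting the difference of the two sides as a function $f(\varepsilon)$, checking $f(0) = 0$, and showing $f'(\varepsilon) \geq 0$ on $(0,1)$ using the Taylor expansion $\ln(1 \pm \varepsilon) = \pm\varepsilon - \varepsilon^{2}/2 \pm \varepsilon^{3}/3 - \cdots$; the weaker constant $1/3$ (rather than $1/2$) in the upper tail is precisely what is needed to absorb the higher-order terms uniformly over the full range $\varepsilon \in (0,1)$. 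Raising the resulting per-variable estimate to the power $\mu$ then yields $e^{-\varepsilon^{2}\mu/3}$ and $e^{-\varepsilon^{2}\mu/2}$ respectively, completing both claims.
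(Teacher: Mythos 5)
Your proof is correct and is precisely the standard argument behind this lemma: the paper itself supplies no proof, citing Mitzenmacher and Upfal (Thms.~4.4 and~4.5), whose proof is exactly your exponentiation--Markov--MGF-optimization route followed by the scalar inequalities $(1+\varepsilon)\ln(1+\varepsilon)-\varepsilon\geq\varepsilon^{2}/3$ and $(1-\varepsilon)\ln(1-\varepsilon)+\varepsilon\geq\varepsilon^{2}/2$. One small caution: for the upper-tail inequality the alternating Taylor series of $f'(\varepsilon)=\ln(1+\varepsilon)-2\varepsilon/3$ is not termwise positive near $\varepsilon=1$, so it is cleaner to note that $f''(\varepsilon)=1/(1+\varepsilon)-2/3$ changes sign exactly once at $\varepsilon=1/2$ while $f'(0)=0$ and $f'(1)=\ln 2-2/3>0$, which forces $f'>0$ on all of $(0,1)$.
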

\begin{lemma}[Bounding the logarithm {\cite{topsoe2007}}] \label{lem:lnbounds}
 	For $x > -1$ we have that $\tfrac{x}{1+x} \leq \ln(1 + x) \leq x$. 
\end{lemma}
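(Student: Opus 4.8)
The plan is to establish the two inequalities separately by elementary calculus, since each follows from the monotonicity of a suitable auxiliary function. First I would prove the upper bound $\ln(1+x) \leq x$ by setting $f(x) = x - \ln(1+x)$ on the domain $x > -1$. Differentiating gives $f'(x) = 1 - \tfrac{1}{1+x} = \tfrac{x}{1+x}$, which is negative on $(-1,0)$ and positive on $(0,\infty)$, so $f$ attains a global minimum at $x=0$ where $f(0)=0$. Hence $f(x) \geq 0$ throughout, which is exactly $\ln(1+x) \leq x$.

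For the lower bound $\tfrac{x}{1+x} \leq \ln(1+x)$, the cleanest route is to reuse the upper bound via a substitution rather than to run a second derivative computation. I would set $y = -x/(1+x)$; since $x > -1$ we have $1+x > 0$, and a short check shows $1+y = 1/(1+x) > 0$, so $y$ lies in the valid range $(-1,\infty)$. Applying the already-proved inequality $\ln(1+y) \leq y$ to this $y$ gives $\ln\!\big(\tfrac{1}{1+x}\big) \leq -\tfrac{x}{1+x}$, i.e. $-\ln(1+x) \leq -\tfrac{x}{1+x}$, and multiplying through by $-1$ yields the desired bound. Alternatively one could argue directly that $g(x) = \ln(1+x) - \tfrac{x}{1+x}$ has derivative $g'(x) = \tfrac{x}{(1+x)^2}$, again forcing a minimum of $0$ at $x=0$.

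There is no genuine obstacle here, as the statement is a textbook fact; the only point requiring a little care is the domain bookkeeping. One must verify that the substitution variable $y$ stays in $(-1,\infty)$ so that $\ln(1+y)$ is defined and the upper bound is applicable, and this is guaranteed precisely by the hypothesis $x > -1$. Both inequalities become equalities only at $x=0$, consistent with the first-order behavior $\ln(1+x) = x - \tfrac{x^2}{2} + \cdots$ near the origin.
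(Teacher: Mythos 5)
Your proof is correct and complete: the upper bound via the auxiliary function $f(x) = x - \ln(1+x)$ and the lower bound via the substitution $y = -x/(1+x)$ (with the necessary check that $1+y = 1/(1+x) > 0$, so the upper bound applies) are both sound, and the equality case at $x=0$ is as you describe. Note that the paper itself offers no proof of this lemma --- it is stated as a standard fact with a citation to Tops{\o}e --- so there is no in-paper argument to compare against; your self-contained derivation fills that gap without issue.
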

\begin{lemma}[Approximating the exponential function {\cite[Prop.~B.3]{motwani2010randomized}}]\label{lem:exp} 
 	For all $t, n \in \real$ with $|t| \leq n$ we have that $e^{t}(1 - \tfrac{t^2}{n}) \leq (1 + \tfrac{t}{n})^n \leq e^t$.
\end{lemma}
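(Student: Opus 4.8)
The plan is to reduce both inequalities to the single elementary fact that $1 + y \le e^{y}$ for all $y \in \real$. Throughout I assume $n > 0$ (otherwise $(1 + t/n)^{n}$ is undefined, and the hypothesis $|t| \le n$ with $n \ge 0$ forces $t = 0$ in the degenerate case). I would first normalize by substituting $x = t/n$, so that $|t| \le n$ becomes $|x| \le 1$ and in particular $1 + x \ge 0$. In these terms the claim reads $e^{nx}(1 - nx^{2}) \le (1+x)^{n} \le e^{nx}$, and I would prove the two sides separately.

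For the upper bound I would apply $1 + x \le e^{x}$ and raise both sides to the power $n$. This is legitimate precisely because the base $1 + x$ is nonnegative (from $|x| \le 1$) and the map $u \mapsto u^{n}$ is nondecreasing on $[0,\infty)$ for $n > 0$; hence $(1+x)^{n} \le (e^{x})^{n} = e^{nx}$, which is $(1 + t/n)^{n} \le e^{t}$.

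For the lower bound the key step is to control the ratio $g(x) = (1+x)e^{-x}$. Applying $1 + y \le e^{y}$ with $y = -x$ gives $1 - x \le e^{-x}$, and multiplying through by the nonnegative factor $1 + x$ yields $1 - x^{2} = (1+x)(1-x) \le (1+x)e^{-x} = g(x)$. Thus $g(x) \ge 1 - x^{2} \ge 0$ on $[-1,1]$. I would then invoke the elementary Bernoulli inequality $(1+u)^{n} \ge 1 + nu$, valid for $u \ge -1$ and $n \ge 1$, with $u = g(x) - 1 \ge -1$, to obtain $g(x)^{n} \ge 1 + n(g(x) - 1) \ge 1 - nx^{2}$. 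Multiplying by $e^{nx}$ and undoing the substitution recovers $e^{t}(1 - t^{2}/n) \le (1 + t/n)^{n}$.

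Since everything rests on $1 + y \le e^{y}$, there is no serious obstacle; the only points demanding care both live in the lower bound. First, Bernoulli's inequality is applied in the regime $n \ge 1$, which is the only regime used in the applications (there $n$ is a positive power of $d$); should one insist on $0 < n < 1$, the inequality can be checked directly, and in any event the bound is trivial whenever $1 - t^{2}/n \le 0$ because then the right-hand side is nonpositive while $(1 + t/n)^{n} \ge 0$. Second, one must keep track of the nonnegativity of $1 + x$, which is exactly what $|t| \le n$ guarantees and what makes both the power step and the multiplication by $1 + x$ valid.
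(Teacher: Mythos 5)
The paper contains no internal proof of this lemma: it is quoted as a standard fact from Motwani--Raghavan (the cited Prop.~B.3), so there is no argument of the paper's to compare against, only the statement itself. Your derivation is correct and complete in the regime $n \ge 1$: the upper bound via $1+x \le e^{x}$ and monotonicity of $u \mapsto u^{n}$ on $[0,\infty)$ is fine, and the lower bound via $g(x) = (1+x)e^{-x} \ge (1+x)(1-x) = 1-x^{2}$ followed by generalized Bernoulli, $g(x)^{n} \ge 1 + n(g(x)-1) \ge 1 - nx^{2}$ (valid since $g(x)-1 \ge -1$ and $n \ge 1$), is a clean reduction of everything to $1+y \le e^{y}$. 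Note that the textbook proposition you are reproving does carry the hypothesis $n \ge 1$, which the paper's restatement silently drops.

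One genuine error, though it lives in a side remark rather than in your main argument: your claim that for $0 < n < 1$ "the inequality can be checked directly" is false --- the lower bound actually fails in that regime, so the restriction $n \ge 1$ in your Bernoulli step is a necessary hypothesis, not an artifact of the method. Concretely, take $n = 1/2$ and $t = -1/2$ (so $|t| \le n$): then $(1+t/n)^{n} = 0^{1/2} = 0$ while $e^{t}(1 - t^{2}/n) = \tfrac{1}{2}e^{-1/2} \approx 0.30 > 0$; the failure is not confined to the boundary, since $t = -0.49$ gives $(0.02)^{1/2} \approx 0.14$ against $e^{-0.49}(1-0.4802) \approx 0.32$. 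So the lemma as stated in the paper, with only $|t| \le n$, is strictly speaking too strong; this is harmless for the paper, whose applications use $n = \sqrt{D}\,\ln(1/(b_1+\varepsilon))$ and similar quantities that exceed $1$ for $d$ large, and your proof covers exactly the regime that is both true and used. Apart from correcting that parenthetical, your proof stands.
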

\subsection{Proof of Lemma \ref{lem:owzsimple}} 
\para{Preliminaries.}
We will reuse the notation of Section 3.~from O'Donnell et al.~\cite{odonnell2014optimal}.
\begin{definition}
	For $0 \leq \lambda < 1$ we say that $(\x, \y)$ are $(1-\lambda)$-correlated if $\x$ is chosen uniformly at random from $\cube{d}$ 
and $\y$ is constructed by rerandomizing each bit from $\x$ independently at random with probability $\lambda$.
\end{definition}
Let $(\x, \y)$ be $e^{-t}$-correlated and let $\LSH$ be a family of hash functions on $\cube{d}$, then we define
\begin{equation*}
	\K_{\LSH}(t) = \Pr_{\substack{h \sim \LSH \\ (\x, \y)\, e^{-t}\text{- corr'd}}}[h(\x) = h(\y)].
\end{equation*}
We have that $\K_{\LSH}(t)$ is a log-convex function which implies the following property that underlies the lower bound: 
\begin{lemma}\label{lem:logconvexity}
	For every family of hash functions $\LSH$ on $\cube{d}$, every $t \geq 0$, and $c \geq 1$ we have
	\begin{equation}
		\frac{\ln(1/\K_{\LSH}(t))}{\ln(1/\K_{\LSH}(ct))} \geq \frac{1}{c}.
	\end{equation}
\end{lemma}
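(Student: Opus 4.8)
The plan is to reduce the claimed ratio bound to the single structural fact recorded just before the statement, namely that $g(t) := \ln \K_{\LSH}(t)$ is convex on $[0,\infty)$. Everything else is an elementary convexity manipulation, so the real content lies in justifying log-convexity, which I would establish via the Fourier/noise-operator representation of $\K_{\LSH}$. Writing each $h$ in terms of its level-set indicators $f_r = \1\{h(\cdot) = r\}$ and using that an $e^{-t}$-correlated pair is exactly the action of the Bonami--Beckner operator $T_{e^{-t}}$, one obtains
\begin{equation*}
	\K_{\LSH}(t) = \E_{h \sim \LSH}\!\left[ \sum_r \sum_{S \subseteq [d]} \hat f_r(S)^2\, e^{-t|S|} \right] = \sum_{k=0}^{d} a_k\, e^{-tk},
\end{equation*}
with all coefficients $a_k \geq 0$. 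Since each $t \mapsto e^{-tk}$ is log-linear (hence log-convex) and nonnegative combinations preserve log-convexity, $g = \ln \K_{\LSH}$ is convex.

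Next I would pin down the two facts about $g$ that turn convexity into the asymmetric ratio bound. First, $g(0) = 0$: at $t = 0$ the pair $(\x, \y)$ is $1$-correlated, so no bit is rerandomized, $\y = \x$, and the collision is certain, giving $\K_{\LSH}(0) = 1$. Second, $g(t) \leq 0$ for all $t$, since $\K_{\LSH}(t)$ is a probability. With these in place, fix $c \geq 1$ and $t \geq 0$ and write $t$ as the convex combination $t = \tfrac{1}{c}(ct) + (1 - \tfrac{1}{c})\cdot 0$; convexity then yields
\begin{equation*}
	g(t) \leq \tfrac{1}{c}\, g(ct) + \bigl(1 - \tfrac{1}{c}\bigr) g(0) = \tfrac{1}{c}\, g(ct).
\end{equation*}
Multiplying through by $-c < 0$ gives $c\,\ln(1/\K_{\LSH}(t)) \geq \ln(1/\K_{\LSH}(ct))$, and dividing by $c\,\ln(1/\K_{\LSH}(ct)) \geq 0$ delivers exactly $\ln(1/\K_{\LSH}(t)) / \ln(1/\K_{\LSH}(ct)) \geq 1/c$.

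The one point requiring care is the degenerate case $\K_{\LSH}(ct) = 1$, where the denominator vanishes and the ratio is a priori undefined; here convexity together with $g \leq 0$ and $g(0) = 0$ forces $g \equiv 0$ on $[0, ct]$, so the numerator vanishes too and the bound holds in the limiting sense (in the lower-bound application the relevant point has $\K_{\LSH} < 1$, so this case does not arise). I expect the log-convexity step to be the only substantive obstacle: once $\K_{\LSH}$ is expressed as a nonnegative combination of exponentials the conclusion is immediate, so the bulk of the care goes into the Fourier expansion and the sign bookkeeping.
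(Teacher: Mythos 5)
Your proof is correct and takes essentially the same route as the paper: the paper simply asserts the log-convexity of $\K_{\LSH}$ (deferring to O'Donnell et al., whose argument is exactly your Fourier/noise-operator expansion of $\K_{\LSH}(t)$ as a nonnegative combination $\sum_{k} a_k e^{-tk}$) and then derives the ratio bound by the same convexity manipulation using $g(0)=0$ and $g \leq 0$. Your treatment of the degenerate case $\K_{\LSH}(ct)=1$ is a small point of care the paper leaves implicit, but it does not change the argument.
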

The idea behind the proof is to tie $p_1$ to $\sen{t}$ and $p_2$ to $\sen{ct}$ through Chernoff bounds and then apply Lemma \ref{lem:logconvexity} to show that $\rho \gtrsim 1/c$.

\para{Proof.}
Begin by assuming that we have a family $\LSH$ that satisfies the conditions of Lemma \ref{lem:owzsimple}.
Note that the expected Hamming distance betwee $(1-\lambda)$-correlated points $\x$ and $\y$ is given by $(\lambda/2)d$.
We set $\lambda_{p_{1}}/2 = d^{-1/2} - d^{-5/8}$ and $\lambda_{p_{2}}/2 = cd^{-1/2} + 2cd^{-5/8}$ and let $(\x, \y)$ denote $(1 - \lambda_{p_{1}})$-correlated random strings 
and $(\x, \x')$ denote $(1 - \lambda_{p_{2}}q$)-correlated random strings.
By standard Chernoff bounds we get the following guarantees:
\begin{align*}
	\Pr[\norm{\x-\y}_{1} \geq r] &\leq e^{-\Omega(d^{1/4})}, \\
	\Pr[\norm{\x-\x'}_1 \leq cr] &\leq e^{-\Omega(d^{1/4})}.
\end{align*}
We will establish a relationship between $\K_{\LSH}(t_{p_{1}})$ and ${p_{1}}$ on the one hand, and $\K_{\LSH}(t_{p_{2}})$ and ${p_{2}}$ on the other hand, 
for the following choice of parameters $t_{p_{1}}$ and $t_{p_{2}}$:
\begin{align*}
	t_{p_{1}} &= -\ln(1 - 2(d^{-1/2} - d^{-5/8})) \\
	t_{p_{2}} &= -\ln(1 - 2c(d^{-1/2} + 2d^{-5/8})).
\end{align*}
By the properties of $\LSH$ and from the definition of $\K_{\LSH}$ we have that
\begin{align*}
	\K_{\LSH}(t_{p_{1}}) &\geq {p_{1}}(1 - \Pr[\norm{\x - \y}_1 > r]) \geq {p_{1}} - \Pr[\norm{\x - \y}_1 \geq r] \\
	\K_{\LSH}(t_{p_{2}}) &\leq {p_{2}}(1 - \Pr[\norm{\x - \x'}_1 \leq cr]) + \Pr[\norm{\x - \x'}_1 \leq cr] \\
						 &\leq {p_{2}} + \Pr[\norm{\x - \x'}_1 \leq cr].
\end{align*}
Let $\delta = \max\{\Pr[\norm{\x - \y}_1 \geq r], \Pr[\norm{\x - \x'}_1 \leq cr]\} = e^{-\Omega(d^{1/4})}$.
By Lemma \ref{lem:logconvexity} and our setting of $t_{p_{1}}$ and $t_{p_{2}}$ we can use the bounds on the natural logarithm from Lemma \ref{lem:lnbounds} to show the following: 
\begin{align*}
	\frac{\ln(1/\K_{\LSH}(t_{p_{1}}))}{\ln(1/\K_{\LSH}(t_{p_{2}}))} &\geq \frac{t_{p_{1}}}{t_{p_{2}}} = \frac{\ln(1 - 2(d^{-1/2} - d^{-5/8}))}{\ln(1 - 2c(d^{-1/2} + 2d^{-5/8}))} \\
	&\geq \frac{2(d^{-1/2} - d^{-5/8})}{2c(d^{-1/2} + 2d^{-5/8})} -  2(d^{-1/2} - d^{-5/8}) \\
	&\geq \frac{1 - d^{-1/4}}{c + 2d^{-1/4}} - 2(d^{-1/2} - d^{-5/8}) \\
	&= \frac{1}{c} - O(d^{-1/4}).
\end{align*}
We proceed by lower bounding $\rho$ where we make use of the inequalities derived above.
\begin{equation*}
\sen{t_{p_{2}}} - \delta \leq {p_{2}} < {p_{1}} \leq \sen{t_{p_{1}}} + \delta.
\end{equation*}
By Lemma \ref{lem:logconvexity} combined with the restrictions on our parameters,
for $d$ greater than some constant we have that $\sen{t_{p_{2}}} \geq \sen{t_{p_{1}}}^{2c} \geq ({p_{1}}/2)^{2c} \geq (2d)^{-2c} \geq (2d)^{-2d^{1/8}}$.
Furthermore, we lower bound $\ln(1/\sen{t_{p_{2}}})$ by using that $\sen{t_{p_{2}}} \leq {p_{2}} + \delta$ together with the restriction that ${p_{2}} \geq 1 - 1/d$ and the properties of $\delta$.
For $d$ greater than some constant it therefore holds that $\sen{t_{p_{2}}} \leq 1 - 1/2d$ from which it follows that $\ln(1/\sen{t_{p_{2}}}) \geq 1/2d$.   
\begin{align*}
	\frac{\ln(1/{p_{1}})}{\ln(1/{p_{2}})} &\geq \frac{\ln(1/(\sen{t_{p_{1}}} + \delta))}{\ln(1/(\sen{t_{p_{2}}} - \delta))} \\
							  &= \frac{\ln(1/\sen{t_{p_{1}}}) - \ln(1 + \delta/\sen{t_{p_{1}}})}{\ln(1/\sen{t_{p_{2}}}) + \ln(1/(1 - \delta/\sen{t_{p_{2}}}))} \\
							  &\geq \frac{\ln(1/\sen{t_{p_{1}}}) - \delta/\sen{t_{p_{1}}}}{\ln(1/\sen{t_{p_{2}}}) + 2\delta/\sen{t_{p_{2}}}} \\
							  &\geq \frac{\ln(1/\sen{t_{p_{1}}})}{\ln(1/\sen{t_{p_{2}}})} - \frac{3 \delta}{\sen{t_{p_{2}}} \ln(1/\sen{t_{p_{2}}})}.
\end{align*}
By the arguments above we have that 
\begin{equation*}
	\frac{3 \delta}{\sen{t_{p_{2}}} \ln(1/\sen{t_{p_{2}}})} = e^{-\Omega(d^{1/4})} = O(d^{-1/4}).
\end{equation*}
Inserting the lower bound for $\frac{\ln(1/\sen{t_{p_{1}}})}{\ln(1/\sen{t_{p_{2}}})}$ results in the lemma. 

\section{Comparisons} \label{app:comparison}
For completeness we state the proofs behind the comparisons between the $\rho$-values obtained by the \textsc{Chosen Path} algorithm and other LSH techniques.
\subsection{MinHash}
For data sets with fixed sparsity and Braun-Blanquet similarities $0 < b_2 < b_1 < 1$ we have that $\rho/\rho_{\text{minhash}} = f(b_{2})/f(b_{1})$ where $f(x) = \log(x/(2-x)) / \log(x)$.
If $f(x)$ is monotone increasing in $(0;1)$ then $\rho/\rho_{\text{minhash}} < 1$.
For $x \in (0;1)$ we have that $\sign(f'(x)) = \sign(g(x))$ where $g(x) = \ln(x) + (2-x) \ln(2-x)$.
The function $g(x)$ equals zero at $x = 1$ and has the derivative $g'(x) = \ln(x) - \ln(2-x)$ which is negative for values of $x \in (0;1)$.
We can thefore see that $f'(x)$ is positive in the interval and it follows that $\rho < \rho_{\text{minhash}}$ for every choice of $0 < b_2 < b_1 < 1$. 

\subsection{Angular LSH}
We have that $\rho/\rho_{\text{angular}} < 1$ if $f(x) = \ln(x) \frac{1+x}{1-x}$ is a monotone increasing function for $x \in (0;1)$.
For $x \in (0;1)$ we have that $\sign(f'(x)) = \sign(g(x))$ where $g(x) = (1-x^2)/2 + x \ln x$.
We note that $g(1) = 0$ and $g'(x) = 1 - x + \ln x$. 
Therefore, if $g'(x) < 0$ for $x \in (0;1)$ it holds that $g(x) > 0$ and $f(x)$ is monotone increasing in the same interval.
We have that $g'(1) = 0$ and $g''(x) = -1 + 1/x > 0$ implying that $g'(x) < 0$ in the interval. 

\subsection{Data-dependent LSH}

\begin{lemma}\label{lem:fixedrho}
Let $0 < b_2 < b_1 < 1$ and fix $\rho = 1/2$ such that $b_1 = \sqrt{b_2}$.
Then we have that $\rho < \rho_{\text{datadep}}$ for every value of $b_2 < 1/4$. 
\end{lemma}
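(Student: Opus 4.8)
The plan is to eliminate one variable using the constraint and then reduce the claimed inequality to a one-line algebraic fact. Since we have fixed $\rho = 1/2$, the defining relation $\rho = \log(1/b_1)/\log(1/b_2) = 1/2$ is exactly $b_2 = b_1^2$, so I would substitute $b_2 = b_1^2$ throughout and treat $b_1 \in (0,1)$ as the single free parameter; under this substitution the hypothesis $b_2 < 1/4$ becomes $b_1 < 1/2$.

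First I would rewrite the data-dependent exponent. We have $\rho_{\text{datadep}} = \frac{1-b_1}{1 + b_1 - 2b_2} = \frac{1-b_1}{1 + b_1 - 2b_1^2}$. The key observation is that the denominator factors as $1 + b_1 - 2b_1^2 = (1-b_1)(1+2b_1)$, which is strictly positive for $b_1 \in (0,1)$. Cancelling the common factor $(1-b_1)$ (legitimate since $b_1 < 1$) collapses the whole expression to the remarkably simple form $\rho_{\text{datadep}} = \frac{1}{1 + 2b_1}$.

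With this simplification the claim $\rho < \rho_{\text{datadep}}$ is just $\tfrac12 < \frac{1}{1+2b_1}$, which, since $1 + 2b_1 > 0$, is equivalent to $1 + 2b_1 < 2$, i.e.\ $b_1 < 1/2$. Recalling $b_1 = \sqrt{b_2}$, this is precisely the hypothesis $b_2 < 1/4$, and one sees that the inequality is strict exactly for $b_2 < 1/4$ (with equality at the boundary $b_2 = 1/4$).

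The only point requiring care—rather than a genuine obstacle—is the cancellation step: I must confirm that $1 + b_1 - 2b_1^2$ indeed factors as $(1-b_1)(1+2b_1)$ and that this quantity stays positive on the whole range $b_1 \in (0,1)$, so that clearing the denominator preserves the direction of the inequality. Once the factorization is spotted everything else is immediate, with no combinatorial or analytic difficulty; the lemma is essentially a cleanly factorable quadratic in disguise.
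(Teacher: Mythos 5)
Your proof is correct, and it takes a genuinely different and cleaner route than the paper's. The factorization you spotted checks out: $(1-b_1)(1+2b_1) = 1+b_1-2b_1^2$, which is strictly positive for $b_1 \in (0,1)$, so along the curve $b_2 = b_1^2$ you get the exact closed form
\begin{equation*}
\rho_{\text{datadep}} \;=\; \frac{1-b_1}{1+b_1-2b_1^2} \;=\; \frac{1}{1+2b_1} \;=\; \frac{1}{1+2\sqrt{b_2}},
\end{equation*}
and the claim collapses to $1+2b_1 < 2$, i.e.\ $b_2 < 1/4$. The paper argues differently: it solves the quadratic equation $\tfrac{1}{2} = \frac{1-\sqrt{b_2}}{1+\sqrt{b_2}-2b_2}$ to show that $\rho = \rho_{\text{datadep}}$ holds only at $b_2 = 1/4$, and then checks that the derivative of $\rho_{\text{datadep}}$ with respect to $b_2$ is negative along $b_1 = \sqrt{b_2}$, so that $\rho_{\text{datadep}} > 1/2$ to the left of the equality point. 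Your single factorization subsumes both of those steps: the closed form $1/(1+2\sqrt{b_2})$ simultaneously exhibits the unique equality point, the monotonicity, and the strict reversal for $b_2 > 1/4$, with no calculus and no quadratic-root bookkeeping, so for this lemma your argument is strictly simpler and slightly more informative. What the paper's derivative-sign style buys is reusability: the same template carries over to Lemma~\ref{lem:partial}, where $\rho$ is an arbitrary fixed value in $(0,1)$ and $b_2 = b_1^{1/\rho}$, so the denominator $1+b_1-2b_1^{1/\rho}$ is no longer a quadratic in $b_1$ and admits no factorization; your trick is special to $\rho = 1/2$, whereas the monotonicity argument is the one that generalizes across the parameter space.
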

\begin{proof}
We will compare $\rho = \log(b_{1})/\log(b_{2})$ and $\rho_{\text{datadep}} = \frac{1 - b_{1}}{1 + b_{1} - 2b_{2}}$ when $\rho$ is fixed at $\rho = 1/2$, or equivalently, $b_{1} = \sqrt{b_{2}}$.
We can solve the quadratic equation $1/2 = \frac{1 - \sqrt{b_2}}{1 + \sqrt{b_2} - 2b_{2}}$ to see that for $0 < b_2 < 1$ we have that $\rho = \rho_{\text{datadep}}$ only when $b_2 = 1/4$. 
The derivative of $\rho_{\text{datadep}}$ with respect to $b_2$ is negative when $b_{1} = \sqrt{b_2}$. 
Under this restriction we therefore have that $\rho < \rho_{\text{datadep}}$ for $b_2 < 1/4$ which is equivalent to $j_2 < 1/7$ in the fixed-weight setting.
\end{proof}

To compare $\rho$-values over the full parameter space we use the following two lemmas. 
\begin{lemma} \label{lem:partial}
For every choice of fixed $0 < \rho < 1$ let $b_{2} = b_{1}^{1/\rho}$. 
Then $\rho_{\text{datadep}} = \frac{1 - b_{1}}{1 + b_{1} - 2b_{2}}$ is decreasing in $b_1$ for $b_1 \in (0;1)$.
\end{lemma}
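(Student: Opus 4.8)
The plan is to reduce the two-variable statement to a one-variable monotonicity claim. Since $\rho$ is fixed and $b_2 = b_1^{1/\rho}$, the quantity $\rho_{\text{datadep}}$ depends only on $b_1$. Writing $x = b_1$ and $\alpha = 1/\rho$ (so $\alpha > 1$ because $0 < \rho < 1$), I would define
\begin{equation*}
g(x) = \frac{1 - x}{1 + x - 2x^{\alpha}}, \qquad x \in (0;1),
\end{equation*}
and aim to show $g'(x) < 0$ throughout. The first preparatory step is to confirm that the denominator $D(x) = 1 + x - 2x^{\alpha}$ is strictly positive on $(0;1)$: since $\alpha > 1$ forces $x^{\alpha} < x$ for $x \in (0;1)$, we get $D(x) > 1 + x - 2x = 1 - x > 0$. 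This makes $g$ well-defined and, crucially, lets me reduce the sign of $g'(x)$ to the sign of the numerator in the quotient rule.

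Next I would compute that numerator. With $N(x) = 1 - x$, $N'(x) = -1$, and $D'(x) = 1 - 2\alpha x^{\alpha-1}$, the numerator of $g'$ is $N'(x)D(x) - N(x)D'(x) = -D(x) - (1-x)D'(x)$. A short expansion collapses this to $2\,h(x)$, where
\begin{equation*}
h(x) = -1 + \alpha x^{\alpha-1} - (\alpha-1)x^{\alpha}.
\end{equation*}
Because $D(x)^2 > 0$, it then suffices to show $h(x) < 0$ for all $x \in (0;1)$.

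Finally, to establish $h < 0$ I would combine an endpoint value with a monotonicity argument. Evaluating at the boundary gives $h(1) = -1 + \alpha - (\alpha-1) = 0$. Differentiating yields the clean factorization
\begin{equation*}
h'(x) = \alpha(\alpha-1)x^{\alpha-2} - \alpha(\alpha-1)x^{\alpha-1} = \alpha(\alpha-1)\,x^{\alpha-2}(1 - x),
\end{equation*}
which is strictly positive on $(0;1)$ since $\alpha(\alpha-1) > 0$ and $x^{\alpha-2}(1-x) > 0$. Hence $h$ is strictly increasing, so $h(x) < h(1) = 0$ for every $x \in (0;1)$. This gives $g'(x) < 0$, proving that $\rho_{\text{datadep}}$ is decreasing in $b_1$.

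The computation is elementary and I do not expect a genuine obstacle. The only points requiring mild care are verifying positivity of the denominator so that the quotient-rule sign reduction is legitimate, and carrying out the algebra that exposes the factored form of $h'(x)$; once that factorization appears, the monotonicity of $h$ and the endpoint value $h(1)=0$ finish the argument immediately.
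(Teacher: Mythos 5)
Your proof is correct and follows essentially the same route as the paper: both reduce the sign of the derivative to a one-variable auxiliary function that vanishes at $x=1$ and is shown to be strictly increasing on $(0;1)$; indeed your $h(x) = -1 + \alpha x^{\alpha-1} - (\alpha-1)x^{\alpha}$ equals the paper's $g(x) = -\rho x^{-1/\rho} + \rho - 1 + x^{-1}$ up to the positive factor $\rho x^{-1/\rho}$. Your explicit verification that the denominator $1 + b_1 - 2b_1^{1/\rho}$ is positive is a welcome detail the paper leaves implicit.
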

\begin{proof}
The sign of the derivative of $\rho_{\text{datadep}}$ with respect to $b_1$ is equal to the sign of the function $g(x) = -\rho x^{-1/\rho} + \rho - 1 + x^{-1}$ for $x \in (0;1)$.
We have that $g(1) = 0$ and $g'(x) = x{-1/p - 1} - x^{-2} > 0$ for $x \in (0;1)$ which shows that $g(x) < 0$ in the interval.
\end{proof}

\begin{lemma} \label{lem:onefifth}
For $1/5 = b_2 < b_1 < 1$ we have that $\rho < \rho_{\text{datadep}}$.
\end{lemma}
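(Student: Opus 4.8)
The plan is to fix $b_2 = 1/5$ and treat the inequality as a statement about the single variable $b_1$ ranging over the open interval $(1/5, 1)$. With $b_2 = 1/5$ we have $\rho = \ln(1/b_1)/\ln 5$ and $\rho_{\text{datadep}} = (1-b_1)/(3/5 + b_1)$. Since $\ln 5 > 0$ and $3/5 + b_1 > 0$, cross-multiplying is sign-preserving, so $\rho < \rho_{\text{datadep}}$ is equivalent to $F(b_1) > 0$, where
$$F(b_1) = (1-b_1)\ln 5 + (3/5 + b_1)\ln b_1.$$
A direct substitution shows $F(1/5) = 0$ and $F(1) = 0$, reflecting that the two exponents agree both at $b_1 = b_2$ and at $b_1 = 1$; the whole task therefore reduces to proving that $F$ is strictly positive in the interior.

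The key step is a second-derivative computation. Differentiating twice gives the clean form
$$F''(b_1) = \frac{1}{b_1} - \frac{3}{5 b_1^2} = \frac{5b_1 - 3}{5 b_1^2},$$
which is negative for $b_1 < 3/5$ and positive for $b_1 > 3/5$. Hence $F$ is concave on $(1/5, 3/5)$ and convex on $(3/5, 1)$, with exactly one inflection point. I would then treat the two pieces separately. On the convex piece I would bound $F$ below by its tangent at $b_1 = 1$, obtaining $F(b_1) \ge F'(1)(b_1 - 1)$; since $F'(1) = 8/5 - \ln 5 < 0$ and $b_1 - 1 < 0$, this lower bound is strictly positive. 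On the concave piece I would bound $F$ below by the chord joining $(1/5, 0)$ and $(3/5, F(3/5))$, so that positivity follows once the single value $F(3/5) = (2/5)\ln 5 + (6/5)\ln(3/5)$ is shown to be positive. Together these two bounds give $F > 0$ on all of $(1/5, 1)$.

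I expect the main obstacle to be that both required sign checks are quantitatively tight, so they must be argued rather than eyeballed. For $F'(1) < 0$ I would reduce to $e^{8/5} < 5$ (equivalently $e^8 < 5^5 = 3125$), and for $F(3/5) > 0$ I would rearrange to $\ln 5 > 3\ln(5/3) = \ln(125/27)$, which holds because $5 = 135/27 > 125/27$. These elementary rational comparisons make the concave/convex splitting rigorous and establish $F > 0$, i.e.\ $\rho < \rho_{\text{datadep}}$ for $1/5 = b_2 < b_1 < 1$, as claimed.
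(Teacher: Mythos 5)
Your proof is correct, but it takes a genuinely different route from the paper's. The paper works directly with $f(b_1)=\rho-\rho_{\text{datadep}}$ on $[1/5,1]$: it notes $f(1/5)=f(1)=0$, observes that clearing denominators in $f'(b_1)=0$ yields a quadratic with exactly one root in the interval, and imports the single interior evaluation $f(1/\sqrt{5})<0$ from Lemma~\ref{lem:fixedrho}; a Rolle-type zero-counting argument then forces $f<0$ on the whole interval. You instead clear denominators at the outset, reducing the claim to positivity of $F(b_1)=(1-b_1)\ln 5+(3/5+b_1)\ln b_1$, locate the unique inflection point at $b_1=3/5$ via the clean second derivative $F''(b_1)=(5b_1-3)/(5b_1^2)$, and conclude with a tangent bound at $b_1=1$ on the convex piece and a chord bound on the concave piece. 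Your computations check out, and the two sign verifications you flag are indeed the tight spots: $F'(1)=8/5-\ln 5<0$ reduces to $e^8<3125$ (which does require a justified rational bound such as $e<2.71875$, giving $e^8<2990$), and $F(3/5)=(2/5)\ln(135/125)>0$ is immediate; note also that your tangent bound already yields $F(3/5)\geq (2/5)(\ln 5-8/5)>0$, so the separate chord-endpoint check is redundant though harmless. What your route buys is self-containedness and explicitness: you avoid invoking Lemma~\ref{lem:fixedrho}, and the paper's implicit root-location claim for the quadratic (whose printed derivative moreover carries a sign typo, $+\frac{1}{\ln(5)b_1}$ where $-\frac{1}{\ln(5)b_1}$ is meant) is replaced by two checkable rational comparisons. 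What the paper's route buys is economy: it reuses an already-proven lemma for the interior evaluation and needs no numerical estimates of its own. Both are elementary single-variable sign analyses, so neither is more general, but yours would stand alone if Lemma~\ref{lem:fixedrho} were removed.
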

\begin{proof}
For fixed $b_2 = 1/5$ consider $f(b_1) = \rho - \rho_{\text{datadep}}$ as a function of $b_1$ in the interval $[1/5, 1]$.
We want to show that $f(b_1) < 0$ for $b_1 \in (1/5;1)$. 
In the endpoints the function takes the value $0$. 
Between the endpoints we find that $f'(b_1) = \frac{1}{\ln(5)b_1} + \frac{8/5}{(3/5 + b_1)^2}$ and that $f'(b_1) = 0$ is a quadratic form with only one solution $b_{1}^{*}$ in $[1/5;1]$.
By Lemma \ref{lem:fixedrho} we know that that for $b_2 = 1/5$ and $b_1 = 1/\sqrt{5}$ it holds that $f(b_1) < 0$.
Since $f(1/5) = f(1) = 0$, $f'(b_1) = 0$ only in a single point in $[1/5;1]$, and $f(1/\sqrt{5}) < 0$ we can conclude that the lemma holds.
\end{proof}

\begin{corollary}
For every choice of $b_1, b_2$ satisfying $0 < b_2 \leq 1/5$ and $b_2 < b_1 < 1$ we have that $\rho < \rho_{\text{datadep}}$.
\end{corollary}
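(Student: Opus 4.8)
The plan is to combine the three preceding lemmas via a monotonicity argument along the level curves of $\rho$. First I would note that for any valid pair $0 < b_2 < b_1 < 1$ we have $\rho = \log(b_1)/\log(b_2) \in (0;1)$, since both logarithms are negative and $|\log b_2| > |\log b_1|$. Thus every point of the region lies on exactly one level curve $\{(b_1, b_1^{1/\rho})\}$ for a fixed $\rho \in (0;1)$, and along this curve the \textsc{Chosen Path} exponent $\rho$ stays constant by construction, so it suffices to control $\rho_{\text{datadep}}$ along the curve.

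Next I would exploit Lemma \ref{lem:partial}, which states that along such a level curve (with $b_2 = b_1^{1/\rho}$) the quantity $\rho_{\text{datadep}}$ is decreasing in $b_1$. Restricting to the portion of the curve lying in the target region $b_2 \leq 1/5$, the constraint $b_1^{1/\rho} \leq 1/5$ becomes $b_1 \leq (1/5)^{\rho}$, so the largest admissible value of $b_1$ is attained precisely at the boundary point $(b_1, b_2) = ((1/5)^{\rho}, 1/5)$. One checks this is a genuine point of the parameter region: since $0 < \rho < 1$ we have $1/5 < (1/5)^{\rho} < 1$, so $b_2 < b_1 < 1$ holds there. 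Because $\rho_{\text{datadep}}$ decreases as $b_1$ grows, it is minimized over the admissible portion of the level curve exactly at this boundary point.

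Finally I would invoke Lemma \ref{lem:onefifth}, which gives $\rho < \rho_{\text{datadep}}$ for every point with $b_2 = 1/5$ and $1/5 < b_1 < 1$, and in particular at $((1/5)^{\rho}, 1/5)$. Chaining the inequalities: any point $(b_1, b_2)$ of the region with $b_2 < 1/5$ has $b_1 < (1/5)^{\rho}$ on its level curve, so $\rho_{\text{datadep}}$ there is at least as large as at the boundary point, whence $\rho < \rho_{\text{datadep}}$; the boundary case $b_2 = 1/5$ is covered directly by Lemma \ref{lem:onefifth}. This establishes the corollary across the whole region $0 < b_2 \leq 1/5$, $b_2 < b_1 < 1$.

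I do not expect a serious obstacle, since all the analytic work is already discharged in the three lemmas; the only care needed is in tracking the direction of monotonicity in Lemma \ref{lem:partial} and in confirming that every level curve actually meets the boundary $b_2 = 1/5$ at a point where Lemma \ref{lem:onefifth} applies. Lemma \ref{lem:fixedrho}, which treats the single slice $\rho = 1/2$, is subsumed by this argument but serves as a useful sanity check that the boundary inequality is strict rather than an equality.
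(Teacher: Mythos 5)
Your proof is correct and takes essentially the same route as the paper: both arguments fix the level curve $b_2 = b_1^{1/\rho}$, apply Lemma~\ref{lem:onefifth} at its intersection with the boundary $b_2 = 1/5$, and then use the monotonicity of $\rho_{\text{datadep}}$ along that curve from Lemma~\ref{lem:partial} to pass to any point with $b_2 < 1/5$ while $\rho$ remains constant. Your explicit verification that the boundary point $\left((1/5)^{\rho}, 1/5\right)$ lies in the valid parameter region is a detail the paper leaves implicit, but otherwise the two proofs coincide.
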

\begin{proof}
If $b_2 = 1/5$ the property holds by Lemma \ref{lem:onefifth}.
If $b_2 < 1/5$ we define new variables $\hat{b}_2, \hat{b}_2$, setting $\hat{b}_1 = \hat{b}_{1}^{\rho(b_1, b_2)}$ and initially consider $\hat{b}_{2} = 1/5$.
In this setting we again have that $\rho(\hat{b}_{1}, \hat{b}_{2}) < \rho_{\text{datadep}}(\hat{b}_{1},\hat{b}_{2})$.
According to Lemma \ref{lem:partial} it holds that $\rho_{\text{datadep}}$ is decreasing in $b_2$ for fixed $\rho$.
Therefore, as $\hat{b}_{2}$ decreases to $\hat{b}_{2} = b_2$ where $\hat{b}_{1} = b_1$ we have that $\rho(\hat{b}_{1}, \hat{b}_{2}) = \rho$ remains constant while $\rho_{\text{datadep}}$ increases. 
Since it held that $\rho < \rho_{\text{datadep}}$ at the initial values of $\hat{b}_{1}, \hat{b}_{2}$ it must also hold for $b_1, b_2$.
\end{proof}

\para{Numerical Comparison of MinHash and Data-dep.~LSH.}
Comparing $\rho_{\text{minhash}}$ to $\rho_{\text{datadep}}$ we can verify numerically that even for $b_2$ fixed as low as $b_2 = 1/23$ 
we can find values of $b_1$ (for example $b_1 = 0.995$ such that $\rho_{\text{minhash}} > \rho_{\text{datadep}}$.

\bibliographystyle{ACM-Reference-Format}
\bibliography{sparse}
\end{document}